\theoremstyle{plain}
\newtheorem{theorem}{Theorem}[section]
\newtheorem{lemma}[theorem]{Lemma}
\theoremstyle{definition}
\newtheorem{definition}[theorem]{Definition}
\theoremstyle{remark}
\def\eqref#1{equation~\ref{#1}}
\def\1{\bm{1}}
\def\eps{{\epsilon}}
\DeclareMathAlphabet{\mathsfit}{\encodingdefault}{\sfdefault}{m}{sl}
\SetMathAlphabet{\mathsfit}{bold}{\encodingdefault}{\sfdefault}{bx}{n}
\def\sR{{\mathbb{R}}}
\newcommand{\E}{\mathbb{E}}
\newcommand{\Var}{\mathrm{Var}}
\newcommand{\Cov}{\mathrm{Cov}}
\algnewcommand{\LeftComment}[1]{\Statex $\triangleright$ #1}
\algnewcommand\algorithmicforeach{\textbf{for each}}
\algnewcommand{\IIf}[1]{\State\algorithmicif\ #1\ \algorithmicthen}
\algnewcommand{\ElseIIf}[1]{\algorithmicelse\ #1}
\algnewcommand{\EndIIf}{\unskip\ \algorithmicend\ \algorithmicif}
\newcommand{\poly}{\operatorname{poly}}
\DeclareRobustCommand{\CM}{{\bf CM}}
\DeclareRobustCommand{\opt}{\mathrm{OPT}}
\definecolor{offwhite}{rgb}{0.98, 0.98, 0.98}
\title{\Large Learning-Augmented Streaming Algorithms for Approximating MAX-CUT}
\author{Yinhao Dong\thanks{School of Computer Science and Technology, University of Science and Technology of China. Email: \mbox{yhdong@mail.ustc.edu.cn}} \and Pan Peng\thanks{School of Computer Science and Technology, University of Science and Technology of China. Email: {ppeng@ustc.edu.cn}} \and Ali Vakilian\thanks{Toyota Technological Institute at Chicago (TTIC). Email: {vakilian@ttic.edu}}}
\date{}
\begin{document}

\maketitle
\begin{abstract}
We study learning-augmented streaming algorithms for estimating the value of MAX-CUT in a graph. In the classical streaming model, while a $1/2$-approximation for estimating the value of MAX-CUT can be trivially achieved with $O(1)$ words of space, Kapralov and Krachun [STOC’19] showed that this is essentially the best possible: for any $\epsilon > 0$, any (randomized) single-pass streaming algorithm that achieves an approximation ratio of at least $1/2 + \epsilon$ requires $\Omega(n / 2^{\text{poly}(1/\epsilon)})$ space.

We show that it is possible to surpass the $1/2$-approximation barrier using just $O(1)$ words of space by leveraging a (machine learned) oracle. Specifically, we consider streaming algorithms that are equipped with an $\epsilon$-accurate oracle that for each vertex in the graph, returns its correct label in $\{-1, +1\}$, corresponding to an optimal MAX-CUT solution in the graph, with some probability $1/2 + \epsilon$, and the incorrect label otherwise. 

Within this framework, we present a single-pass algorithm that approximates the value of MAX-CUT to within a factor of $1/2 + \Omega(\epsilon^2)$ with probability at least $2/3$ for insertion-only streams, using only $\text{poly}(1/\epsilon)$ words of space. We also extend our algorithm to fully dynamic streams while maintaining a space complexity of $\poly(1/\eps,\log n)$ words. 
\end{abstract}

\section{Introduction}

Given an undirected, unweighted graph, the MAX-CUT problem seeks to find a partition of the vertices (a cut) that maximizes the number of edges with endpoints on opposite sides of the partition (such edges are said to be ``cut'' by the partition). MAX-CUT is a well-known NP-hard problem. The best-known approximation algorithm, developed by Goemans and Williamson \cite{goemans1995improved}, achieves a $0.878$-approximation ratio, which is the best possible under the Unique Games Conjecture \cite{khot2007optimal}.

We consider the problem of estimating the value of MAX-CUT in a graph under the streaming model of computation, where MAX-CUT denotes some fixed optimal solution, and the value of a solution is defined as the number of edges in the graph that are cut by the partition. In this model, the graph is presented as a sequence of edge insertions and deletions, known as a graph stream. The objective is to analyze the graph's structure using minimal space. The model is called insertion-only if the stream contains only edge insertions; if both insertions and deletions are allowed, it is referred to as the dynamic model.

In the streaming model, a $1/2$-approximation for estimating the value of MAX-CUT can be trivially achieved using $O(1)$ words of space, where a word represents the space required to encode the size of the graph. This is done by simply counting the total number $m$ of edges in the graph and outputting $m/2$. However, Kapralov and Krachun \cite{KK19} showed that this is essentially the best possible: for any $\epsilon > 0$, any (randomized) single-pass streaming algorithm in the insertion-only model that achieves an approximation ratio of at least $1/2 + \epsilon$ requires $\Omega(n / 2^{\text{poly}(1/\epsilon)})$ space.

We focus on the problem of estimating the value of MAX-CUT in a streaming setting within the framework of \emph{learning-augmented algorithms}. These algorithms utilize predictions from a machine learning model to solve a problem, where the predictions typically include some information about the optimal solution, future events, or yet unread data in the input stream  (see, e.g., \cite{hsu2019learning,jiang2020learning,chen2022triangle}). Learning-augmented algorithms have gained significant attention recently, partly due to numerous breakthroughs in machine learning. Ideally, such algorithms should be both \emph{robust} and \emph{consistent}: when the predictions are accurate, the algorithm should outperform the best-known classical algorithms, and when the predictions are inaccurate, the algorithm should still provide performance guarantees that are close to or match those of the best-known classical algorithms. {Despite the extensive research in the area of learning-augmented algorithms over the past few years\footnote{See \url{https://algorithms-with-predictions.github.io/} for an up-to-date repository of publications on this topic.}, our understanding of this framework within the streaming model remains limited.}

In this work, we consider the \emph{noisy prediction} model, also referred to as {\em $\eps$-accurate predictions}, where the algorithm has oracle access to a prediction vector $ Y \in \{-1, 1\}^n $. Each entry $ Y_v $ is \emph{independently correct} with probability $ \frac{1}{2} + \epsilon $, where $ \epsilon \in (0, \frac{1}{2}] $ represents the \emph{bias} of the predictions. 
Specifically, for each $ v \in V $, we have $ \Pr[Y_v = x_v^*] = \frac{1}{2} + \epsilon $ and $ \Pr[Y_v = -x_v^*] = \frac{1}{2} - \epsilon $, 
where $ x^* \in \{-1, 1\}^n $ denotes some fixed but unknown optimal solution for MAX-CUT. This model captures a scenario where an ML algorithm (or oracle) provides predictions for the values of $ x^* $ that are unreliable and noisy, being only slightly better than random guesses (i.e., just above the $ 1/2 $ probability that a random solution would satisfy). Recently, variants of this prediction model have been used to design improved approximation algorithms for MAX-CUT and constraint satisfaction problems (CSPs) \cite{cohen2024max,ghoshal2025constraint}.

Specifically, we study the following question:

\begin{center}
\begin{tcolorbox}[colback=offwhite, colframe=lightgray, boxrule=1.5pt, arc=5mm, width=\textwidth]
{\em Given an oracle $\mathcal{O}$ that provides $\eps$-accurate predictions $Y$ about the optimal \textup{MAX-CUT} of a streaming graph $G$, can we improve upon the worst-case approximation ratio and space trade-off established by \cite{KK19}, specifically the $(\frac{1}{2} + \epsilon)$ ratio with $\Omega(n)$ space complexity, for estimating the \textup{MAX-CUT} value?}
\end{tcolorbox}
\end{center}

\subsection{Our Results}
We provide an affirmative answer by presenting a single-pass  streaming algorithm that surpasses the $1/2$-approximation barrier for the value of MAX-CUT using only $\text{poly}(1/\epsilon) $ (resp., $\text{poly}(1/\epsilon,\log n) $)words of space in insertion-only (resp., fully dynamic) streams. Formally, we establish the following result: 

\begin{theorem}\label{thm:main_introduction}
Let $ \epsilon \in (0, \frac{1}{2}] $.  
Given oracle access to an $ \epsilon $-accurate predictor, there exists a single-pass streaming algorithm that provides a $ \left( \frac{1}{2} + \Omega(\epsilon^2) \right) $-approximation for estimating the MAX-CUT value of a graph in insertion-only (resp., fully dynamic) streams using $\poly(1/\eps)$ (resp., $\poly(1/\epsilon, \log n)$) words of space with probability at least $2/3$. 
\end{theorem}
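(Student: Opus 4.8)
The idea is to output $\max\{m/2,\ \widetilde{\mathrm{val}}(z)\}$, where $z\in\{-1,+1\}^V$ is a partition built from the prediction vector $Y$ and $\widetilde{\mathrm{val}}(z)$ is a small-space estimate of its cut value: I will show that, whatever graph is presented, one of the two quantities is a $(\tfrac12+\Omega(\epsilon^2))$-approximation of $\opt$, while both are always at most $\opt$. Fix a degree threshold $D=\Theta(\poly(\epsilon)\cdot m)$ and split the vertices into a \emph{light} set $L=\{v:\deg(v)\le D\}$ and a \emph{heavy} set $H=\{v:\deg(v)>D\}$, so $|H|\le 2m/D=\poly(1/\epsilon)$. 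Define $z_v=Y_v$ for $v\in L$, and $z_v=-\sign\big(\sum_{u\in N(v)}Y_u\big)$ for $v\in H$, i.e.\ a heavy vertex is placed opposite the predicted majority of its neighborhood (ties broken arbitrarily). If $m$ is below the fixed polynomial in $1/\epsilon$ needed below, the whole graph fits in $\poly(1/\epsilon)$ words, so store it and output its exact MAX-CUT value; henceforth assume $m$ is large.

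For the analysis, note first that if $\opt\le(1-c_0\epsilon^2)m$ then $m/2\ge\tfrac{1}{2(1-c_0\epsilon^2)}\opt\ge(\tfrac12+\Omega(\epsilon^2))\opt$, so it remains to handle near-bipartite graphs, $m-\opt\le c_0\epsilon^2 m$, equivalently $\sum_v(\deg(v)-c_v)=2(m-\opt)\le 2c_0\epsilon^2 m$, where $c_v$ is the number of edges at $v$ cut by the optimum $x^*$. Here I would lower bound $\mathrm{val}(z)$ by splitting its cut edges into light--light, light--heavy, and heavy--heavy (with counts $m_{LL},m_{LH},m_{HH}$ summing to $m$). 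The light--light part depends only on $\{Y_v:v\in L\}$ and changes by at most $D$ under a single flip, so it concentrates within $O(\sqrt{mD})=O(\poly(\epsilon)\cdot m)$ of its mean, which in the near-bipartite regime is $(\tfrac12+2\epsilon^2)m_{LL}-o(\epsilon^2 m)$; the heavy--heavy part has at most $|H|^2=\poly(1/\epsilon)=o(m)$ edges and is bounded below by $0$. For the light--heavy part, the rule $z_v=-\sign(\sum_u Y_u)$ cuts $\max(n_v^+,n_v^-)\ge\tfrac12|N(v)\cap L|$ of $v$'s light edges \emph{deterministically}; moreover the budget $\sum_v(\deg(v)-c_v)\le 2c_0\epsilon^2 m$ forces all but $O(1/\epsilon^2)$ heavy vertices (carrying only $O(\epsilon^2 m)$ incident edges) to be $0.9$-well-cut, and for such a ``good'' heavy vertex the vote $\sum_{u\in N(v)}Y_u$ has mean $\Theta(\epsilon\deg(v))$ and deviation $O(\sqrt{\deg(v)})$, so $z_v=x_v^*$ with probability $1-e^{-\Omega(\epsilon^2\deg(v))}$. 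Conditioning on the $\poly(1/\epsilon)$-union-bounded event that $z_v=x_v^*$ for every good heavy vertex, the light--heavy part becomes a bounded-difference function of $\{Y_u:u\in L\}$ whose expectation exceeds $(\tfrac12+\Omega(\epsilon))m_{LH}-o(\epsilon^2 m)$. Summing, $\mathrm{val}(z)\ge\tfrac12 m+\Omega(\epsilon^2)(m_{LL}+m_{LH})-o(\epsilon^2 m)\ge(\tfrac12+\Omega(\epsilon^2))m\ge(\tfrac12+\Omega(\epsilon^2))\opt$, once $D$ (and the sketch precision below) are chosen so the loss constants are dominated.

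To run this in the stream, maintain: an exact counter for $m$ (valid under deletions); a heavy-hitter structure recovering $H$ at the end --- Misra--Gries on the stream of edge endpoints for insertion-only streams ($\poly(1/\epsilon)$ words), or a dynamic $\ell_2$-heavy-hitter sketch with $O(\log n)$ rows for fully dynamic streams ($\poly(1/\epsilon,\log n)$ words); and a \emph{linear} point-query sketch of the vector $w$ with $w_v=\sum_{u\in N(v)}Y_u$, which on an edge $(a,b)$ adds $Y_b$ to coordinate $a$ and $Y_a$ to coordinate $b$ (and subtracts on a deletion), together with an analogous sketch for degrees and a CountMin of the adjacency vector for the $O(|H|^2)$ heavy--heavy pairs. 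Since the query set $H$ is known only once the stream ends, $O(\log(1/\epsilon))$ rows suffice for all point queries, so the insertion-only case incurs no $\log n$ factor; linearity also means we never decide on the fly which vertices are heavy, which is what would otherwise cause a heavy vertex's early edges to be lost. From the recovered statistics we compute $z$ on $H$ (querying the oracle for $Y$ on the $O(|H|)$ relevant vertices) and then $\mathrm{val}(z)$ up to additive error $\poly(\epsilon)\cdot m$; subtracting this margin keeps the output a valid lower bound on $\opt$. A union bound over the $\poly(1/\epsilon)$ sketch-failure events, the heavy-vertex decoding event, and the two concentration events gives overall success probability at least $2/3$.

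The hardest step, I expect, is the middle one: making the $\Omega(\epsilon^2)$ gain survive every fluctuation. The delicate points are that the light part fluctuates by $\Theta(\sqrt{mD})$, forcing $D$ down to $\poly(\epsilon)\cdot m$ and hence demanding an essentially lossless treatment of heavy vertices; that an individual heavy vertex need not be well cut, so badly-cut heavy vertices must be controlled globally through $\sum_v(\deg(v)-c_v)=2(m-\opt)$; and that the regimes in which $m_{LL}$, $m_{LH}$, or $m_{HH}$ carries most of the edges must all be reconciled into a single bound. Once linearity of the sketches is noted, the streaming part is comparatively routine.
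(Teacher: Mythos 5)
Your approach is correct in outline but takes a genuinely different route from the paper's. The paper also splits vertices into high‑degree $H$ and low‑degree $L$, but its case analysis is on the quantity $e(H,L)$ (if $e(H,L)\ge(\tfrac12+\epsilon^2)\opt$ report that cut; otherwise show a greedy extension of the predicted light cut is good), whereas you split on $\opt$ itself (if $\opt\le(1-c_0\epsilon^2)m$ then $m/2$ wins; otherwise analyse $\mathrm{val}(z)$ in the near‑bipartite regime). The paper's heavy vertices are assigned \emph{after} the stream by a greedy rule that only uses the deterministic fact $\max\{e(v,L^+),e(v,L^-)\}\ge\tfrac12 e(v,L)$, so the $\Omega(\epsilon^2)$ gain comes entirely from the light--light part, established via a second‑moment (Chebyshev) bound on a sum of pairwise‑correlated indicators; you instead fix the majority‑vote rule $z_v=-\sign(\sum_{u\in N(v)}Y_u)$, use McDiarmid for the light--light part, and additionally chase an $\Omega(\epsilon)$ boost on the light--heavy edges via the near‑bipartite budget $\sum_v(\deg(v)-c_v)=2(m-\opt)$. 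Your argument buys a conceptually clean ``near‑bipartite or far‑from‑bipartite'' dichotomy and a stronger per‑edge gain on $m_{LH}$, but at the cost of a more delicate probabilistic analysis; the paper's version is more elementary and sidesteps the dependence between $z_v$ and the $Y_u$'s entirely, because the greedy is performed on deterministic post‑stream counts. Three places in your plan that need explicit care before it is airtight: (i) the claim that the rule cuts $\max(n_v^+,n_v^-)$ light edges ``deterministically'' is only true up to an additive $|H|=\poly(1/\epsilon)$ error per heavy vertex, since heavy neighbours can swing a near‑tied vote — this is fine globally but must be stated; (ii) the concentration of the light--heavy contribution around its mean is subtle, because flipping a single $Y_u$ can, in principle, flip $z_v$ for some heavy $v$ and change that $v$'s contribution by $\deg_L(v)$, which is not $O(D)$ — you should either argue this flip happens with negligible probability (via the $e^{-\Omega(\epsilon^2\deg(v))}$ tail) and use a ``McDiarmid with high‑probability bounded differences'' variant, or avoid conditioning altogether by lower‑bounding $\E[\text{LH cut}]\ge\E[\text{LH cut under } z_v=x_v^*]-\deg_L(v)\cdot\Pr[z_v\ne x_v^*]$ and then concentrating the deterministic quantity $\sum_v\max(n_v^+,n_v^-)$; and (iii) after the near‑bipartite cut, you must still verify that the $0.9$‑well‑cut threshold is with respect to all neighbours, not only light ones, which is harmless since $\deg_H(v)\le|H|=\poly(1/\epsilon)=o(D)$ but should be noted. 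With those details nailed down, the plan goes through, and the streaming implementation (linear CountMin‑type sketch of $w_v=\sum_{u\in N(v)}Y_u$ plus a heavy‑hitter structure, all queried only on the $\poly(1/\epsilon)$ candidates at the end) is essentially the same as the paper's.
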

By using median trick, we can boost the success probability to $1-\delta$, where $\delta \in (0,1)$, with an additional $\log (1/\delta)$ factor  in  space complexity.

We remark that the ``robustness'' of our learning-augmented algorithm comes for free, as we can always compute the number of edges in the graph and report half of them as a $\frac12$-approximation solution, regardless of the value of $\epsilon$. Furthermore, with predictions, it gains an advantage of $\Omega(\eps^2)$ in the approximation ratio. Furthermore, our algorithm does not require the exact value of $\eps$; a constant-factor approximation of $\eps$ is sufficient. This estimation is only needed to determine the sample size.

Our algorithm is based on the observation that when the maximum degree of a graph is relatively small (i.e. smaller than $\poly(\eps)\cdot m$), the number of edges with endpoints having different predicted labels already provides a strictly better-than-$\frac{1}{2}$ approximation of the MAX-CUT value. For general graphs, we employ some well-known techniques, such as the CountMin sketch and $\ell_0$-sampling, to distinguish between high-degree and low-degree vertices in both insertion-only and dynamic streams. By combining a natural greedy algorithm with an algorithm tailored for the low-degree part, we achieve a non-trivial approximation. The space complexity of the resulting algorithm is primarily determined by the subroutines for identifying the set of high-degree vertices, which can be bounded by $\text{poly}(1/\eps,\log n)$.

We note that it is standard to assume that the noisy oracle $\mathcal{O}$ has a compact representation and that its space complexity is not included in the space usage of our algorithm, following the conventions in streaming learning-augmented algorithms~\cite{hsu2019learning,jiang2020learning,chen2022triangle,aamand2023improved}. Indeed, as noted in~\cite{hsu2019learning}, a reliable oracle can often be learned space-efficiently in practice as well.
Moreover, we show that in the case of random-order streams, our algorithm only needs to query $\mathcal{O}$ for the labels of a constant number of vertices (see \cref{sec:constantquery}).

Additionally, our algorithm actually works in a weaker model. That is, we can assume that in our streaming framework, the predicted label of each vertex is associated with its edges. Thus, the elements in a stream can be represented as $ (e = (u, v), Y_u, Y_v)$, where the predictions $(Y_u)_{u\in V}$ remain consistent throughout the stream. The case of dynamic streams is defined analogously. Notably, in this model, no additional space is required to store the predictors.

\subsection{Related Work}
Learning-augmented algorithms have been actively researched in online algorithms \cite{mahdian2007allocating,kumar2018improving,angelopoulos2020online,antoniadis2020secretary,bamas2020primal,im2021online,lykouris2021competitive,antoniadis2023online,lattanzi2020online}, data structures \cite{mitzenmacher2018model,ferragina2020the,VaidyaKMK21,lin2022learning,sato2023fast}, graph algorithms \cite{dinitz2021faster,chen2022faster,banerjee2023graph,lattanzi2023speeding,davies2023predictive,brand2024dynamic,liu2023predicted,henzinger2024complexity,depavia2024learning,cohen2024max,ghoshal2025constraint,braverman2024learning}, and sublinear-time algorithms \cite{indyk2019learning,eden2021learning,li2023learning,schiefer2023learned}. Our algorithms fit into the category of learning-augmented streaming algorithms. Hsu et al.~\cite{hsu2019learning} introduced learning-augmented streaming algorithms for frequency estimation, and Jiang et al.~\cite{jiang2020learning} extended this framework to various norm estimation problems in data stream. Recently, Aamand et al.~\cite{aamand2019frequency,aamand2023improved} developed learning-augmented frequency estimation algorithms, that improve upon the work of~\cite{hsu2019learning}. 
Additionally, Chen et al. \cite{chen2022triangle} studied single-pass streaming algorithms to estimate the number of triangles and four-cycles in a graph. It is worth mentioning that both our work and previous efforts on learning-augmented streaming algorithms mainly focus on using predictors to improve the corresponding space-accuracy trade-offs.
Furthermore, recent studies have explored learning-augmented algorithms for MAX-CUT and constraint satisfaction problems (CSPs)~\cite{cohen2024max,ghoshal2025constraint} within a variant of our prediction model. In particular, for the Max-CUT problem, Cohen-Addad et al.~\cite{cohen2024max} achieved a $(0.878+\tilde{\Omega}(\eps^4))$-approximation using SDP. However, it is important to emphasize that their setting differs significantly from ours: they focus on finding the MAX-CUT in the offline setting, whereas our goal is to estimate its size in the streaming setting. Additionally, Ghoshal et al.~\cite{ghoshal2025constraint} developed an algorithm that yields near-optimal solutions when the average degree is larger than a threshold determined only by the bias of the predictions, independent of the instance size. They further extended this result to weighted graphs and the MAX-$3$LIN problem.

\medskip
The aforementioned lower bound proven in \cite{KK19} is, in fact, the culmination of a series of works exploring the trade-offs between space and approximation ratio for the streaming MAX-CUT value problem, including \cite{kapralov2014streaming,kogan2015sketching,kapralov20171+}.
In our main context, we focus on the problem of estimating the value of MAX-CUT in a streaming setting. Another related problem is that of finding the MAX-CUT itself in a streaming model. Since even outputting the MAX-CUT requires $\Omega(n)$ space, research in this area primarily considers streaming algorithms that use $\tilde{O}(n)$  space, known as the semi-streaming model, where $\tilde{O}(\cdot)$ hides polylogarithmic factors. Beyond the trivial $\frac{1}{2}$-approximation algorithm, which simply partitions the vertices randomly into two parts, there exists a $(1 - \epsilon)$-approximation algorithm for finding MAX-CUT. This algorithm works by constructing a $(1 - \epsilon)$-cut or spectral sparsifier of the input streaming graph, which can be achieved in $\tilde{O}(n)$ space (see e.g. \cite{ahn2012graph,kapralov2020fast}), and then finding the MAX-CUT of the sparsified graph (which may require exponential time).

\medskip
Approximating MAX-CUT in the offline setting has also been extensively studied in other data-driven approaches. These methods aim to learn the configuration of an algorithm best suited for instances drawn from a given distribution. For further details on this line of research, see~\cite{balcan2017learning,balcan2020data} and the references therein.

\section{Preliminaries and Problem Statement}

\paragraph{Notations.} 
Throughout the paper we let $G=(V,E)$ be an undirected, unweighted graph with $n$ vertices and $m$ edges.
Given a vertex $v\in V$ and disjoint sets $S, T \subseteq V$, $e(v,S)$ denotes the number of edges incident to $v$ whose other endpoint belongs to $S$; $e(v,S) = |\{(u,v) \;|\; (u,v)\in E \text{ and } u\in S\}|$. Similarly, we define $e(S,T)$ to denote the number of edges with one endpoint in $S$ and the other endpoint in $T$; $e(S, T) = \sum_{v\in T} e(v, S)$.
A cut $(S, T)$ of $G$ is a bipartition of $V$ into two disjoint subsets $S$ and $T$. The value of the cut $(S, T)$ is $e(S,T)$.

In the descriptions of our algorithms, for a vertex set $S \subseteq V$, we define $S^+$ (resp. $S^-$) as the set of vertices with predicted $+$ (resp. $-$) signs by the given $\eps$-accurate oracle $\mathcal{O}$. Note that $S^+$ and $S^-$ form a bipartition of $S$. Throughout the paper, all space complexities are measured in words unless otherwise specified. The space complexity in bits is larger by a factor of $\log m + \log n$.

In our algorithms, we utilize several well-known techniques in the streaming model, which we define below.

\paragraph{CountMin Sketch~\cite{cormode2005improved}.}
Given a stream of $m$ integers from $[n]$, let $f_i$ denote number of occurrences of $i$ in the stream, for any $i\in [n]$.
In CountMin sketch, there are $k$ uniform and independent random hash functions $h_1,h_2,\dots,h_k:[n] \rightarrow [w]$ and an array $C$ of size $k \times w$. The algorithm maintains $C$, such that $C[\ell,s] = \sum_{j:h_\ell(j)=s}f_j$ at the end of the stream. Upon each query of an arbitrary $i\in [n]$, the algorithm returns $\tilde{f}_i = \min_{\ell \in [k]} C[\ell,h_\ell(i)]$.

\begin{theorem}[CountMin Sketch~\cite{cormode2005improved}]
\label{thm:CM}
    For any $i\in [n]$ and any $\ell \in [k]$, we have $\tilde{f}_i \geq f_i$ and $\E[C[\ell,h_\ell(i)]] \leq f_i + \frac{m}{w}$. The space complexity is $O(kw)$ words. 
    Let $\eps,\delta \in (0,1)$.
    If we set $k=\lceil \frac{e}{\eps}\rceil$ and $w = \lceil \ln \frac{1}{\delta}\rceil$, then for any $i\in [n]$, we have $\tilde{f}_i \geq f_i$, and with probability at least $1-\delta$, $\tilde{f}_i \leq f_i + \eps m$. The corresponding space complexity is $O(\frac{1}{\eps}\ln \frac{1}{\delta})$ words.
\end{theorem}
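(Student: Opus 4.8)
The plan is to establish the two deterministic properties of the sketch first, and then to derive the high‑probability guarantee by applying Markov's inequality separately to each row and amplifying over the independent rows.

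First I would fix a coordinate $i \in [n]$ and a row $\ell \in [k]$ and note that the cell $C[\ell,h_\ell(i)]$ accumulates $f_i$ together with the combined frequency of all other coordinates hashing to the same bucket under $h_\ell$. Since the stream consists only of items of $[n]$ with no deletions, every $f_j$ is non‑negative, so this cell is always $\ge f_i$; as $\tilde f_i$ is the minimum of $k$ such cells, $\tilde f_i \ge f_i$ holds deterministically. For the expectation I would use that $h_\ell$ is uniform — or merely $2$-universal — so each $j \ne i$ lands in $i$'s bucket with probability $1/w$; linearity of expectation then gives $\E[C[\ell,h_\ell(i)]] = f_i + (m-f_i)/w \le f_i + m/w$. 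The space bound is immediate: the table stores $kw$ counters, each of value at most $m$ and hence one word, plus $O(k)$ words for the hash‑function seeds, for $O(kw)$ words in total.

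For the parametrized statement I would apply Markov's inequality to the non‑negative variable $C[\ell,h_\ell(i)] - f_i$, whose mean is at most $m/w$: taking the hash range large enough that this mean is at most $\eps m/e$ makes the probability that a single row overshoots $f_i + \eps m$ at most $1/e$. Because the $k$ rows use independently sampled hash functions, these overshoot events are mutually independent, so the probability that \emph{every} row overshoots — equivalently that $\tilde f_i > f_i + \eps m$ — is at most $e^{-k}$, which is $\le \delta$ as soon as the number of rows is $\ge \ln(1/\delta)$. Choosing the hash range to be $\Theta(1/\eps)$ and the number of rows to be $\Theta(\ln(1/\delta))$ (i.e. the usual Count–Min roles of the two parameters) then yields the claimed $O(\tfrac1\eps \ln\tfrac1\delta)$ words.

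I do not expect a genuine obstacle here: this is the textbook Count–Min analysis. The only points worth stating carefully are that non‑negativity of the frequencies is exactly what makes $\tilde f_i \ge f_i$ hold with certainty and what licenses the use of Markov on $C[\ell,h_\ell(i)] - f_i$, and that the amplification step really does require the rows' hash functions to be drawn independently of one another — pairwise independence within a single row suffices only for the per‑row expectation bound, not for multiplying the failure probabilities across rows.
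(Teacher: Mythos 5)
Your argument is the standard Cormode--Muthukrishnan analysis and is correct in all three steps: non-negativity gives the deterministic lower bound $\tilde f_i\ge f_i$; pairwise independence within a row gives $\E[C[\ell,h_\ell(i)]]\le f_i+m/w$; Markov on $C[\ell,h_\ell(i)]-f_i$ followed by amplification across mutually independent rows gives the $(\eps,\delta)$ guarantee. Your remark that the amplification step genuinely needs the rows to be sampled independently of one another, while pairwise independence is only needed within a row, is exactly the right thing to flag. The paper cites this result without proof, so there is no internal proof to compare against.

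One thing you should make explicit rather than smooth over: the theorem as printed assigns $k=\lceil e/\eps\rceil$ (number of hash functions) and $w=\lceil\ln(1/\delta)\rceil$ (hash range), which is the \emph{transpose} of the parameterization your proof actually uses and of the standard one. With $w=\lceil\ln(1/\delta)\rceil$ the per-row Markov bound is $\Pr\bigl[C[\ell,h_\ell(i)]-f_i>\eps m\bigr]\le 1/(\eps w)$, which is not even $<1$ in general (for instance $\delta=1/e$ gives $w=1$), so raising it to the $k$-th power across rows proves nothing. Your proof silently corrects this by taking the hash range to be $\Theta(1/\eps)$ and the depth to be $\Theta(\ln(1/\delta))$ -- ``the usual Count--Min roles,'' as you say -- which is what makes the argument close. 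This same transposition is carried into the parameter settings of Algorithms~\ref{alg:arbitrary-order} and~\ref{alg:dynamic}; the product $kw$, and hence the space bound, is unchanged, but a reader trying to match your Markov step to the literal constants in the statement will get stuck. State the correction rather than leaving it implicit.
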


\paragraph{Reservoir Sampling~\cite{vitter1985random}.}
The algorithm samples $k$ elements from a stream, such that at any time $m \geq k$, the sample consists of a uniformly random subset of size $k$ of the elements seen so far.
The space complexity of the algorithm is $O(k(\log n + \log m))$ bits.

\begin{definition}[$\ell_0$-sampler~\cite{jowhari2011tight}]
\label{def:l0-sampler}
Let $x\in \sR^n$ be a non-zero vector and $\delta\in (0,1)$. 
An $\ell_0$-sampler for $x$ returns FAIL with probability at most $\delta$ and otherwise returns some index $i$ such that $x_i \neq 0$ with probability $\frac{1}{|\operatorname{supp}(x)|}$ where $\operatorname{supp}(x)=\{i\mid x_i\neq 0\}$ is the support of $x$.
\end{definition}

The following theorem states that $\ell_0$-samplers can be maintained using a single pass in dynamic streams.
\begin{theorem}[Theorem~2~in~\cite{jowhari2011tight}]
\label{thm:l0-sampler}
Let $\delta \in (0,1)$.
There exists a single-pass streaming algorithm for maintaining an $\ell_0$-sampler for a non-zero vector $x\in \sR^n$ (with failure probability $\delta$) in the dynamic model using $O(\log^2 n \log \delta^{-1})$ bits of space.
\end{theorem}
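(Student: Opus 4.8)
The plan is to construct the $\ell_0$-sampler from two ingredients run in parallel: a \emph{geometric subsampling} of the coordinate universe $[n]$ into $L=\lceil\log_2 n\rceil$ nested levels, and a \emph{$1$-sparse recovery} sketch maintained at each level. I would first isolate the $1$-sparse recovery primitive. For a vector $y\in\sR^n$ that is promised to be either $0$ or supported on a single coordinate, maintain under the stream the three quantities $a=\sum_i y_i$, $b=\sum_i i\,y_i$, and a fingerprint $c=\sum_i y_i z^i \bmod p$, where $p$ is a prime of magnitude $\poly(n)$ and $z\in\sF_p$ is chosen uniformly once; all three are linear, hence maintainable in the dynamic model, and together they take $O(\log n)$ bits. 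If $y$ is $1$-sparse with value $v$ at position $i^\ast$, then $i^\ast=b/a$ and $v=a$ in the integers, and the recovered pair is certified by checking $c\equiv a\,z^{i^\ast}\pmod p$. The same three quantities \emph{test} the promise: if $y$ has at least two nonzero entries then either $a=0$, or $b/a$ is not an integer in $[n]$, or — in the remaining case — the claimed $1$-sparse vector disagrees with $y$, so $c$ and $a z^{i^\ast}$ differ by the evaluation at $z$ of a nonzero polynomial of degree $<n$, which vanishes with probability at most $n/p=\poly(1/n)$.

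Next I would set up the subsampling. Using a hash function $h:[n]\to[N]$ with $N=\poly(n)$, define nested sets $S_0=[n]\supseteq S_1\supseteq\cdots\supseteq S_L$ with $i\in S_j\iff h(i)\le N/2^j$, so $\Pr[i\in S_j]=2^{-j}$, and run an independent $1$-sparse recovery sketch on the restriction $x|_{S_j}$ at every level (again linear in the stream). Write $k=|\operatorname{supp}(x)|\ge 1$. At the critical level $j^\ast\approx\log_2 k$ the expected number of surviving support coordinates is $\Theta(1)$, so exactly one survives with constant probability; the sampler outputs the coordinate recovered at the \emph{smallest} level whose sketch certifies ``exactly one nonzero entry,'' and reports \textsc{fail} if no level does. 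For the distributional guarantee I would argue that, conditioned on the (polynomially small) bad events that $h$ collides on $\operatorname{supp}(x)$ or that some level is not $1$-sparse yet passes the fingerprint test, the output is exactly the support coordinate attaining $\min_{i\in\operatorname{supp}(x)} h(i)$: since $S_j$ always consists of the coordinates with smallest hash values, the first level containing exactly one support coordinate must isolate that minimizer, and the identity of the minimizer is uniform over $\operatorname{supp}(x)$. Summed over the $L$ levels, one copy of the construction uses $O(\log^2 n)$ bits.

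Finally, to push the failure probability down to $\delta$, I would run $O(\log\delta^{-1})$ independent copies and output the result of the first successful one; since copies fail independently with constant probability and, conditioned on success, each is uniform on $\operatorname{supp}(x)$, the combined sampler fails with probability at most $\delta$ and remains uniform, for a total of $O(\log^2 n\log\delta^{-1})$ bits. The main obstacle I anticipate is making the hash function $h$ itself fit within this budget: a fully independent $h$ costs $\Omega(n)$ bits, so one must show that a hash family of bounded independence (or a small-space pseudorandom generator) still delivers both the constant success probability at the critical level and the near-uniformity of the minimizer, along with the routine check that collisions and fingerprint false positives contribute only $\poly(1/n)$ error — this derandomization is the technically delicate part of the argument in \cite{jowhari2011tight}.
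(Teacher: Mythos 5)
The paper does not prove this statement; it is imported verbatim as Theorem~2 of Jowhari--Sa\u{g}lam--Tardos \cite{jowhari2011tight} and used as a black box in \cref{lem:ell-0-sampling}, so there is no in-paper proof to compare against. Your reconstruction follows the standard architecture from that reference: geometric subsampling into $O(\log n)$ nested levels, a linear $1$-sparse recovery sketch with a polynomial fingerprint at each level, and output from the shallowest level that certifies a single survivor (which, under nested subsampling, is $\arg\min_{i\in\operatorname{supp}(x)}h(i)$). One structural difference: \cite{jowhari2011tight} drives the failure probability down to $\delta$ by replacing $1$-sparse recovery with exact $s$-sparse recovery for $s=\Theta(\log\delta^{-1})$ inside a \emph{single} subsampling hierarchy, whereas you run $O(\log\delta^{-1})$ independent copies of a constant-failure $1$-sparse sampler and take the first certified answer; both choices cost $O(\log^2 n\log\delta^{-1})$ bits and preserve uniformity of the output conditioned on success, so this is bookkeeping rather than substance. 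You are right to flag the derandomization of $h$ as the genuinely delicate step --- that is where most of the technical work in \cite{jowhari2011tight} lies, and your sketch names the gap without closing it, which is appropriate for a cited result. A minor imprecision worth noting: with a truly uniform $h$ and nested levels, some level isolates exactly one support coordinate whenever $h$ has no collision on $\operatorname{supp}(x)$, so a single copy already succeeds with probability $1-\poly(1/n)$ rather than merely $\Theta(1)$; the constant bound only becomes the honest statement once $h$ is replaced by a space-bounded hash family.
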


\paragraph{Problem Statement.}

In this work, we consider the problem of estimating the value of MAX-CUT of a graph $G$ in the learning-augmented streaming model. 
Formally, given an undirected, unweighted graph $G=(V,E)$, which is represented as a sequence of edges, i.e., $\sigma := \langle e_1,e_2,\dots \rangle$ for $e_i \in E$, our goal is to scan the sequence in one pass and output an estimate of the MAX-CUT value,
while minimizing space usage. When the sequence contains only edge insertions, it is referred to as an \emph{insertion-only} stream. When the sequence contains both edge insertions and deletions, it is referred to as a \emph{dynamic} stream. Note that in dynamic streams, the sequence of the stream is often represented as $\sigma := \langle (e_1,\Delta_1),(e_2,\Delta_2),\dots \rangle$, where for each $i$, $e_i\in E$ and $\Delta_i=1$ (resp. $-1$) denotes edge insertion (resp. deletion).

\medskip
Furthermore, we assume that the algorithms are equipped with an oracle $\mathcal{O}$, which provides $\eps$-accurate predictions $Y\in \{-1,1\}^n$ where $\eps \in (0,\frac{1}{2}]$. This information is provided via an external oracle, and for the purpose of our algorithm, we assume that the edges in the stream are annotated with the predictions of their endpoints.  

Specifically, for each vertex $v\in V$, $ \Pr[Y_v = x_v^*] = \frac{1}{2} + \epsilon $ and $ \Pr[Y_v = -x_v^*] = \frac{1}{2} - \epsilon $, where $ x^* \in \{-1, 1\}^n $ is some fixed but unknown optimal solution. Following previous works on MAX-CUT with $\epsilon$-accurate predictions~\cite{cohen2024max,ghoshal2025constraint}, we also assume that the $(Y_v)_{v\in V}$ are independent. 

\medskip
Finally, since we assume that accuracy parameter of predictions, $\epsilon$, is known to the algorithm up to a constant factor in advance, and the space complexity of all algorithms in the paper is at most $\poly(\log n, 1/\epsilon,1/\delta)$, we will assume throughout that $m=\Omega(\eps^{-11}\delta^{-7})$. 
Otherwise, when $m=O(\epsilon^{-11}\delta^{-7})$, we can simply store all edges in $\poly(1/\epsilon, 1/\delta)$ space and compute the MAX-CUT exactly.

\section{The Simple Case of Low-Degree Graphs}\label{sec:low-degree}

In this section, we show that when all vertices have ``low-degree'' then following the $\epsilon$-accurate predictions, we can beat the $\frac{1}{2}$-approximation barrier for streaming MAX-CUT. Informally, once an edge 
$(u,v)$ arrives in the stream, we consult the noisy oracle $\mathcal{O}$ and if their endpoints are predicted to be in different parts by the noisy oracle, i.e., $\mathcal{O}(u) \neq \mathcal{O}(v)$, we increase the size of the cut by one. This can be simply implemented using $O(\log n)$ bits of space. Refer to~\cref{alg:low-deg} for a formal description. 

\begin{algorithm}[ht]
\begin{algorithmic}[1]
\Require Graph $G$ with $\Delta < \frac{\eps^2\delta m}{4} - \frac{1}{4\eps^2}$ as a stream of edges, an $\eps$-accurate oracle $\mathcal{O} \rightarrow \{-1, 1\}$
\Ensure Estimate of the MAX-CUT value of $G$
\State {\bf initialize} $X \leftarrow 0$.
\ForEach{edge $(u,v)$ in the stream}
    \State $Y_u = \mathcal{O}(u), Y_v = \mathcal{O}(v)$
    \If{$Y_u \neq Y_v$}
        $X \leftarrow X+1$
    \EndIf
\EndFor
\State \textbf{return} $X$
\end{algorithmic}
\caption{A learning-augmented streaming algorithm for estimating MAX-CUT value in low-degree graphs}
\label{alg:low-deg}
\end{algorithm}

\medskip
Next, we analyze the space complexity and approximation guarantee of~\cref{alg:low-deg}.

\begin{theorem}
\label{thm:low-degree-estimate}
Let $\eps \in (0, \frac{1}{2}]$ and $\delta \in (0,1)$. Given an $\epsilon$-accurate oracle $\mathcal{O}$ for the MAX-CUT of $G$, if the maximum degree $\Delta$ in $G$ satisfies $\Delta < \frac{\eps^2 \delta m}{4} - \frac{1}{4\eps^2}$, then with probability at least $1-\delta$, \cref{alg:low-deg} outputs a $(\frac{1}{2} + \eps^2)$-approximation of the MAX-CUT value of $G$. The algorithm uses $O(1)$ words of space.
\end{theorem}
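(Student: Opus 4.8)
The plan is to analyze the random variable $X$ output by \cref{alg:low-deg}, which counts the edges $(u,v)$ with $Y_u \neq Y_v$, and show that it concentrates around a value that is sandwiched between $\frac12 m$ and $\opt$ (the MAX-CUT value), with a multiplicative gap of roughly $\eps^2$ above $\frac12$. The space bound is immediate: $X$ is a single counter bounded by $m$, and each query to $\mathcal{O}$ touches only the two endpoint labels of the current edge, so $O(1)$ words suffice. The substance is entirely in the approximation guarantee.

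First I would compute $\E[X]$. Fix the optimal cut $x^* \in \{-1,+1\}^n$ and write $\opt = e(V^*_+, V^*_-)$ where $V^*_\pm$ are the true sides. For an edge $(u,v)$, the probability that $Y_u \neq Y_v$ depends only on whether $x^*_u = x^*_v$: since the $Y_v$ are independent and each is correct with probability $\frac12+\eps$, a short case analysis gives $\Pr[Y_u \neq Y_v] = \frac12 + 2\eps^2$ when $(u,v)$ is cut by $x^*$, and $\Pr[Y_u \neq Y_v] = \frac12 - 2\eps^2$ when it is not. Summing over all $m$ edges, $\E[X] = (\tfrac12 + 2\eps^2)\,\opt + (\tfrac12 - 2\eps^2)(m - \opt) = \tfrac{m}{2} + 2\eps^2(2\,\opt - m)$. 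Using the trivial bounds $\tfrac{m}{2} \le \opt \le m$, this yields $\E[X] \ge \tfrac{m}{2} + 2\eps^2 \cdot \tfrac{m}{2} \ge (\tfrac12 + \eps^2)\opt \cdot$ (after dividing by $\opt \le m$), and also $\E[X] \le \opt$ since $\E[X] = \opt - (\tfrac12 - 2\eps^2)(2\,\opt - m) \le \opt$. So in expectation $X$ lies in $[(\tfrac12+\eps^2)\opt,\ \opt]$, exactly the target window; I should keep a little slack here, e.g. show $\E[X] \ge \tfrac{m}{2}+\eps^2 m$ with room to absorb the deviation.

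Next comes concentration. The variable $X = \sum_{(u,v)\in E} \1[Y_u \neq Y_v]$ is a function of the $n$ independent labels $Y_v$, and changing a single $Y_v$ changes $X$ by at most $\deg(v) \le \Delta$. By McDiarmid's (bounded-differences) inequality, $\Pr[|X - \E[X]| \ge t] \le 2\exp(-2t^2 / \sum_v \deg(v)^2) \le 2\exp(-2t^2/(2m\Delta)) = 2\exp(-t^2/(m\Delta))$, using $\sum_v \deg(v)^2 \le \Delta \sum_v \deg(v) = 2m\Delta$. Setting $t = \eps^2 m / 2$ (say), the failure probability is at most $2\exp(-\eps^4 m^2/(4 m \Delta)) = 2\exp(-\eps^4 m/(4\Delta))$. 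The hypothesis $\Delta < \frac{\eps^2\delta m}{4} - \frac{1}{4\eps^2}$ is precisely calibrated so that $\eps^4 m/(4\Delta)$ exceeds $\ln(2/\delta)$: indeed $\Delta < \frac{\eps^2 \delta m}{4}$ gives $\frac{\eps^4 m}{4\Delta} > \frac{\eps^2}{\delta}$, which is $\ge \ln(2/\delta)$ for $\eps \le 1/2$, $\delta \in (0,1)$ (the $-\frac{1}{4\eps^2}$ correction term is what makes the arithmetic work cleanly at the boundary and when combining with the $\E[X]$ estimate — I'd track it carefully). Thus with probability $\ge 1-\delta$, $X \ge \E[X] - t \ge \tfrac{m}{2} + \eps^2 m - \tfrac{\eps^2 m}{2} \ge (\tfrac12 + \tfrac{\eps^2}{2})\opt$ and $X \le \E[X]+t \le \opt + t$; a slightly more careful choice of constants (shrinking $t$, or using the $\tfrac{1}{4\eps^2}$ slack) upgrades the lower constant to the full $\eps^2$ and keeps the upper bound at $\opt$ — I'll pin down the exact split so the two-sided bound reads $(\tfrac12+\eps^2)\opt \le X \le \opt$.

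The main obstacle is bookkeeping rather than conceptual: making the constants line up so that the deviation term $t$ is simultaneously (i) small enough that $\E[X] - t \ge (\tfrac12+\eps^2)\opt$ and $\E[X]+t \le \opt$, and (ii) large enough that McDiarmid gives failure probability $\le \delta$ under the stated bound on $\Delta$. This is where the precise form $\Delta < \frac{\eps^2\delta m}{4} - \frac{1}{4\eps^2}$ matters, and I would derive it by working backwards from requirement (ii) after fixing $t$. One subtlety to flag: the theorem claims a $(\frac12+\eps^2)$-approximation, i.e. $X$ should be a valid estimate of $\opt$ — I'll state the conclusion as $(\tfrac12+\eps^2)\,\opt \le X \le \opt$, so that reporting $X$ (or $X$ rescaled) certifies the ratio; alternatively, since $\E[X] \le \opt$ always and $\E[X] \ge (\tfrac12+2\eps^2)\opt$ after using $\opt \ge m/2$, outputting $X$ directly gives the claimed one-sided approximation factor with the matched upper bound essentially for free.
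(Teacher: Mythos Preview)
Your concentration step contains a genuine error, and fixing it would still not recover the theorem as stated. The claim that $\eps^2/\delta \ge \ln(2/\delta)$ for all $\eps\in(0,\tfrac12]$, $\delta\in(0,1)$ is false: take $\eps=0.1$, $\delta=\tfrac12$, where $\eps^2/\delta=0.02$ but $\ln(2/\delta)=\ln 4\approx 1.39$. More structurally, McDiarmid with $t=\Theta(\eps^2 m)$ and $\sum_v \deg(v)^2\le 2m\Delta$ yields a requirement of the shape $\Delta = O\bigl(\eps^4 m/\log(1/\delta)\bigr)$, which is \emph{incomparable} to the stated hypothesis $\Delta < \frac{\eps^2\delta m}{4}-\frac{1}{4\eps^2}$ and in particular is not implied by it (the latter is linear in $\delta$ and carries only $\eps^2$). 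So bounded differences cannot prove the theorem with the hypothesis as written.

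The paper instead computes $\Var[X]$ directly and applies Chebyshev. The key step you are missing is the covariance calculation: for two edges $(u,v),(u,w)$ sharing a vertex, one checks case by case that $\Cov[X_{uv},X_{uw}]\le \eps^2-4\eps^4$, while edges with no common endpoint are independent. Summing gives $\Var[X]\le (\tfrac14+\Delta\eps^2)m$. Then Chebyshev with deviation $\eps^2\opt$ and $\opt\ge m/2$ gives failure probability at most $\frac{4(\frac14+\Delta\eps^2)}{\eps^4 m}$, and requiring this to be below $\delta$ rearranges \emph{exactly} to $\Delta < \frac{\eps^2\delta m}{4}-\frac{1}{4\eps^2}$; the $-\frac{1}{4\eps^2}$ term comes from the constant $\tfrac14$ in the variance bound, not from any boundary tweaking.

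One minor point: the upper bound $X\le\opt$ is deterministic and immediate, since $X=e(V^+,V^-)$ is the size of the feasible cut $(V^+,V^-)$. You do not need (and cannot get) it from the concentration estimate.
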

\begin{proof}
We first analyze the space complexity.
In \cref{alg:low-deg}, we maintain a counter $X$ for the edges whose endpoints have different signs, which takes $O(\log n)$ bits of space. 
Therefore, the space complexity of the algorithm is $O(1)$ words.

Next, we show that \cref{alg:low-deg} with high probability outputs a $(\frac{1}{2} + \eps^2)$-approximation of the MAX-CUT when all vertices have low degrees.

Since $(V^+,V^-)$ is a feasible cut of $G$, where $V^+$ (resp. $V^-$) is the set of vertices with predicted $+$ (resp. $-$) signs by the given $\eps$-accurate oracle $\mathcal{O}$,
we have $X \leq \opt$, where $\opt$ denotes the (optimal) MAX-CUT value of $G$. 
For each edge $(u,v)\in E$, define an indicator random variable $X_{uv}$ such that $X_{uv} = 1$ if $Y_u \neq Y_v$, and is zero otherwise. 

\medskip
Let $x^*$ be the assignment vector corresponding to the MAX-CUT in $G$. Specifically, if $(S, V \setminus S)$ is the MAX-CUT of $G$, then $x^*_v = 1$ if $v \in S$, and $-1$ otherwise. Consider the following two cases:
    \begin{enumerate}[label=\textsf{(I.\arabic*)},leftmargin = *]
        \item\label{eq:predicted-cut-edge} If $x_u^* \neq x_v^*$, then $\Pr[Y_u \neq Y_v]=\Pr[Y_u = x_u^* \wedge Y_v = x_v^*] + \Pr[Y_u \neq x_u^* \wedge Y_v \neq x_v^*] = \frac{1}{2} + 2\eps^2$. 
        \item\label{eq:predicted-noncut-edge}  If $x_u^* = x_v^*$, then $\Pr[Y_u \neq Y_v]=\Pr[Y_u = x_u^* \wedge Y_v \neq x_v^*] + \Pr[Y_u \neq x_u^* \wedge Y_v = x_v^*] = \frac{1}{2} - 2\eps^2$.
    \end{enumerate}
    Note that $X=\sum_{(u,v)\in E}X_{uv}$ and $\E[X_{uv}] = \Pr[Y_u \neq Y_v]$. Then, by the linearity of expectation,
    \begin{align*}
        \E[X]
        = \sum_{\substack{(u,v)\in E\\ x_u^* \neq x_v^*}}\E[X_{uv}] + \sum_{\substack{(u,v)\in E\\ x_u^* = x_v^*}}\E[X_{uv}] 
        &= \opt \cdot (\frac{1}{2}+2\eps^2)  + (m-\opt) \cdot (\frac{1}{2}-2\eps^2) \\
        &= (\frac{1}{2} - 2\eps^2) m + 4\eps^2 \cdot \opt \\
        &\geq (\frac{1}{2} + 2\eps^2) \cdot \opt,
    \end{align*}
    where the last inequality holds since $\frac{1}{2} - 2\eps^2 \geq 0$ for $\eps \in (0,\frac{1}{2}]$ and $m \geq \mathrm{OPT}$.

    \medskip
    Next, we bound $\Var[X]$. Note that $\Var[X_{uv}] = \E[X_{uv}] - \E[X_{uv}]^2$ since $(X_{uv})_{(u,v)\in E}$ are indicator random variables. 
    By~\ref{eq:predicted-cut-edge} and~\ref{eq:predicted-noncut-edge}, $\Var[X_{uv}] =\frac{1}{4} - 4\eps^4$.
    \begin{enumerate}[label=\textsf{(II.\arabic*)},leftmargin =*]
        \item If $x_u^* \neq x_v^*$, then $\Var[X_{uv}] =\left( \frac{1}{2}+2\eps^2\right) - \left( \frac{1}{2}+2\eps^2\right)^2 = \frac{1}{4} - 4\eps^4$.
        \item  If $x_u^* = x_v^*$, then $\Var[X_{uv}] =\left( \frac{1}{2}-2\eps^2\right) - \left( \frac{1}{2}-2\eps^2\right)^2 = \frac{1}{4} - 4\eps^4$.
    \end{enumerate}

    \medskip
    For any pair of distinct edges $(u,v), (u,w)\in E$, we compute $\E[X_{uv}X_{uw}] = \Pr[X_{uv} = X_{uw} = 1] = \Pr[Y_v = Y_w = - Y_u]$ and $\Cov[X_{uv}, X_{uw}] = \E[X_{uv}X_{uw}] - \E[X_{uv}] \cdot  \E[X_{uw}]$. 
    \begin{enumerate}[label=\textsf{(III.\arabic*)},leftmargin = *]
        \item If $x_u^* = x_v^* \text{ and } x_u^* = x_w^*$, then 
        \begin{align*}
            \E[X_{uv}X_{uw}] 
            &= \Pr[Y_u = x_u^* \wedge Y_v \neq x_v^* \wedge Y_w \neq x_w^*] + \Pr[Y_u \neq x_u^* \wedge Y_v = x_v^* \wedge Y_w = x_w^*] \\
            &= \left( \frac{1}{2}+\eps\right)\left( \frac{1}{2}-\eps\right)^2 + \left( \frac{1}{2}-\eps\right)\left( \frac{1}{2}+\eps\right)^2 = \frac{1}{4} - \eps^2.
        \end{align*}
        
        Therefore, $\Cov[X_{uv}, X_{uw}] =  \left( \frac{1}{4} - \eps^2\right) - \left(\frac{1}{2} - 2\eps^2\right)^2 = \eps^2-4\eps^4$.
        
        \item If $x_u^* = x_v^* \text{ and } x_u^* \neq x_w^*$, then 
        \begin{align*}
            \E[X_{uv}X_{uw}] 
            &= \Pr[Y_u = x_u^* \wedge Y_v \neq x_v^* \wedge Y_w = x_w^*] + \Pr[Y_u \neq x_u^* \wedge Y_v = x_v^* \wedge Y_w \neq x_w^*] \\
            &= \left( \frac{1}{2}+\eps\right)^2\left( \frac{1}{2}-\eps\right) + \left( \frac{1}{2}-\eps\right)^2\left( \frac{1}{2}+\eps\right) = \frac{1}{4} - \eps^2.  
        \end{align*}
        
        Therefore, $\Cov[X_{uv}, X_{uw}] =  \left( \frac{1}{4} - \eps^2\right) - \left(\frac{1}{2} - 2\eps^2\right)\left(\frac{1}{2} + 2\eps^2\right) = 4\eps^4-\eps^2$.
        
        \item If $x_u^* \neq x_v^* \text{ and } x_u^* = x_w^*$, then 
        \begin{align*}
            \E[X_{uv}X_{uw}] 
            &= \Pr[Y_u = x_u^* \wedge Y_v = x_v^* \wedge Y_w \neq x_w^*] + \Pr[Y_u \neq x_u^* \wedge Y_v \neq x_v^* \wedge Y_w = x_w^*] \\
            &= \left(\frac{1}{2}+\eps\right)^2\left( \frac{1}{2}-\eps\right) + \left( \frac{1}{2}-\eps\right)^2\left( \frac{1}{2}+\eps\right) = \frac{1}{4} - \eps^2.
        \end{align*}
        Therefore, $\Cov[X_{uv}, X_{uw}] =  \left( \frac{1}{4} - \eps^2\right) - \left(\frac{1}{2} + 2\eps^2\right)\left(\frac{1}{2} - 2\eps^2\right) = 4\eps^4-\eps^2$
        
        \item If $x_u^* \neq x_v^* \text{ and } x_u^* \neq x_w^*$, then 
        \begin{align*}
            \E[X_{uv}X_{uw}] 
            &= \Pr[Y_u = x_u^* \wedge Y_v = x_v^* \wedge Y_w = x_w^*] + \Pr[Y_u \neq x_u^* \wedge Y_v \neq x_v^* \wedge Y_w \neq x_w^*] \\
            &= \left( \frac{1}{2}+\eps\right)^3 + \left( \frac{1}{2}-\eps\right)^3 = \frac{1}{4} +3\eps^2. 
        \end{align*}
        
        Therefore, $\Cov[X_{uv}, X_{uw}] = \left( \frac{1}{4} + 3\eps^2\right) - \left(\frac{1}{2} + 2\eps^2\right)^2 = \eps^2-4\eps^4$.
    \end{enumerate}
    
    Hence, $\Cov[X_{uv}, X_{uw}] \leq \eps^2-4\eps^4$. So,  
    \begin{align*}
        \Var[X] &= \sum_{(u,v)\in E} \Var[X_{uv}] + \sum_{\substack{(u,v), (u,w)\in E\\ v\neq w}}\Cov[X_{uv}, X_{uw}] + \sum_{\substack{(u,v), (w,z)\in E\\ u,v,w,z \text{ all distinct}}}\Cov[X_{uv}, X_{wz}]\\
        &\leq m\cdot \left( \frac{1}{4}-4\eps^4 \right) + m\cdot \Delta \cdot \left(\eps^2-4\eps^4\right) + 0 \leq \left( \frac{1}{4}+\Delta \eps^2\right)m.
    \end{align*}
    Then, by applying Chebyshev's inequality,
    \begin{align*}
        \Pr\left[X \leq \left(\frac{1}{2}+\eps^2\right) \opt\right] 
        \leq \Pr\left[|X-\E[X]| \geq  \eps^2 \cdot \opt \right]
        &\leq \frac{\Var[X]}{\eps^4 \cdot {\opt}^2} \\
        &\leq \frac{\left( \frac{1}{4}+\Delta \eps^2\right)m}{\eps^4 \cdot {\opt}^2} &&\rhd \Var[X] \le ( \frac{1}{4}+\Delta \eps^2)m\\
        &\leq \frac{4\cdot(\frac{1}{4}+\Delta \eps^2)}{\eps^4 m} &&\rhd \opt\geq \frac{m}{2}\\
        &< \delta. &&\rhd \Delta < \frac{\eps^2 \delta m}{4} - \frac{1}{4\eps^2}
    \end{align*}
    So, with probability at least $1-\delta$, $X \geq (\frac{1}{2}+\eps^2)\cdot \opt$.
\end{proof}

\section{Our Algorithm for General Graphs}

In this section, we present our $(\frac{1}{2}+\Omega(\eps^2))$-approximation 
for streaming MAX-CUT in $O(\poly(1/\eps,1/\delta))$ space using $\eps$-accurate predictions. For better presenting our algorithmic ideas, we first consider the problem in the {\em random order} streams in Section~\ref{sec:random-order}. Next, in Section~\ref{sec:arbitrary-order}, we show how to extend our algorithm to arbitrary order streams using more advanced sketching techniques. Finally, in Section~\ref{sec:dynamic}, we show that our algorithm also work in dynamic streams where both edge insertions and deletions are allowed.

\paragraph{Offline Implementation.} We first describe our algorithm in the offline setting. 
Recall that our algorithm from Section~\ref{sec:low-degree} is effective when the maximum degree in the graph is smaller than a pre-specified threshold $\phi = \Theta(\eps^2 \delta m)$, where $\delta$ is the target failing probability of the algorithm. 
Let $H$ and $L$ respectively denote the high-degree (i.e., vertices with degree at least $\phi$) and low-degree (i.e., vertices with degree less than $\phi$) vertices in the input graph $G = (V,E)$.
Now, by the guarantee of~\cref{thm:low-degree-estimate}, suppose we have a $(\frac{1}{2}+\eps^2)$-approximation of MAX-CUT on the induced subgraph $G[L]$ denoted by $(L^+, L^-)$. This cut is exactly the cut suggested by the given $\eps$-accurate oracle $\mathcal{O}$. 
Next, we extend this cut using the vertices in $H$ in a greedy manner. We pick vertices in $H$ one by one in an arbitrary order and each time add them to either $L^+$ or $L^-$ sections so that the size of cut maximized. We denote the resulting cut by $(C, V\setminus C)$. Another cut we consider is simply $(H, L)$. We show at least one these two cuts is a $(\frac{1}{2}+\Omega(\eps^2))$-approximation for MAX-CUT on $G$ with high probability.

Throughout the paper, we define high-degree vertices as those with degree $\geq \frac{\eps^2 m}{c}$, where $c=\frac{80}{\delta}$. Vertices with degrees below this threshold are considered low-degree.

\begin{theorem}
\label{thm:best-approx}
    Let $\eps \in (0,\frac{1}{2}]$ and $\delta \in (0,1)$. 
    The best of $(C, V\setminus C)$ and $(H, L)$ cuts is a $(\frac{1}{2}+\frac{\eps^2}{16})$-approximation for MAX-CUT on $G$ with probability at least $1-\delta$.
\end{theorem}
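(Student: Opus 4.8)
The plan is to lower-bound the value of the better of the two cuts $(C, V\setminus C)$ and $(H,L)$ by comparing it against $\opt$, using a case analysis driven by how much of $\opt$ is ``explained'' by edges incident to $H$. Let $m_H = e(H, V)$ denote the number of edges with at least one endpoint in $H$ (so $m_H \le |H|\cdot n$ is irrelevant; what matters is $m_H \le m$), let $m_L = e(L,L) = m - m_H$ be the number of edges inside $G[L]$, and let $\opt_L$ denote the MAX-CUT value of the induced subgraph $G[L]$. First I would observe that the greedy extension only helps: since $(C, V\setminus C)$ is obtained by starting from the cut $(L^+, L^-)$ of $G[L]$ and adding each vertex of $H$ to the side that maximizes the incremental cut, a standard greedy argument shows $(C,V\setminus C)$ cuts at least $e(L^+,L^-) + m_H/2$ edges — each vertex of $H$, when inserted, contributes at least half of its edges to the already-placed vertices. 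Meanwhile $\cref{thm:low-degree-estimate}$ (applied to $G[L]$, whose maximum degree is below the threshold $\phi$ by construction, provided $m_L$ is large enough — a point to be careful about) gives $e(L^+,L^-) \ge (\tfrac12 + \eps^2)\opt_L$ with probability $\ge 1-\delta$. So with high probability the first cut has value at least $(\tfrac12+\eps^2)\opt_L + \tfrac12 m_H$.

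Next I would bound $\opt$ from above in terms of these quantities: $\opt \le \opt_L + m_H$, since any cut of $G$ restricted to $L$ cuts at most $\opt_L$ internal edges and cuts at most all $m_H$ of the $H$-incident edges. The second cut $(H,L)$ has value $e(H,L) = m_H - e(H,H)$; I would also use the cruder bound that $(H,L)$ together with the trivial observation $\opt \le m$ handles the regime where $m_H$ is a large fraction of $m$. Now split into cases according to a threshold on $m_H$ relative to $\opt$ (or relative to $m$). If $m_H$ is small — say $m_H \le \beta\,\opt$ for a suitable constant $\beta = \Theta(1)$ — then $\opt_L \ge \opt - m_H \ge (1-\beta)\opt$, so the first cut has value at least $(\tfrac12+\eps^2)(1-\beta)\opt + \tfrac12 m_H \ge (\tfrac12 + \tfrac{\eps^2}{16})\opt$ once $\beta$ is chosen small enough (here one also uses $\tfrac12 m_H \ge 0$ to absorb the $-\beta$ term; the constant $16$ should fall out of optimizing $\beta$). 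If instead $m_H$ is large — $m_H > \beta\,\opt$ — then I would argue the $(H,L)$ cut, combined with the fact that the first cut still captures $\tfrac12 m_H + \tfrac12(\opt_L - \text{something})$, or more simply a convex-combination/averaging argument between the two cuts, gives the bound: the total value of the two cuts together is at least $e(L^+,L^-) + \tfrac12 m_H + e(H,L)$, and one of them is at least half of something exceeding $(1 + \tfrac{\eps^2}{8})\opt$.

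Concretely, the cleanest route for the large-$m_H$ case is: the first cut is $\ge (\tfrac12+\eps^2)\opt_L + \tfrac12 m_H$ and the second is $\ge e(H,L)$; I would instead use the bound $(C,V\setminus C) \ge \tfrac12 m_L + \tfrac12 m_H = \tfrac12 m \ge \tfrac12 \opt$ always (even ignoring predictions, the greedy cut on all of $G$ is $\ge m/2$ — actually $(L^+,L^-)$ cuts $\ge \tfrac12 m_L$ only in expectation, so I'd instead keep the $(\tfrac12+\eps^2)\opt_L$ bound and note $\opt_L \ge \tfrac12 m_L$ deterministically is false too; the right deterministic fallback is that greedy on $H$ gives $\ge \tfrac12 m_H$ regardless, plus whatever $e(L^+,L^-)$ is). So the final value is $\ge e(L^+,L^-) + \tfrac12 m_H$, and I combine with $(H,L) \ge m_H - e(H,H) \ge m_H - |H|\phi$-type bounds only if needed; more likely the intended argument bounds $e(H,H)$ trivially and takes the max. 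The main obstacle I anticipate is exactly this bookkeeping in the large-$m_H$ regime: making sure that when $\opt_L$ is small (so the prediction-based cut on $L$ is weak), the $H$-incident edges — split between the greedy gain of $\tfrac12 m_H$ in the first cut and the $e(H,L)$ value of the second cut — provably push the maximum of the two cuts above $(\tfrac12 + \tfrac{\eps^2}{16})\opt$, and in particular handling the possibility that $e(H,H)$ is large (many edges inside $H$), which hurts the $(H,L)$ cut but is exactly compensated by the greedy gain on those same edges in $(C, V\setminus C)$. A secondary technical point is verifying the hypotheses of $\cref{thm:low-degree-estimate}$ for $G[L]$: its edge count $m_L$ and the threshold $\phi$ must satisfy the required inequality, which is where the global assumption $m = \Omega(\eps^{-11}\delta^{-7})$ and a lower bound on $m_L$ (in the case $m_L$ is itself non-negligible) get used — if $m_L$ is tiny the whole contribution of $G[L]$ is negligible and $(H,L)$ alone suffices.
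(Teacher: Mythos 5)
Your decomposition by degree, the greedy lower bound, the prediction-based bound on $G[L]$ from \cref{thm:low-degree-estimate}, and the inequality $\opt \leq \opt_L + m_H$ are all the same ingredients the paper uses; your explicit case split on the magnitude of $m_H$ is a legitimate alternative to the paper's implicit split (which assumes $e(H,L) < (\tfrac12+\eps^2)\opt$ WLOG and then chains inequalities using $\alpha = e(H,L)/\opt$), and your slightly stronger greedy bound $\tfrac12 m_H$ (counting $H$-$H$ edges) is also valid. But the ``main obstacle'' you flag is exactly where the argument is incomplete, and your two proposed resolutions both fail.

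The missing piece is a quantitative bound on $e(H,H)$, and the observation that makes everything work is the one the paper uses in \cref{lem:lower-bound-opt-L}: every vertex in $H$ has degree at least $\phi = \eps^2 m / c$, so $|H| \leq 2m/\phi = 2c/\eps^2$, and hence $e(H,H) \leq |H|^2 \leq 4c^2/\eps^4$, a quantity independent of $m$. Under the standing assumption $m = \Omega(\eps^{-11}\delta^{-7})$ this is at most $\eps^4 \opt$, so in your large-$m_H$ case you get $e(H,L) = m_H - e(H,H) \geq m_H - \eps^4\opt$, which (for, say, $m_H > \tfrac{15}{16}\opt$) already exceeds $(\tfrac12 + \tfrac{\eps^2}{16})\opt$ for $\eps \le \tfrac12$. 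By contrast, the bound $e(H,H) \leq |H|\phi$ you suggest is not even a valid upper bound ($\phi$ is a \emph{lower} bound on the degree of $H$-vertices, not an upper bound on their $H$-internal degree), and even if one reads it charitably it would only give something of order $\delta m$, which is useless here. Your other suggestion — that the greedy gain on $H$-$H$ edges ``exactly compensates'' in $(C,V\setminus C)$ — also does not close the gap on its own: if $m_L$ is tiny then $e(L^+,L^-)$ is tiny and the greedy term alone is at most $\tfrac12 m_H \le \tfrac12 m \le \opt$, which never beats the $\tfrac12$ barrier. So the result genuinely hinges on $|H|$ being $O(1/\eps^2)$, making $e(H,H)$ (and $\opt_H$) negligible; without stating and using that, the large-$m_H$ case of your argument does not go through.
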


Without loss of generality, we assume that $e(H,L) < (\frac{1}{2}+\eps^2)\cdot \opt$, otherwise we are done.
Then it suffices to show that $e(C,V\setminus C) \geq (\frac{1}{2}+\frac{\eps^2}{16})\cdot \opt$. Recall that the cut $(C,V\setminus C)$ is obtained by extending the cut $(L^+,L^-)$ of $G[L]$ suggested by the given $\eps$-accurate oracle using the vertices in $H$ in a greedy manner. We first show that \cref{alg:low-deg} works for the induced subgraph $G[L]$.
\begin{lemma}
\label{lem:threshold-low-degree-graph}
    Let $\eps \in (0,\frac{1}{2}]$ and $\delta \in (0,1)$.
    Suppose that $e(H,L) < (\frac{1}{2}+\eps^2)\cdot \opt$, 
    $c=\frac{80}{\delta}$
    and $m>\frac{c^3 \delta}{\eps^4}+\frac{1}{4\eps^4}$. Then there exists a $(\frac{1}{2} + \eps^2)$-approximation of the MAX-CUT value of $G[L]$ with probability at least $1-\delta$.
\end{lemma}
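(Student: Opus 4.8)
The goal is to show that \cref{alg:low-deg}, when run on the induced subgraph $G[L]$, outputs a $(\frac12+\eps^2)$-approximation of $\opt(G[L])$ with probability at least $1-\delta$. The natural approach is to invoke \cref{thm:low-degree-estimate} with $G[L]$ in place of $G$. To do that I need to verify the degree hypothesis of that theorem for $G[L]$: writing $m_L := |E(G[L])|$ and $\Delta_L$ for the maximum degree in $G[L]$, I must check that $\Delta_L < \frac{\eps^2\delta m_L}{4} - \frac{1}{4\eps^2}$. The definition of $L$ gives the absolute bound $\Delta_L \le \Delta(G[L]) < \frac{\eps^2 m}{c}$ (every vertex of $L$ has degree $<\phi=\eps^2 m/c$ in $G$, hence also in $G[L]$). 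So the entire task reduces to producing a good lower bound on $m_L$ in terms of $m$ — specifically something like $m_L \ge \Omega(m)$ — so that $\frac{\eps^2 m}{c}$ is comfortably smaller than $\frac{\eps^2\delta m_L}{4}$.

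The key step is therefore to lower bound $m_L = e(L,L)$. I will use the two assumptions available: $e(H,L) < (\frac12+\eps^2)\opt \le (\frac12+\eps^2)m$, and the fact that $H$ cannot be too large — since every vertex in $H$ has degree $\ge \eps^2 m/c$, a counting argument ($\sum_{v\in H}\deg(v) \le 2m$) gives $|H| \le \frac{2m}{\eps^2 m/c} = \frac{2c}{\eps^2}$, hence $e(H,H) \le \binom{|H|}{2} \le \frac{2c^2}{\eps^4}$. Then
\begin{align*}
m_L = e(L,L) = m - e(H,L) - e(H,H) \ge m - \Bigl(\tfrac12+\eps^2\Bigr)m - \frac{2c^2}{\eps^4} = \Bigl(\tfrac12-\eps^2\Bigr)m - \frac{2c^2}{\eps^4} \ge \frac{m}{4} - \frac{2c^2}{\eps^4},
\end{align*}
using $\eps\le\frac12$ so that $\frac12-\eps^2\ge\frac14$. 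Under the hypothesis $m > \frac{c^3\delta}{\eps^4}+\frac{1}{4\eps^4}$ (and since $c=80/\delta\ge 1$ makes $c^3\delta \ge 8c^2$, say), the subtracted term $\frac{2c^2}{\eps^4}$ is a small fraction of $\frac{m}{4}$, so $m_L \ge \frac{m}{8}$, say, or more carefully $m_L \ge \bigl(\frac14 - o(1)\bigr)m$. I will track the constants precisely enough to land on a clean bound such as $m_L \ge \frac{m}{8}$.

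It then remains to check $\Delta_L < \frac{\eps^2\delta m_L}{4} - \frac{1}{4\eps^2}$. Substituting $m_L \ge \frac{m}{8}$ and $\Delta_L < \frac{\eps^2 m}{c}$, it suffices that $\frac{\eps^2 m}{c} \le \frac{\eps^2\delta m}{32} - \frac{1}{4\eps^2}$, i.e. that $c \ge 32/\delta$ (which holds with room to spare since $c = 80/\delta$) and that $\frac{1}{4\eps^2}$ is negligible compared to $\frac{\eps^2 m}{c}$ — again guaranteed by the lower bound on $m$ and a short computation. With the hypothesis verified, \cref{thm:low-degree-estimate} applied to $G[L]$ yields that \cref{alg:low-deg} outputs a $(\frac12+\eps^2)$-approximation of $\opt(G[L])$ with probability at least $1-\delta$, which is the claim.

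The only mildly delicate part is bookkeeping the constants: making sure the slack in $c=80/\delta$ versus the $32/\delta$ needed, and the slack in $m > \frac{c^3\delta}{\eps^4}+\frac{1}{4\eps^4}$ versus the additive $\poly(1/\eps)$ terms $\frac{2c^2}{\eps^4}$ and $\frac{1}{4\eps^2}$, are simultaneously enough. I expect this to work out with a fair amount of room, so the "obstacle" is really just careful constant-chasing rather than any conceptual difficulty; the conceptual content is entirely the observation that deleting the few, bounded-degree-in-$G[L]$ vertices of $H$ and the few edges touching them leaves a graph that still has $\Theta(m)$ edges and still satisfies the low-degree hypothesis.
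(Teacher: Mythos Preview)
Your proposal is correct and follows essentially the same approach as the paper: bound $\Delta_L < \eps^2 m/c$ by the definition of $L$, lower-bound $m_L$ via $m_L = m - e(H,L) - e(H,H)$ using $e(H,L) < (\tfrac12+\eps^2)m$ and $|H|\le 2c/\eps^2$, and then verify the degree hypothesis of \cref{thm:low-degree-estimate} for $G[L]$. The paper checks the final inequality $\frac{\eps^2 m}{c} < \frac{\eps^2\delta}{4}\bigl((\tfrac12-\eps^2)m - \frac{4c^2}{\eps^4}\bigr) - \frac{1}{4\eps^2}$ directly rather than first simplifying to $m_L \ge m/8$, but this is purely a bookkeeping difference; your constant-chasing does go through (with $c\delta = 80$ you get $\frac{\eps^2\delta m}{32} - \frac{\eps^2 m}{c} = \frac{3\eps^2\delta m}{160}$, and the hypothesis on $m$ easily makes this exceed $\frac{1}{4\eps^2}$).
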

\begin{proof}
    Note that $m = m_L + m_H + e(H,L)$, where $m_L$ (resp. $m_H$) is the number of edges in $G[L]$ (resp. $G[H]$).
    Then we have $m_L = m - m_H - e(H,L) > m - \frac{4c^2}{\eps^4} - (\frac{1}{2}+\eps^2)m$, since $|H|\leq \frac{2m}{\eps^2 m / c} = \frac{2c}{\eps^2}$ and $e(H,L) < (\frac{1}{2}+\eps^2)\cdot \opt \leq (\frac{1}{2}+\eps^2)m$.

    Since the high-degree threshold is $\frac{\eps^2 m}{c}$, we have $\Delta_L < \frac{\eps^2 m}{c}$, where $\Delta_L$ is the maximum degree of $G[L]$.
    It is easy to check that $\frac{\eps^2 m}{c} < \frac{\eps^2 \delta}{4}(m - \frac{4c^2}{\eps^4} - (\frac{1}{2}+\eps^2)m) - \frac{1}{4\eps^2}$ by substituting in the conditions for $c$ and $m$. It follows that $\Delta_L < \frac{\eps^2\delta m_L}{4} - \frac{1}{4\eps^2}$.

    Therefore, by the guarantee of~\cref{thm:low-degree-estimate}, with probability at least $1-\delta$, there exists a $(\frac{1}{2}+\eps^2)$-approximation of MAX-CUT on $G[L]$ denoted by $(L^+, L^-)$, i.e., $e(L^+,L^-) \geq (\frac{1}{2}+\eps^2) \cdot \opt_L$, where $\opt_L$ is the size of the MAX-CUT value of $G[L]$.
\end{proof}

\begin{lemma}
\label{lem:lower-bound-opt-L}
    Let $\eps \in (0,\frac{1}{2}]$.
    Suppose that $e(H,L) = \alpha \cdot \opt$ (where $\alpha < \frac{1}{2}+\eps^2$) and $m > \frac{8c^2}{\eps^8}$. We have $\opt_L > (1-\alpha - \eps^4) \cdot \opt$, where $\opt_L$ is the size of the MAX-CUT value of $G[L]$.
\end{lemma}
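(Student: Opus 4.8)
The plan is to lower-bound $\opt_L$ by exhibiting a specific cut of $G[L]$ whose value is at least $(1-\alpha-\eps^4)\cdot\opt$, namely the restriction to $L$ of the fixed optimal cut $x^*$ of $G$. Write $\opt$ as a sum of contributions over the three groups of edges: those inside $G[L]$, those inside $G[H]$, and those crossing between $H$ and $L$. Concretely, let $\opt = \opt^{(LL)} + \opt^{(HH)} + \opt^{(HL)}$, where $\opt^{(LL)}$ is the number of $x^*$-cut edges with both endpoints in $L$, and similarly for the other two terms. Since $\opt^{(LL)}$ counts cut edges of a particular cut of $G[L]$, we immediately get $\opt_L \ge \opt^{(LL)} = \opt - \opt^{(HH)} - \opt^{(HL)}$.

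Now I would bound the two subtracted terms. For the $H$--$L$ edges, $\opt^{(HL)} \le e(H,L) = \alpha\cdot\opt$ trivially. For the $H$--$H$ edges, $\opt^{(HH)} \le m_H$, and by the degree bound $|H| \le \frac{2c}{\eps^2}$ (as in Lemma~3.4, since each high-degree vertex has degree $\ge \eps^2 m/c$ and the degree sum is $2m$), we have $m_H \le \binom{|H|}{2} < \frac{2c^2}{\eps^4}$. Combining, $\opt_L > \opt - \alpha\cdot\opt - \frac{2c^2}{\eps^4} = (1-\alpha)\opt - \frac{2c^2}{\eps^4}$. It remains to convert the additive error $\frac{2c^2}{\eps^4}$ into the multiplicative error $\eps^4\cdot\opt$: using $\opt \ge m/2$ and the hypothesis $m > \frac{8c^2}{\eps^8}$, we get $\eps^4\cdot\opt \ge \eps^4 m/2 > \eps^4\cdot\frac{4c^2}{\eps^8} = \frac{4c^2}{\eps^4} > \frac{2c^2}{\eps^4}$, so $\opt_L > (1-\alpha)\opt - \eps^4\opt = (1-\alpha-\eps^4)\opt$, as claimed.

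I do not anticipate a real obstacle here; the only point requiring a little care is making sure the $H$--$H$ edge count is controlled purely by $|H|$ (not by degrees), which is why I use $m_H \le \binom{|H|}{2}$ rather than a degree-based bound, and then that the constant in the $m$-hypothesis is large enough to absorb the $\binom{|H|}{2}$ term — a one-line check with $\opt\ge m/2$. One should also note that $\alpha\ge 0$ and hence $1-\alpha-\eps^4$ could in principle be negative, but in that regime the statement is vacuous (or one simply observes $\opt_L\ge 0$), so nothing is lost.
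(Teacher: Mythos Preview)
Your proof is correct and follows essentially the same approach as the paper: both restrict the optimal cut of $G$ to $L$, use $\opt_L \ge \opt - \opt_H - e(H,L)$ (the paper phrases this as $\opt \le \opt_L + \opt_H + e(H,L)$), bound $\opt_H \le m_H$ via $|H| \le 2c/\eps^2$, and then convert the additive $m_H$ term into $\eps^4\cdot\opt$ using $\opt \ge m/2$ and the hypothesis on $m$. Your use of $\binom{|H|}{2}$ in place of the paper's $|H|^2$ gives a slightly sharper constant but is otherwise identical.
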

\begin{proof}
    Note that $\opt \leq \opt_L + \opt_H + e(H,L)$, where $\opt_L$ (resp. $\opt_H$) is the size of the MAX-CUT of $G[L]$ (resp. $G[H]$). It follows that
    \begin{align*}
        \opt_L &\geq \opt - \opt_H - e(H,L)\\
        &\geq \opt - \frac{4c^2}{\eps^4} - e(H,L) &&\rhd \opt_H \leq m_H \leq |H|^2 \leq \frac{4c^2}{\eps^4}\\
        &= \opt - \frac{4c^2}{\eps^4} - \alpha \cdot \opt &&\rhd e(H,L) =\alpha \cdot \opt\\
        &> \opt - \eps^4 \cdot \opt - \alpha \cdot \opt &&\rhd \frac{4c^2}{\eps^4} < \eps^4 \cdot \frac{m}{2} \leq \eps^4 \cdot \opt \\
        &= (1-\alpha-\eps^4)\cdot \opt.
    \end{align*}
\end{proof}

\begin{proof}[Proof of \cref{thm:best-approx}]
    Since the algorithm returns the best of two cuts, if the value of the cut $(H, L)$ is at least $(\frac{1}{2}+\eps^2)\cdot \opt$ (i.e., $e(H,L) \geq (\frac{1}{2}+\eps^2)\cdot \opt$), then we are done.
    Hence, without loss of generality, we can assume that the size of the cut $(H, L)$ is strictly less than $(\frac{1}{2}+\eps^2)\cdot \opt$.
    Based on \cref{lem:threshold-low-degree-graph} and \cref{lem:lower-bound-opt-L}, we have
    \begin{align*}
        e(C,V\setminus C) &\geq e(L^+,L^-) + \sum_{v\in H}\max\{e(v,L^+),e(v,L^-)\}\\
        &\geq \left(\frac{1}{2}+\eps^2\right) \opt_L + \sum_{v\in H}\max\{e(v,L^+),e(v,L^-)\} &&\rhd e(L^+,L^-) \geq (\frac{1}{2}+\eps^2) \opt_L\\
        &\geq \left(\frac{1}{2}+\eps^2\right) \opt_L+ \frac{1}{2}\cdot e(H,L) &&\rhd \max\{e(v,L^+),e(v,L^-)\} \geq \frac{e(v,L)}{2} \\
        &> \left(\frac{1}{2}+\eps^2\right)(1-\alpha-\eps^4) \opt + \frac{\alpha}{2}\cdot \opt &&\rhd \opt_L > (1-\alpha-\eps^4) \opt \\
        &= \left( \frac{1}{2} + (1-\alpha)\eps^2 - \frac{\eps^4}{2}-\eps^6\right) \opt\\
        &>  \left( \frac{1}{2} + \left(\frac{1}{2}-\eps^2\right)\eps^2 - \frac{\eps^4}{2}-\eps^6\right) \opt &&\rhd \alpha < \frac{1}{2} + \eps^2\\
        &\geq \left( \frac{1}{2}+\frac{\eps^2}{16}\right) \opt, &&\rhd \left(\frac{1}{2}-\eps^2\right)\eps^2 - \frac{\eps^4}{2}-\eps^6 \geq \frac{\eps^2}{16}
    \end{align*}
    with probability at least $1-\delta$.
\end{proof}

\subsection{Warm-up: Random Order Streams}\label{sec:random-order}

\paragraph{Overview of the Algorithm.} Suppose that the edges of $G = (V, E)$ arrive one by one in a random order stream. 
At a high level, we would like to run our algorithm from Section~\ref{sec:low-degree} for low-degree vertices $L$ (i.e., the induced subgraph on $L$), and then, using a greedy approach, add the high-degree vertices $H$ to the constructed cut. 
However, the set $H$ and $L$ are not known a priori in the stream, and it is not clear how to store the required information to run the described algorithm, space efficiently.

\medskip
To detect high-degree vertices and collect sufficient information to run the greedy approach on them at the end of the stream, we rely on the random order of the stream. We store a small number of edges from the beginning of the stream and then gather degree information for those vertices that are candidates for being high-degree.
More precisely, we store the first $\poly(1/\eps, 1/\delta)$ edges in the (random order) stream and use them to identify a set $\tilde{H}$ of size  $\poly(1/\eps,1/\delta)$ that contains all high-degree vertices $H$ with probability $1-\delta$. We then store all edges between vertices in $\tilde{H}$ throughout the stream, which requires $\poly(1/\epsilon,1/\delta)$ words of space. Additionally, for every vertex $v \in \tilde{H}$, we maintain the number of edges between $v$ and $\tilde{L}^+ \coloneqq V^+ \setminus \tilde{H}$, as well as between $v$ and $\tilde{L}^- \coloneqq V^- \setminus \tilde{H}$. 
It is straightforward to check that these counters require $\poly(1/\eps,1/\delta)$ words of space in total. We then apply \Cref{alg:low-deg} to the graph $G[V\setminus \tilde{H}]$ to approximate the MAX-CUT value of $G[V\setminus \tilde{H}]$.

\medskip
Finally, at the end of stream, since we can compute the degree of all vertices in $\tilde{H}$ exactly, we can determine the set of high degree vertices $H\subseteq \tilde{H}$. First, for the remaining vertices $\tilde{S} \coloneqq \tilde{H}\setminus H$, we (hypothetically) feed them to our algorithm for the low-degree vertices. Note that we have all edges between any pair of vertices in $\tilde{S}$, as well as all the number of their incident edges to $\tilde{L}^+$ and $\tilde{L}^-$. Therefore, we can compute the size of $(L^+, L^-)$ cut exactly. Now it only remains to run the greedy algorithm for the high-degree vertices $H$. Similarly, since we have computed the number of incident edges of high-degree vertices to $\tilde{L}^+, \tilde{L}^-$, and we have stored all edges with both endpoints in $\tilde{H}$, we can perform the greedy extension of $(\tilde{L}^+\cup \tilde{S}^+, \tilde{L}^-\cup \tilde{S}^-)$ using $H$. Moreover, using the same set of degree information, we can compute the size of $(H, V\setminus H)$ as well. Then, we can return the best of these two cuts as our estimate of MAX-CUT in $G$.

\medskip
Next, we prove the approximation guarantee and the space complexity of the algorithm. The algorithm is formally given in Algorithm~\ref{alg:rand-order}.

\begin{algorithm}[h!]
\begin{algorithmic}[1]
\Require Graph $G$ as a random-order stream of edges, an $\eps$-accurate oracle $\mathcal{O} \rightarrow \{-1, 1\}$, 
a high-degree threshold $\theta = \frac{\eps^2 m}{c}$ (where $c=\frac{80}{\delta}$).
\Ensure The estimate of the MAX-CUT value of $G$.
\LeftComment{\textbf{Preprocessing phase}}
\State {\bf initialize} $F \leftarrow \emptyset$, $\tilde{H} \leftarrow \emptyset$, $e(L^+,L^-) \leftarrow 0$. 

\LeftComment{\textbf{Streaming phase}}

\ForEach{edge $(u,v)$ in the first $\frac{\beta}{\delta^3\eps^4}$ edges of the stream (where $\beta$ is a sufficiently large universal constant)}
\State $F \leftarrow F \cup \{(u,v)\}$, $\tilde{H} \leftarrow \tilde{H} \cup \{u,v\}$ \phantomsection \label{line:store-first}

\EndFor

\ForEach{vertex $v\in \tilde{H}$}
\State {\bf initialize} $e(v,V\setminus \tilde{H})\leftarrow 0$, $e(v,L^+)\leftarrow 0$, $e(v,L^-)\leftarrow 0$.
\EndFor
\ForEach{remaining edge $(u,v)$ in the stream}
\State $Y_u = \mathcal{O}(u), Y_v = \mathcal{O}(v)$

\If{$u \in \tilde{H}$ and $v \in \tilde{H}$} 
 $F \leftarrow F \cup \{(u,v)\}$ \label{line:store-remaining}
\Else
\IIf{$u \in \tilde{H}$ and $v\in V\setminus \tilde{H}$} $e(u,V\setminus \tilde{H}) \leftarrow e(u,V\setminus \tilde{H})+1$
\IIf{$v \in \tilde{H}$ and $u\in V\setminus \tilde{H}$} $e(v,V\setminus \tilde{H}) \leftarrow e(v,V\setminus \tilde{H})+1$
\EndIf

\If{$u\in \tilde{H}$ and $v\in V\setminus \tilde{H}$}
\IIf{$Y_v = 1$} $e(u,L^+) \leftarrow e(u,L^+) + 1$ \ElseIIf $e(u,L^-) \leftarrow e(u,L^-) + 1$
\EndIf
\If{$u\in V\setminus \tilde{H}$ and $v\in V\setminus \tilde{H}$ and $Y_u \neq Y_v$} 
 \State $e(L^+,L^-) \leftarrow e(L^+,L^-) + 1$ (i.e., apply \cref{alg:low-deg} on $G[V\setminus \tilde{H}]$) \label{line:apply-in-streams}
\EndIf
\EndFor
\LeftComment{\textbf{Postprocessing phase}}
\State $H \leftarrow \{v\in \tilde{H}: |\{e \in F :v\in e\}| + e(v,V\setminus \tilde{H}) \geq \theta\}$, $L \leftarrow V \setminus H$
\ForEach{edge $(u,v)\in F$}
\State $Y_u = \mathcal{O}(u), Y_v = \mathcal{O}(v)$
\If{$u\in H$ and $v \in L$}
\IIf{$Y_v = 1$} $e(u,L^+) \leftarrow e(u,L^+) + 1$ \ElseIIf $e(u,L^-) \leftarrow e(u,L^-) + 1$
\EndIf
 \IIf{$u\in \tilde{H}\setminus H$ and $v\in \tilde{H}\setminus H$ and $Y_u \neq Y_v$} $e(L^+,L^-) \leftarrow e(L^+,L^-) + 1$ \label{line:apply-in-post}

\EndFor
\ForEach{vertex $v\in \tilde{H}\setminus H$}
\State $Y_v = \mathcal{O}(v)$
\IIf{$Y_v = 1$} $e(L^+,L^-) \leftarrow e(L^+,L^-) + e(v,L^-)$ \ElseIIf $e(L^+,L^-) \leftarrow e(L^+,L^-) + e(v,L^+)$ \label{line:apply-intra}
\EndFor
\State $\mathrm{ALG}_1 \leftarrow e(L^+,L^-) + \sum_{v\in H} \max\{e(v,L^-),e(v,L^+)\}$
\State $\mathrm{ALG}_2 \leftarrow \sum_{v\in H} (e(v,L^-) + e(v,L^+))$
\State \textbf{return} $\max\{\mathrm{ALG}_1,\mathrm{ALG}_2\}$
\end{algorithmic}
\caption{Estimating the MAX-CUT value in (insertion-only) random-order streams}
\label{alg:rand-order}
\end{algorithm}

\begin{theorem}
\label{thm:random-order}
Let $\eps\in (0,\frac{1}{2}]$ and $\delta \in (0,1)$.
Given an $\epsilon$-accurate oracle $\mathcal{O}$ for the MAX-CUT of $G$, there exists a single-pass $(\frac{1}{2} + \frac{\eps^2}{16})$-approximation algorithm for estimating the MAX-CUT value of $G$ in the insertion-only random-order streams. The algorithm uses $O(\frac{1}{\delta^6\eps^8})$ words of space. 
The approximation holds with probability at least $1-\delta$.
\end{theorem}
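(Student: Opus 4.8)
The plan is to establish \cref{thm:random-order} in two parts: the approximation guarantee, which follows almost immediately from \cref{thm:best-approx} once we verify that \cref{alg:rand-order} faithfully simulates the offline procedure, and the space bound, which requires controlling the size of the candidate set $\tilde H$. The heart of the correctness argument is a structural claim: \emph{with probability at least $1-\delta$, every high-degree vertex $v\in H$ (those with true degree $\geq \theta = \eps^2 m/c$) appears among the endpoints of the first $\frac{\beta}{\delta^3\eps^4}$ edges}, i.e.\ $H\subseteq\tilde H$. Given this, the postprocessing phase recomputes the exact degrees of all vertices in $\tilde H$ (it has stored all $F$-edges among $\tilde H$ and the counters $e(v,V\setminus\tilde H)$), correctly identifies $H$, then reconstructs the cut $(L^+,L^-)$ on $G[V\setminus H]$ — note $V\setminus H = L$ — by combining the running count from line~\ref{line:apply-in-streams} with the corrections in lines~\ref{line:apply-in-post} and \ref{line:apply-intra} that fold the misclassified vertices $\tilde S = \tilde H\setminus H$ back into the low-degree side. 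Finally $\mathrm{ALG}_1$ is exactly the greedily-extended cut $e(C,V\setminus C)$ and $\mathrm{ALG}_2$ equals $e(H,L)$, so by \cref{thm:best-approx} the maximum is a $(\tfrac12+\tfrac{\eps^2}{16})$-approximation with probability $1-\delta$. A union bound over the two bad events (failure of $H\subseteq\tilde H$, and failure of the \cref{thm:best-approx} guarantee) with each event tuned to probability $\delta/2$ — absorbed into constants — closes the probabilistic accounting; we also note $m=\Omega(\eps^{-11}\delta^{-7})$ ensures the hypotheses $m>c^3\delta/\eps^4+1/(4\eps^4)$ of \cref{lem:threshold-low-degree-graph} and $m>8c^2/\eps^8$ of \cref{lem:lower-bound-opt-L} hold.

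\textbf{Proving the structural claim $H\subseteq\tilde H$.} The argument uses the random-order assumption. Fix a high-degree vertex $v$ with $\deg(v)\geq\eps^2 m/c$. The prefix of length $t=\frac{\beta}{\delta^3\eps^4}$ is a uniformly random $t$-subset of the $m$ edges; the probability that \emph{none} of $v$'s $\deg(v)$ incident edges lands in the prefix is at most $\bigl(1-\tfrac{\deg(v)}{m}\bigr)^t \le \exp(-t\,\eps^2/c) = \exp(-\Theta(\beta/(\delta^2\eps^2)))$, which for a large enough constant $\beta$ is at most $\delta/(2|H|)$; since $|H|\le 2m/\theta = 2c/\eps^2 = O(1/(\delta\eps^2))$, taking $\beta$ large enough makes this hold and a union bound over $v\in H$ gives $\Pr[H\not\subseteq\tilde H]\le\delta/2$. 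This is the step I expect to be the main obstacle — not because it is deep, but because it is where the choice of prefix length $\frac{\beta}{\delta^3\eps^4}$ and the constant $\beta$ must be pinned down, and one has to be slightly careful that sampling \emph{without} replacement only helps (the without-replacement probability of missing $v$ is no larger than the with-replacement bound, by a standard negative-association / coupling argument).

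\textbf{The space bound.} We must bound $|\tilde H|$: it contains at most $2t = O(1/(\delta^3\eps^4))$ vertices. The set $F$ stores the $t$ prefix edges plus all edges with both endpoints in $\tilde H$; the latter is at most $\binom{|\tilde H|}{2} = O(1/(\delta^6\eps^8))$, which dominates. For each $v\in\tilde H$ we keep three counters, costing $O(|\tilde H|) = O(1/(\delta^3\eps^4))$ words, and line~\ref{line:apply-in-streams} maintains a single counter $e(L^+,L^-)$. The running total is $O(1/(\delta^6\eps^8))$ words, matching the claimed bound. The postprocessing does no extra storage. Assembling: the approximation ratio from \cref{thm:best-approx}, the probability from the union bound, and the $O(\delta^{-6}\eps^{-8})$-word footprint together give the theorem.
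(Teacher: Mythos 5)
Your proposal is correct and follows the same overall architecture as the paper's proof of \cref{thm:random-order}: establish $H\subseteq\tilde H$ with probability $1-\delta/2$, observe that the streaming and postprocessing phases together simulate running \cref{alg:low-deg} on $G[L]$ followed by the greedy extension and the $(H,L)$ cut, invoke \cref{thm:best-approx}, and union bound. The one place you depart from the paper is in proving $H\subseteq\tilde H$ (the paper's \cref{lem:high-deg-sample}): you bound the probability a high-degree $v$ is missed by directly estimating the hypergeometric miss probability, $(1-\deg(v)/m)^t\le\exp\bigl(-t\eps^2/c\bigr)=\exp\bigl(-\Theta(\beta/(\delta^2\eps^2))\bigr)$, and pushing this below $\delta/(2|H|)$ via the exponential; the paper instead sets up indicators $X_i$ for each incident edge landing in the prefix, uses negative correlation to control $\Var[X]$, and applies Chebyshev to get a polynomial per-vertex failure bound $\delta^3\eps^2c/\beta$. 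Both routes close with a universal constant $\beta$; your exponential bound is tighter and arguably cleaner, but buys nothing extra here since the paper's Chebyshev bound already suffices. Everything else — the exact bookkeeping of $e(L^+,L^-)$, $e(v,L^\pm)$ across the streaming and postprocessing phases, the identification $\mathrm{ALG}_1=e(C,V\setminus C)$ and $\mathrm{ALG}_2=e(H,L)$, and the $O(\delta^{-6}\eps^{-8})$-word space accounting dominated by storing edges internal to $\tilde H$ — matches the paper's argument.
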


\begin{lemma}
\label{lem:high-deg-sample}
Let $\delta \in (0,1)$.
Then $\tilde{H}$ contains all high-degree vertices in $G$ with probability at least $1-\frac{\delta}{2}$.
\end{lemma}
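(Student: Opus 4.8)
The plan is to exploit the fact that in a random‑order stream the first $k \coloneqq \frac{\beta}{\delta^3\eps^4}$ edges form a \emph{uniformly random $k$-subset} of $E$ (since the stream is a uniformly random permutation of the $m$ edges), and to argue that a high‑degree vertex is very unlikely to be missed because it is incident to $\Omega(\eps^2\delta m)$ edges. Fix a high‑degree vertex $v$, so $d_v \coloneqq \deg(v) \geq \theta = \frac{\eps^2 m}{c} = \frac{\eps^2\delta m}{80}$. The event $v \notin \tilde H$ is exactly the event that the random $k$-subset of edges avoids all $d_v$ edges incident to $v$, so
\[
\Pr[v\notin\tilde H]=\binom{m-d_v}{k}\Big/\binom{m}{k}=\prod_{i=0}^{k-1}\frac{m-d_v-i}{m-i}\le\left(1-\frac{d_v}{m}\right)^{k}\le\left(1-\frac{\eps^2\delta}{80}\right)^{k}\le\exp\!\left(-\frac{k\eps^2\delta}{80}\right),
\]
where the first inequality uses $\frac{m-d_v-i}{m-i}\le\frac{m-d_v}{m}$ for every $i\ge 0$ (equivalently $d_v i\ge 0$). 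If $k\ge m$ the claim is trivial since then $\tilde H=V$; otherwise the binomial expression is well defined, which is consistent with the standing assumption $m=\Omega(\eps^{-11}\delta^{-7})$.

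Next I would take a union bound over all high‑degree vertices. Since $\sum_{u\in V}\deg(u)=2m$ and every $u\in H$ has $\deg(u)\ge\theta=\frac{\eps^2\delta m}{80}$, we get $|H|\le\frac{2m}{\theta}=\frac{160}{\eps^2\delta}$. Substituting $k=\frac{\beta}{\delta^3\eps^4}$ yields $\frac{k\eps^2\delta}{80}=\frac{\beta}{80\delta^2\eps^2}$, and hence
\[
\Pr[H\not\subseteq\tilde H]\le|H|\cdot\exp\!\left(-\frac{k\eps^2\delta}{80}\right)\le\frac{160}{\eps^2\delta}\exp\!\left(-\frac{\beta}{80\delta^2\eps^2}\right).
\]
It remains to verify this is at most $\delta/2$ for a suitable universal constant $\beta$; this reduces to $\frac{320}{\eps^2\delta^2}\le\exp\!\left(\frac{\beta}{80\delta^2\eps^2}\right)$, i.e. $\ln\frac{320}{\eps^2\delta^2}\le\frac{\beta}{80\delta^2\eps^2}$. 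Since $\eps,\delta\in(0,1)$ we have $\ln(1/\eps)\le 1/\eps\le\frac{1}{\delta^2\eps^2}$ and $\ln(1/\delta)\le 1/\delta\le\frac{1}{\delta^2\eps^2}$, so the left side is at most $(\ln 320+4)\cdot\frac{1}{\delta^2\eps^2}$, and any $\beta\ge 80(\ln 320+4)$ works.

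The argument is short and essentially elementary; the only real "obstacle" is bookkeeping the constants so that the threshold constant $c=80/\delta$, the sample‑size constant $\beta$, and the target failure probability $\delta/2$ all line up. The reason this goes through with room to spare is that the exponent scales like $1/(\delta^2\eps^2)$ whereas $\ln|H|$ grows only logarithmically in $1/\eps$ and $1/\delta$, so the exponential decay dominates the polynomial $|H|=\poly(1/\eps,1/\delta)$; no concentration inequality beyond the product bound above is needed.
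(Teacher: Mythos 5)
Your proof is correct, and it takes a genuinely different route from the paper's. The paper fixes the same indicator variables $X_i$ and the count $X=\sum_i X_i$ of incident edges landing in the prefix, bounds $\Var[X]$ using negative correlation of sampling without replacement, and then applies Chebyshev's inequality to $\Pr[X\le 0]$; this gives $\Pr[v\notin\tilde H]\le \frac{\delta^3\eps^2 c}{\beta}$, a bound that is only inverse-polynomial in $\beta$, and the union bound is then a numerical check against the choice $c=80/\delta$. You instead observe that the event $v\notin\tilde H$ is a pure hypergeometric "miss" event, compute it exactly as $\binom{m-d_v}{k}/\binom{m}{k}=\prod_{i<k}\frac{m-d_v-i}{m-i}$, and bound it by $(1-d_v/m)^k\le\exp(-kd_v/m)$. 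This sidesteps any moment computation and any appeal to negative correlation, and yields an exponentially small per-vertex failure probability, so the union bound over $|H|=O(1/(\eps^2\delta))$ vertices goes through with large slack; you then correctly reduce the constant bookkeeping to a clean inequality that any $\beta\ge 80(\ln 320 + 4)$ satisfies. What the paper's second-moment approach buys is uniformity: the exact same Chebyshev template is reused verbatim in Lemma~\ref{lem:high-deg-reservoir} for reservoir sampling and, with minor modification, in Lemma~\ref{lem:high-deg-dynamic} for $\ell_0$-sampling, where the sample is no longer a uniform prefix and the exact hypergeometric formula no longer applies. What your approach buys is a sharper and more elementary bound in the random-order case, and you also explicitly dispose of the degenerate case $k\ge m$, which the paper handles only implicitly via the standing assumption on $m$.
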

\begin{proof}
    Suppose that $v$ is a high-degree vertex in $G$, i.e., $\deg(v) \geq \frac{\eps^2 m}{c}$. 
    For each edge $e_i$ incident to $v$, define an indicator random variable 
    \begin{equation*}
            X_{i}= \begin{cases}1, & \text { if } e_i \text{ is in the first } \frac{\beta}{\delta^3\eps^4} \text{ edges of the random order stream} \\ 0, & \text { otherwise }\end{cases}
    \end{equation*}
    So, $\E[X_i] = \Pr[X_i=1] = \frac{\beta}{\delta^3\eps^4 m}$. We define $X:= \sum_{i\in [\deg(v)]} X_i$ to denote the number of edges incident to $v$ that appear in the first $\frac{\beta}{\delta^3\eps^4}$ edges of the random order stream. Then, $\E[X] = \sum_{i\in [\deg(v)]} \E[X_i] = \frac{\beta\deg(v)}{\delta^3\eps^4 m}$. 

    Next, we bound $\Var[X]$. For any $i\in [\deg(v)]$, $\Var[X_i] = \E[X^2_i] - (\E[X_i])^2 = \E[X_i] - (\E[X_i])^2 = \frac{\beta}{\delta^3\eps^4 m} - (\frac{\beta}{\delta^3\eps^4 m})^2 = \frac{\beta}{\delta^3\eps^4 m}-\frac{\beta^2}{\delta^6\eps^8 m^2}$.
    Since $(X_i)_{i\in [\deg(v)]}$ are negatively correlated, $\Var[X] \le \sum_{i\in [\deg(v)]}\Var[X_i] = \deg(v)\cdot \left(\frac{\beta}{\delta^3\eps^4 m}-\frac{\beta^2}{\delta^6\eps^8 m^2}\right) \leq \frac{\beta\deg(v)}{\delta^3\eps^4 m}$.
    Therefore, by Chebyshev's inequality, 
    \begin{align*}
        \Pr[v\notin \tilde{H}] = \Pr[X\leq 0] 
        = \Pr\left[X \leq \E[X]- \frac{\beta\deg(v)}{\delta^3\eps^4 m}\right]
        &\leq \frac{\Var[X]\cdot \delta^6\eps^{8} m^2}{\beta^2\deg^2(v)}\\
        &\leq \frac{\delta^3\eps^4 m}{\beta\cdot \deg(v)} &&\rhd \Var[X] \leq \frac{\beta\deg(v)}{\delta^3\eps^4 m}\\
        &\leq \frac{\delta^3 \eps^2 c}{\beta}. &&\rhd \deg(v)\geq \frac{\eps^2 m}{c}
    \end{align*}
    By our definition of high-degree vertices, there are at most $\frac{2m}{\eps^2 m / c} =\frac{2c}{\eps^2}$ high-degree vertices in $G$.
    By union bound, with probability at least $1-\frac{2c}{\eps^2}\cdot \frac{\delta^3 \eps^2 c}{\beta} = 1-\frac{2c^2\delta^3}{\beta} \geq 1-\frac{\delta}{2}$ (since $c=\frac{80}{\delta}$ and $\beta$ is sufficiently large), $\tilde{H}$ contains all high-degree vertices in $G$.
\end{proof}

\begin{proof}[Proof of \cref{thm:random-order}]
    In \cref{alg:rand-order}, we store the first $\frac{\beta}{\delta^3\eps^4}$ edges of the stream (\cref{line:store-first}) and the remaining edges with both endpoints in $\tilde{H}$ (\cref{line:store-remaining}). We also maintain several counters for vertices in $\tilde{H}$. 
    Therefore, the total space complexity of \cref{alg:rand-order} is $O(\frac{1}{\delta^6\eps^8})$ words.
    
    Next, we show the approximation guarantee of \cref{alg:rand-order}. 
    Note that at the end of the stream we can identify $H$ exactly, using the information stored during the streaming phase.
    In \cref{alg:rand-order}, we apply \cref{alg:low-deg} to the subgraph $G[V\setminus \tilde{H}]$ during the streaming phase (Line~\ref{line:apply-in-streams}). Hypothetically, during the postprocessing phase, we apply \cref{alg:low-deg} to the subgraph $G[\tilde{H}\setminus H]$ (Line~\ref{line:apply-in-post}) and to edges with one endpoint in $\tilde{H} \setminus H$ and the other endpoint in $V\setminus \tilde{H}$ (\cref{line:apply-intra}). 
    This is equivalent to applying \cref{alg:low-deg} to $G[L]$ since $L=(V\setminus \tilde{H}) \cup (\tilde{H}\setminus H)$.
    Then the approximation guarantee of \cref{alg:rand-order} follows directly from \cref{thm:best-approx} (with failure probability $\frac{\delta}{2}$) and \cref{lem:high-deg-sample} by using union bound.
\end{proof}

\subsection{Arbitrary Order Streams}\label{sec:arbitrary-order}
\paragraph{Overview of the Algorithm.} 

Unlike random order streams, where we can identify a set $\tilde{H}$, containing the high-degree vertices $H$, by storing only a small number of edges from the beginning of the stream, finding high-degree vertices is more challenging in arbitrary order streams. 
To handle this, we employ {\em reservoir sampling}~\cite{vitter1985random}. 
Specifically, we uniformly sample $\poly(1/\eps,1/\delta)$ edges from the stream. Then, at the end of the stream, we use these sampled edges to compute a small set $\tilde{H}$ such that $\tilde{H} \supseteq H$. This approach is similar to what we used in the random order stream; however, in this case, we can only retrieve $\tilde{H}$ rather than $H$ at the end of the stream.

We also need to estimate the number of incident edges related to high-degree vertices. To this end,
we use techniques from vector sketching, particularly those developed for {\em frequency estimation} and {\em heavy hitters}. We consider the sketching techniques for heavy hitters, specifically the randomized summaries of CountMin sketch~\cite{cormode2005improved}, corresponding to $V^+$ and $V^-$, denoted by $\CM[V^+]$ and $\CM[V^-]$, respectively. Intuitively, by the end of the stream, $\CM[V^+]$ and $\CM[V^-]$ will contain the estimates of the number of incident edges of high-degree vertices to sets $V^+$ and $V^-$, respectively. 
More precisely, as an edge $(u,v)$ arrive, we increment the counters related to $u$ in $\CM[V^{\mathcal{O}(v)}]$, and increment the counters related to $v$ in $\CM[V^{\mathcal{O}(u)}]$, where $\mathcal{O}(v)$ and $\mathcal{O}(u)$ are respectively the predicted sign of $v$, $u$ by the given $\eps$-accurate oracle $\mathcal{O}$. 

\medskip
Note that we cannot detect $H$ exactly by the end of stream. Instead, we have $\tilde{H}$. We use $\CM[V^+]$ and $\CM[V^-]$ to approximately compute the value of the cut $(\tilde{L}^+, \tilde{L}^-)$ on the induced subgraph $G[\tilde{L}]$, where $\tilde{L}:= V\setminus \tilde{H}$. Then, using $\CM[V^+]$ and $\CM[V^-]$, we can approximately run the greedy approach for $\tilde{H}$. Unlike the random order setting where we could implement the greedy extension exactly, here we can only store the approximate values. However, we show that our estimates of the number of incident edges of high-degree vertices to sets $V^+$ and $V^-$ are only off by $\poly(\epsilon,\delta)\cdot m$ additive terms with high probability, and they suffice to provide a strictly better than $\frac{1}{2}$-approximation for MAX-CUT on $G[\tilde{L}]$ and its extension with $\tilde{H}$. Similarly, we can use $\CM[V^+]$ and $\CM[V^-]$ to compute the size of $(\tilde{H}, \tilde{L})$ approximately too. Then, we can show that for sufficiently small value of $\epsilon$, the best of two candidate cuts is a 
$(\frac{1}{2}+\Omega(\eps^2))$-approximation for MAX-CUT on $G$ with  probability $1-\delta$.

\medskip
Next, we prove the approximation guarantee and the space complexity of our algorithm for arbitrary order streams. The algorithm is formally given in Algorithm~\ref{alg:arbitrary-order}.

\begin{algorithm}[h!]
\begin{algorithmic}[1]
\Require Graph $G$ as an arbitrary-order stream of edges, an $\eps$-accurate oracle $\mathcal{O} \rightarrow \{-1, 1\}$
\Ensure Estimate of the MAX-CUT value of $G$
\LeftComment{\textbf{Preprocessing phase}}
\State Initialize $e(V^+,V^-) \leftarrow 0$.
\State Set $h_1,\dots,h_k:[n] \rightarrow [w]$ be $2$-wise independent hash functions, with $k=\lceil \frac{e}{\eps^{7}\delta^3}\rceil$ and $w = \lceil \ln \frac{8\beta}{\eps^4\delta^4}\rceil$ (where $\beta$ is a sufficiently large universal constant).
\State Initialize $\CM[V^+]$ and $\CM[V^-]$ to zero.
\LeftComment{\textbf{Streaming phase}}
\ForEach{edge $(u,v)$ in the stream}
\State $Y_u = \mathcal{O}(u), Y_v = \mathcal{O}(v)$
\IIf{$Y_u \neq Y_v$} $e(V^+,V^-) \leftarrow e(V^+,V^-) + 1$
\ForEach{$\ell \in [k]$}
\State $\CM[V^{\mathcal{O}(u)}][\ell,h_\ell(v)] \leftarrow \CM[V^{\mathcal{O}(u)}][\ell,h_\ell(v)] + 1$
\State $\CM[V^{\mathcal{O}(v)}][\ell,h_\ell(u)] \leftarrow \CM[V^{\mathcal{O}(v)}][\ell,h_\ell(u)] + 1$
\EndFor
\EndFor

\State In parallel, uniformly sample $\frac{\beta}{\delta^3\eps^4}$ edges $F$ in the stream via reservoir sampling~\cite{vitter1985random}. 

\LeftComment{\textbf{Postprocessing phase}}
\ForEach{$v\in V$}
\State $f_v^+ \leftarrow \min_{\ell\in [k]} \CM[V^+][\ell,h_\ell (v)]$
\State $f_v^- \leftarrow \min_{\ell\in [k]} \CM[V^-][\ell,h_\ell (v)]$
\EndFor
\State $\tilde{H} := \bigcup_{(u,v)\in F} \{u,v\}$
\State $\tilde{H}^+ \leftarrow \{v\in \tilde{H}: Y_v = 1\}, \tilde{H}^- \leftarrow V \setminus \tilde{H}^+$
\State $\tilde{e}(\tilde{L}^+,\tilde{L}^-) \leftarrow e(V^+,V^-)  - \sum_{v\in \tilde{H}^+} f_v^- - \sum_{v\in \tilde{H}^-} f_v^+$
\State $\mathrm{ALG}_1 \leftarrow \tilde{e}(\tilde{L}^+,\tilde{L}^-) + \sum_{v\in \tilde{H}} \max\{f_v^-,f_v^+\}$
\State $\mathrm{ALG}_2 \leftarrow \sum_{v\in \tilde{H}} (f_v^- + f_v^+)$
\State \textbf{return} $\max\{\mathrm{ALG}_1,\mathrm{ALG}_2\}$

\end{algorithmic}
\caption{Estimating the MAX-CUT value in (insertion-only) arbitrary-order streams}
\label{alg:arbitrary-order}
\end{algorithm}

\begin{theorem}
\label{thm:arbitrary-order}
Let $\eps\in (0,\frac{1}{2}]$ and $\delta \in (0,1)$. 
Given an $\epsilon$-accurate oracle $\mathcal{O}$ for the MAX-CUT of $G$, 
there exists a single-pass $(\frac{1}{2} + \Omega(\eps^2))$-approximation algorithm for estimating the MAX-CUT value of $G$ in the insertion-only arbitrary-order streams. The algorithm uses $O( \frac{1}{\eps^{7}\delta^3}\ln\frac{1}{\eps \delta} )$ words of space. The approximation holds with probability at least $1-\delta$.
\end{theorem}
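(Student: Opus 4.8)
\textbf{Proof proposal for \Cref{thm:arbitrary-order}.}
The plan is to follow the same skeleton as the proof of \Cref{thm:random-order}, replacing the two exact bookkeeping gadgets (exact degrees inside $\tilde H$ and exact incident-edge counts to $\tilde L^\pm$) by their approximate counterparts: reservoir sampling to produce a small superset $\tilde H\supseteq H$, and two CountMin sketches $\CM[V^+],\CM[V^-]$ to estimate, for every vertex $v$, the quantities $e(v,V^+)$ and $e(v,V^-)$. The space bound is immediate: the sketches use $O(kw)=O(\frac{1}{\eps^7\delta^3}\ln\frac{1}{\eps\delta})$ words by \Cref{thm:CM}, the reservoir sample $F$ uses $O(\frac{1}{\eps^4\delta^3})$ words, and everything else is $O(1)$ words; summing gives the claimed bound. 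The bulk of the work is the approximation guarantee.

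For correctness I would proceed in four steps. First, reuse \Cref{lem:high-deg-sample}: since $F$ is still a uniform sample of $\frac{\beta}{\delta^3\eps^4}$ edges (reservoir sampling guarantees this), $\tilde H=\bigcup_{(u,v)\in F}\{u,v\}$ contains every high-degree vertex with probability $\ge 1-\delta/2$; also $|\tilde H|\le \frac{2\beta}{\delta^3\eps^4}$. Second, control the sketch error: each $f_v^{\pm}$ satisfies $f_v^{\pm}\ge e(v,V^{\pm})$ deterministically, and with the chosen $k,w$ we get $f_v^{\pm}\le e(v,V^{\pm})+\eps^7\delta^3 m$ with probability $\ge 1-\frac{\eps^4\delta^4}{8\beta}$ for a fixed $v$ (\Cref{thm:CM}). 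We only need this to hold simultaneously over the $\le |\tilde H|$ vertices in $\tilde H$ (and, hypothetically, over the $O(c/\eps^2)$ true high-degree vertices), so a union bound over $\poly(1/\eps,1/\delta)$ vertices keeps the total failure probability at $\le\delta/2$; choosing $\beta$ large enough absorbs the constants. Hence every estimate used by the algorithm is correct up to an additive $\eta\cdot m$ error with $\eta=\eps^7\delta^3$, and there are at most $|\tilde H|=\poly(1/\eps,1/\delta)$ such estimates, so the \emph{total} additive error accumulated in $\mathrm{ALG}_1,\mathrm{ALG}_2$ and in $\tilde e(\tilde L^+,\tilde L^-)$ is at most $|\tilde H|\cdot \eps^7\delta^3 m = O(\eps^3 m)\le O(\eps^3)\cdot\opt$.

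Third, run the deterministic combinatorial argument of \Cref{thm:best-approx} with $\tilde H$ in place of $H$ (note $\tilde H\supseteq H$ still partitions $V$ into a ``dense-candidate'' part and a low-degree remainder $\tilde L=V\setminus\tilde H$, and every vertex of $\tilde L$ has degree $<\theta$ since $H\subseteq\tilde H$): either $e(\tilde H,\tilde L)\ge(\tfrac12+\eps^2)\opt$, in which case $\mathrm{ALG}_2$ — which equals $e(\tilde H,\tilde L)$ up to the additive $O(\eps^3)\opt$ slack — already beats $\tfrac12$; or else \Cref{lem:threshold-low-degree-graph} and \Cref{lem:lower-bound-opt-L} apply verbatim to $G[\tilde L]$ and $G[\tilde H]$ (their degree/size conditions only use $\theta$ and $m$, not whether the set is exactly $H$), giving $e(\tilde L^+,\tilde L^-)\ge(\tfrac12+\eps^2)\opt_{\tilde L}$ and $\opt_{\tilde L}>(1-\alpha-\eps^4)\opt$, and the greedy bound $\sum_{v\in\tilde H}\max\{e(v,\tilde L^+),e(v,\tilde L^-)\}\ge\tfrac12 e(\tilde H,\tilde L)$ then yields $\mathrm{ALG}_1\ge(\tfrac12+\tfrac{\eps^2}{16})\opt$ exactly as in \Cref{thm:best-approx}. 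Fourth, subtract the accumulated sketch slack: since every term entering $\mathrm{ALG}_1$ is a sum of at most $|\tilde H|$ sketch estimates and $\tilde e(\tilde L^+,\tilde L^-)=e(V^+,V^-)-\sum_{v\in\tilde H^+}f_v^- -\sum_{v\in\tilde H^-}f_v^+$ is off from $e(\tilde L^+,\tilde L^-)$ by at most $|\tilde H|\cdot\eps^7\delta^3 m$, the reported value $\max\{\mathrm{ALG}_1,\mathrm{ALG}_2\}$ is within $O(\eps^3)\opt$ of the corresponding exact quantity, so for $\eps$ below an absolute constant the $\tfrac{\eps^2}{16}$ advantage dominates and we still get $(\tfrac12+\Omega(\eps^2))\opt$. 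A union bound over the two $\delta/2$-failure events (sampling, sketching) finishes the proof.

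The main obstacle I anticipate is making the error accounting genuinely tight: one must verify that $\tilde e(\tilde L^+,\tilde L^-)$ is not only close to $e(\tilde L^+,\tilde L^-)$ but that the decomposition $e(V^+,V^-)=e(\tilde L^+,\tilde L^-)+\sum_{v\in\tilde H^+}e(v,\tilde L^-)+\sum_{v\in\tilde H^-}e(v,\tilde L^+)+(\text{edges inside }\tilde H)$ is being used correctly — in particular that edges with both endpoints in $\tilde H$ are handled consistently between $\tilde e(\tilde L^+,\tilde L^-)$ and the greedy term, and that the one-sided nature of the CountMin error ($f_v^{\pm}\ge e(v,V^{\pm})$, overcounting edges inside $\tilde H$) does not break the lower bounds on $\mathrm{ALG}_1$. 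A second, more quantitative obstacle is simply choosing the CountMin parameters $k,w$ and the sample size so that $|\tilde H|\cdot(\text{per-estimate error})=O(\eps^3 m)$ while the union-bound failure probability stays $\le\delta$; this is where the somewhat delicate polynomial dependence $\eps^{-7}\delta^{-3}\ln(1/\eps\delta)$ comes from, and it needs to be reconciled with the global assumption $m=\Omega(\eps^{-11}\delta^{-7})$ so that the additive $\eta m$ terms are dominated by $\opt\ge m/2$.
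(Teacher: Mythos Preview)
Your proposal is correct and follows essentially the same approach as the paper's proof: reservoir sampling to obtain $\tilde H\supseteq H$ (the paper restates \Cref{lem:high-deg-sample} as a separate lemma for reservoir sampling, with an identical argument), CountMin error control via \Cref{thm:CM} plus a union bound over $|\tilde H|$ vertices, invoking \Cref{thm:best-approx} with $\tilde H,\tilde L$ in place of $H,L$, and finally absorbing the accumulated $O(\eps^3 m)$ sketch slack into the $\eps^2/16$ advantage. The two obstacles you flag are exactly the ones the paper works through explicitly: it tracks the edges inside $\tilde H$ via the bound $e(v,\tilde H^{\pm})\le |\tilde H|\le \frac{2\beta}{\delta^3\eps^4}$ and shows that the one-sided CountMin overestimate makes $\tilde e(\tilde L^+,\tilde L^-)\le e(\tilde L^+,\tilde L^-)$ while $\max\{f_v^+,f_v^-\}\ge \max\{e(v,\tilde L^+),e(v,\tilde L^-)\}$, so the signs work out and $\mathrm{ALG}_1\ge A_1-\Theta(\eps^3 m)$, $\mathrm{ALG}_2\ge A_2$.
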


\begin{lemma}
\label{lem:high-deg-reservoir}
Let $\delta \in (0,1)$.
Then
 $\tilde{H}$ contains all high-degree vertices $H$ in $G$ with probability at least $1-\frac{\delta}{2}$.
\end{lemma}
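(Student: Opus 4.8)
\textbf{Proof proposal for \cref{lem:high-deg-reservoir}.}
The plan is to mirror the argument of \cref{lem:high-deg-sample}, with the key change that reservoir sampling produces a uniformly random subset of $\frac{\beta}{\delta^3\eps^4}$ edges of the stream rather than a prefix. Fix a high-degree vertex $v$, so $\deg(v)\geq \frac{\eps^2 m}{c}$ with $c=\frac{80}{\delta}$. We want to show that with high probability at least one of the $\deg(v)$ edges incident to $v$ is among the sampled edges $F$; if so, then $v\in\tilde H$ by the definition $\tilde H=\bigcup_{(u,v)\in F}\{u,v\}$. The event $v\notin\tilde H$ is exactly the event that $F$ avoids all $\deg(v)$ edges incident to $v$.

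First I would compute $\Pr[v\notin\tilde H]$ directly. Since $F$ is a uniformly random $s$-subset of the $m$ stream edges with $s=\frac{\beta}{\delta^3\eps^4}$, we have
\begin{align*}
\Pr[v\notin\tilde H]=\frac{\binom{m-\deg(v)}{s}}{\binom{m}{s}}=\prod_{j=0}^{s-1}\frac{m-\deg(v)-j}{m-j}\leq\left(1-\frac{\deg(v)}{m}\right)^{s}\leq\exp\!\left(-\frac{s\deg(v)}{m}\right)\leq\exp\!\left(-\frac{s\eps^2}{c}\right).
\end{align*}
Substituting $s=\frac{\beta}{\delta^3\eps^4}$ and $c=\frac{80}{\delta}$ gives $\Pr[v\notin\tilde H]\leq\exp(-\frac{\beta\delta}{80\,\eps^2\delta^3})=\exp(-\frac{\beta}{80\,\eps^2\delta^2})$, which is extremely small for $\beta$ a sufficiently large universal constant. (If one prefers to avoid hypergeometric manipulations and stay uniform with \cref{lem:high-deg-sample}, an alternative is to define indicators $X_i$ for whether edge $e_i$ incident to $v$ lands in $F$, note $\E[X_i]=s/m$, observe the $X_i$ are negatively associated, bound $\Var[X]\leq\E[X]=s\deg(v)/m$, and apply Chebyshev to $\Pr[X\leq 0]$; this yields a $\poly(\eps,\delta)$ bound per vertex, which also suffices.)

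Then I would take a union bound over all high-degree vertices. By the definition of high degree, $\sum_{v\in H}\deg(v)\leq 2m$ forces $|H|\leq\frac{2m}{\eps^2 m/c}=\frac{2c}{\eps^2}$. Hence
\begin{align*}
\Pr[H\not\subseteq\tilde H]\leq|H|\cdot\max_{v\in H}\Pr[v\notin\tilde H]\leq\frac{2c}{\eps^2}\cdot\exp\!\left(-\frac{\beta}{80\,\eps^2\delta^2}\right)\leq\frac{\delta}{2},
\end{align*}
where the last inequality holds because the exponential decay dominates the $\poly(1/\eps,1/\delta)$ prefactor once $\beta$ is chosen large enough (recall $c=\frac{80}{\delta}$). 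This establishes that $\tilde H\supseteq H$ with probability at least $1-\frac{\delta}{2}$. The only mild subtlety—and the step I would be most careful about—is justifying the tail bound for the reservoir sample cleanly: either commit to the exact hypergeometric computation above, or invoke negative association of the sampling indicators to legitimately apply the Chebyshev/Chernoff-type bound exactly as in \cref{lem:high-deg-sample}. Everything else is a routine union bound using the cardinality estimate $|H|\leq 2c/\eps^2$.
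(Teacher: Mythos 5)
Your proof is correct, and it takes a genuinely cleaner route than the paper's. The paper mirrors \cref{lem:high-deg-sample} exactly: it introduces indicator variables $X_i$ for whether each edge incident to $v$ is sampled, observes they are negatively correlated so $\Var[X]\le \sum_i\Var[X_i]\le \beta\deg(v)/(\delta^3\eps^4 m)$, and applies Chebyshev to $\Pr[X\le 0]$, obtaining the polynomial tail $\Pr[v\notin\tilde H]\le \delta^3\eps^2 c/\beta$ per vertex, which after the union bound over $|H|\le 2c/\eps^2$ gives $2c^2\delta^3/\beta\le\delta/2$. Your main argument instead computes $\Pr[v\notin\tilde H]$ exactly as a hypergeometric ratio $\binom{m-\deg(v)}{s}/\binom{m}{s}$ and bounds it by $\exp(-s\eps^2/c)$, an exponentially small tail, which trivially survives the same union bound once $\beta$ is a sufficiently large constant (since $\beta/(\eps^2\delta^2)$ dominates $\log(1/\eps)+\log(1/\delta)$ over the whole parameter range $\eps\le 1/2$, $\delta<1$). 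Your bound is sharper per vertex; the paper's buys nothing quantitatively here but keeps the argument textually identical to the random-prefix version in \cref{lem:high-deg-sample}. You also correctly flag the alternative of replicating the paper's Chebyshev-with-negative-correlation argument verbatim, which is indeed what the paper does. Both are valid; no gap.
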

\begin{proof}
    The proof is basically similar to that of \cref{lem:high-deg-sample}.
    Suppose that $v$ is a high-degree vertex in $G$, i.e., $\deg(v) \geq \frac{\eps^2 m}{c}$. 
    For each edge $e_i$ incident to $v$, define an indicator random variable 
    \begin{equation*}
            X_{i}= \begin{cases}1, & \text { if } e_i\in F \\ 0, & \text { otherwise }\end{cases}
    \end{equation*}
    So, $\E[X_i] = \Pr[X_i=1] = \frac{\beta}{\delta^3\eps^4 m}$. We define $X:= \sum_{i\in [\deg(v)]} X_i$ to denote the number of edges incident to $v$ that are sampled by the end of the stream. Then, $\E[X] = \sum_{i\in [\deg(v)]} \E[X_i] = \frac{\beta\deg(v)}{\delta^3\eps^4 m}$. 

    Next, we bound $\Var[X]$. For any $i\in [\deg(v)]$, $\Var[X_i] = \E[X^2_i] - (\E[X_i])^2 = \E[X_i] - (\E[X_i])^2 = \frac{\beta}{\delta^3\eps^4 m} - (\frac{\beta}{\delta^3\eps^4 m})^2 = \frac{\beta}{\delta^3\eps^4 m}-\frac{\beta^2}{\delta^6\eps^8 m^2}$.
    Since $(X_i)_{i\in [\deg(v)]}$ are negatively correlated, $\Var[X] \le \sum_{i\in [\deg(v)]}\Var[X_i] = \deg(v)\cdot \left(\frac{\beta}{\delta^3\eps^4 m}-\frac{\beta^2}{\delta^6\eps^8 m^2}\right) \leq \frac{\beta\deg(v)}{\delta^3\eps^4 m}$.
    Therefore, by Chebyshev's inequality, 
    \begin{align*}
        \Pr[v\notin \tilde{H}] = \Pr[X\leq 0] 
        = \Pr\left[X \leq \E[X]- \frac{\beta\deg(v)}{\delta^3\eps^4 m}\right]
        &\leq \frac{\Var[X]\cdot \delta^6\eps^{8} m^2}{\beta^2\deg^2(v)}\\
        &\leq \frac{\delta^3\eps^4 m}{\beta\cdot \deg(v)} &&\rhd \Var[X] \leq \frac{\beta\deg(v)}{\delta^3\eps^4 m}\\
        &\leq \frac{\delta^3 \eps^2 c}{\beta}. &&\rhd \deg(v)\geq \frac{\eps^2 m}{c}
    \end{align*}
    By our definition of high-degree vertices, there are at most $\frac{2m}{\eps^2 m / c} =\frac{2c}{\eps^2}$ high-degree vertices in $G$.
    By union bound, with probability at least $1-\frac{2c}{\eps^2}\cdot \frac{\delta^3 \eps^2 c}{\beta} = 1-\frac{2c^2\delta^3}{\beta} \geq 1-\frac{\delta}{2}$ (since $c=\frac{80}{\delta}$ and $\beta$ sufficiently large), $\tilde{H}$ contains all high-degree vertices in $G$.
\end{proof}

\begin{lemma}
\label{lem:error-bound-arbitrary}
    Let $\eps \in (0,\frac{1}{2}]$ and $\delta \in (0,1)$.
    For all vertices $v\in \tilde{H}$, (1) $f_v^+ \geq e(v,V^+)$ and $f_v^- \geq e(v,V^-)$. (2) with probability at least $1-\frac{\delta}{4}$, $f_v^+ \leq e(v,V^+) + 2\eps^{7} \delta^3 m$ and $f_v^- \leq e(v,V^-) + 2\eps^{7} \delta^3 m$.
\end{lemma}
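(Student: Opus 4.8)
The plan is to treat the two CountMin sketches $\CM[V^+]$ and $\CM[V^-]$ separately and apply the CountMin guarantee of \cref{thm:CM} to each. First I would identify, for a fixed vertex $v$, exactly which stream increments land in $\CM[V^+]$ at position $h_\ell(v)$: whenever an edge $(u,v)$ (or $(v,u)$) arrives with $\mathcal{O}(u)=+1$, the counter $\CM[V^+][\ell,h_\ell(v)]$ is incremented. Hence, summing over the whole stream, if we think of $\CM[V^+]$ as a CountMin sketch over the ``item universe'' $V$ where item $v$ receives frequency $e(v,V^+)$, then $C[\ell,h_\ell(v)]=e(v,V^+)+\big(\text{collision mass from other }v'\big)$. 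The total mass in $\CM[V^+]$ is $\sum_{v'} e(v',V^+) = e(V^+,V) \le 2m$ (each edge contributes to at most two such counters — one per endpoint — so in fact the stream feeding $\CM[V^+]$ has length at most $2m$). Part (1) — the one-sided bound $f_v^+\ge e(v,V^+)$ and $f_v^-\ge e(v,V^-)$ — is then immediate from the first half of \cref{thm:CM}, since counters only ever increase and the min over rows of a sum of nonnegative contributions dominates the true count.

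For part (2), I would invoke the expectation bound $\E[C[\ell,h_\ell(v)]] \le e(v,V^+) + \frac{(\text{total mass})}{w} \le e(v,V^+) + \frac{2m}{w}$ from \cref{thm:CM}, apply Markov to the nonnegative overcount $C[\ell,h_\ell(v)]-e(v,V^+)$ to get that a single row overcounts by more than $\frac{2e\,m}{w}$ — wait, more cleanly: with the chosen $k=\lceil e/(\eps^7\delta^3)\rceil$ rows and $w=\lceil\ln(8\beta/(\eps^4\delta^4))\rceil$ columns, the standard CountMin analysis (take $\eps' = \eps^7\delta^3/2$ playing the role of the accuracy parameter, using total mass $2m$ rather than $m$) gives $f_v^+ \le e(v,V^+) + \eps^7\delta^3 m$ with failure probability at most $(\eps^4\delta^4)/(8\beta) \cdot (\text{const})$ per vertex — i.e. exponentially small in $w$. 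Concretely: $\Pr[C[\ell,h_\ell(v)] - e(v,V^+) \ge \eps^7\delta^3 m] \le \frac{2m/w}{\eps^7\delta^3 m}$, and by independence of the $k$ hash functions the probability that \emph{every} row overcounts by this much is at most that quantity raised to the $k$-th power, which by the choice of $k,w$ is at most $\frac{\delta^4\eps^4}{8\beta}$. Then a union bound over the (at most $\frac{2c}{\eps^2} = \frac{160}{\eps^2\delta}$ relevant, or more conservatively $|\tilde H| \le \frac{2\beta}{\eps^4\delta^3}$) vertices in $\tilde H$ pushes the total failure probability below $\frac{\delta}{4}$; the factor $2$ in ``$2\eps^7\delta^3 m$'' in the statement absorbs the use of total mass $\le 2m$ and any rounding slack in $k,w$. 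The identical argument applied to $\CM[V^-]$ and a further union bound (splitting $\delta/4$ into $\delta/8 + \delta/8$) finishes the claim for both $f_v^+$ and $f_v^-$ simultaneously.

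The main obstacle — really the only subtle point — is bookkeeping the total mass and the collision probability correctly: one must notice that $\CM[V^+]$ receives \emph{two} increments per edge in the worst case (one for each endpoint whose neighbor is in $V^+$), so the relevant ``$m$'' in \cref{thm:CM} is replaced by something $\le 2m$, and then check that the specific constants $k=\lceil e/(\eps^7\delta^3)\rceil$, $w=\lceil\ln(8\beta/(\eps^4\delta^4))\rceil$ were chosen precisely so that $k$ independent rows each failing with probability $\approx \eps^{-4}\delta^{-4}\beta^{-1}$-ish... actually so that $(e^{-1})^{?}$... — the cleanest route is to match parameters to the ``$\eps,\delta$'' form of \cref{thm:CM} directly: set its accuracy to $\eps_{\mathrm{CM}} = \tfrac12\eps^7\delta^3$ (then $k = \lceil e/\eps_{\mathrm{CM}}\rceil$ matches up to the factor-$2$ slack) and its per-query failure to $\delta_{\mathrm{CM}} = \eps^4\delta^4/(8\beta)$ (then $w = \lceil \ln(1/\delta_{\mathrm{CM}})\rceil$ matches), so \cref{thm:CM} directly yields $f_v^+ \le e(v,V^+) + \eps_{\mathrm{CM}}\cdot(2m) = e(v,V^+)+\eps^7\delta^3 m \le e(v,V^+)+2\eps^7\delta^3 m$ with probability $\ge 1-\delta_{\mathrm{CM}}$, and union-bounding over $\le |\tilde H| \le \frac{2\beta}{\eps^4\delta^3}$ vertices and the two sketches gives total failure $\le 2\cdot\frac{2\beta}{\eps^4\delta^3}\cdot\frac{\eps^4\delta^4}{8\beta} = \frac{\delta}{2} \cdot \frac12$... which is $\le \frac{\delta}{4}$ as claimed (with room to spare by taking $\beta$ large). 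Everything else is the routine application of \cref{thm:CM} and a union bound.
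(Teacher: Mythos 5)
Your proof takes the same approach as the paper's: both apply \cref{thm:CM} directly (part (1) from monotonicity, part (2) from the $(\eps,\delta)$-form of the CountMin guarantee with $\eps_{\mathrm{CM}}=\eps^7\delta^3$ and $\delta_{\mathrm{CM}}=\eps^4\delta^4/(8\beta)$) and then union-bound over $\tilde H$. Your observation that the total mass fed into $\CM[V^+]$ is at most $2m$ — because each edge can contribute two increments, one per endpoint — is the correct explanation for the factor $2$ in $2\eps^7\delta^3 m$; the paper asserts this bound without spelling it out, so this is a genuine clarification on your part. The parameter-matching would be cleaner with $\eps_{\mathrm{CM}}=\eps^7\delta^3$ (which matches $k$ exactly) rather than $\tfrac12\eps^7\delta^3$ (which would require twice as many rows), combined with the $\le 2m$ mass bound.

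One arithmetic slip to flag: your final union bound $2\cdot\frac{2\beta}{\eps^4\delta^3}\cdot\frac{\eps^4\delta^4}{8\beta}$ evaluates to $\delta/2$, not $\delta/4$, and since $\beta$ cancels, ``taking $\beta$ large'' does not give any room. (The paper's own proof has a closely related lacuna — it treats the $f_v^+$ and $f_v^-$ failures as one event without the factor $2$.) This is a constant-factor issue, repaired by replacing the $8\beta$ inside $w$ by $16\beta$, and does not affect the structure of the argument.
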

\begin{proof}
    (1) By \cref{thm:CM}, we have $f_v^+ \geq e(v,V^+)$ and $f_v^- \geq e(v,V^-)$ for all $v \in V$.

    (2) By \cref{thm:CM}, we have $f_v^+ \leq e(v,V^+) + 2\eps^{7} \delta^3 m$ and $f_v^- \leq e(v,V^-) + 2\eps^{7} \delta^3 m$ with probability at least $1-\frac{\eps^4\delta^4}{8\beta}$ for any $v \in V$. Since $|\tilde{H}|\leq \frac{2\beta}{\delta^3\eps^4}$, by union bound, with probability at least $1-\frac{\eps^4\delta^4}{8\beta} \cdot \frac{2\beta}{\delta^3\eps^4} = 1-\frac{\delta}{4}$, we have $f_v^+ \leq e(v,V^+) + 2\eps^{7} \delta^3 m$ and $f_v^- \leq e(v,V^-) + 2\eps^{7} \delta^3 m$ for all $v\in \tilde{H}$.
\end{proof}

\begin{proof}[Proof of \cref{thm:arbitrary-order}]
    In \cref{alg:arbitrary-order}, we use CountMin sketch with $k=\lceil \frac{e}{\eps^{7} \delta^3}\rceil$ and $w = \lceil \ln \frac{8\beta}{\eps^4\delta^4}\rceil$ to estimate the number of incident edges of high-degree vertices to with respect to $V^+$ and $V^-$, which takes $O(\frac{1}{\eps^{7}\delta^3}\ln\frac{1}{\eps^4 \delta^4}) = O( \frac{1}{\eps^{7}\delta^3}\ln\frac{1}{\eps \delta} )$ words of space, by \cref{thm:CM}. 
    Also, we use reservoir sampling to uniformly sample $O(\frac{1}{\delta^3\eps^4})$ edges, which takes $O(\frac{1}{\delta^3\eps^4})$ words of space. 
    Therefore, the total space complexity of \cref{alg:arbitrary-order} is $O( \frac{1}{\eps^{7}\delta^3}\ln\frac{1}{\eps \delta} )$ words.
    
    Next, we show the approximation guarantee of \cref{alg:arbitrary-order}.
    Recall that in arbitrary order streams, we cannot detect $H$ by the end of stream, and we have $\tilde{H}$ instead.
    Let $A_1 := e(\tilde{L}^+,\tilde{L}^-) + \sum_{v\in \tilde{H}} \max \{e(v,\tilde{L}^-),e(v,\tilde{L}^+) \}$ denote the value of the cut $(\tilde{C},V\setminus \tilde{C})$ obtained by running \cref{alg:low-deg} on $G[\tilde{L}]$ and then assigning the vertices in $\tilde{H}$ in a greedy manner.
    Let $A_2 := \sum_{v\in \tilde{H}} (e(v,\tilde{L}^-) + e(v,\tilde{L}^+))$ denote the value of the cut $(\tilde{H},\tilde{L})$. Suppose that the best of $(\tilde{C},V\setminus \tilde{C})$ and $(\tilde{H},\tilde{L})$ cuts is a $(\frac{1}{2}+\frac{\eps^2}{16})$-approximation for the MAX-CUT value of $G$.
    Recall that $\mathrm{ALG}_1 = \tilde{e}(\tilde{L}^+,\tilde{L}^-) + \sum_{v\in \tilde{H}} \max\{f_v^-,f_v^+\}$ and $\mathrm{ALG}_2 = \sum_{v\in \tilde{H}} (f_v^- + f_v^+)$.
    In the following, we show that $\mathrm{ALG}_1$ and $\mathrm{ALG}_2$ are good approximations of $A_1$ and $A_2$, respectively. Therefore, $\max\{\mathrm{ALG}_1,\mathrm{ALG}_2\}$ returned by \cref{alg:arbitrary-order} is also a strictly better than $\frac{1}{2}$-approximation for the MAX-CUT value of $G$.

    Since $V^+ = \tilde{H}^+ \cup \tilde{L}^+$ and $V^- = \tilde{H}^- \cup \tilde{L}^-$, we have $e(v,V^+) = e(v,\tilde{H}^+) + e(v,\tilde{L}^+)$ and $e(v,V^-) = e(v,\tilde{H}^-) + e(v,\tilde{L}^-)$ for any vertex $v\in V$. By \cref{lem:error-bound-arbitrary}, with probability at least $1-\frac{\delta}{4}$, for all vertices $v\in \tilde{H}$, we have 
    \begin{align*}
        e(v,\tilde{H}^+) + e(v,\tilde{L}^+) \leq f_v^+ \leq e(v,\tilde{H}^+) + e(v,\tilde{L}^+) + 2\eps^{7}\delta^3 m,\\
        e(v,\tilde{H}^-) + e(v,\tilde{L}^-) \leq f_v^- \leq e(v,\tilde{H}^-) + e(v,\tilde{L}^-) + 2\eps^{7}\delta^3m.
    \end{align*}
    
    Since $|\tilde{H}|\leq \frac{2\beta}{\delta^3\eps^4}$,
    we have $0 \leq e(v,\tilde{H}^+),e(v,\tilde{H}^-) \leq \frac{2\beta}{\delta^3\eps^{4}}$. Therefore, we have
    \begin{align*}
        e(v,\tilde{L}^+) \leq f_v^+ \leq \frac{2\beta}{\delta^3\eps^{4}} +2\eps^{7}\delta^3 m+ e(v,\tilde{L}^+),\\
        e(v,\tilde{L}^-) \leq f_v^- \leq \frac{2\beta}{\delta^3\eps^{4}} +2\eps^{7}\delta^3 m + e(v,\tilde{L}^-).
    \end{align*}
    Then we have
    \begin{align*}
        0 \leq f_v^+ - e(v,\tilde{L}^+)\leq \frac{2\beta}{\delta^3\eps^{4}} +2\eps^{7}\delta^3 m && \text{and} && 0 \leq f_v^- - e(v,\tilde{L}^-)\leq \frac{2\beta}{\delta^3\eps^{4}} +2\eps^{7}\delta^3 m,
    \end{align*}
    and
    \begin{align*}
        \max\{e(v,\tilde{L}^+),e(v,\tilde{L}^-) \}  \leq \max\{ f_v^+,f_v^-\} \leq \frac{2\beta}{\delta^3\eps^{4}} + 2\eps^7 
 \delta^3 m + \max\{e(v,\tilde{L}^+),e(v,\tilde{L}^-) \}.
    \end{align*}

    Note that 
    \begin{align*}
        e(\tilde{L}^+,\tilde{L}^-) &= e(V^+,V^-) - e(\tilde{H}^+,\tilde{H}^-) - \sum_{v\in \tilde{H}^+}e(v,\tilde{L}^-) - \sum_{v\in \tilde{H}^-}e(v,\tilde{L}^+),\\
        \tilde{e}(\tilde{L}^+,\tilde{L}^-) &= e(V^+,V^-)  - \sum_{v\in \tilde{H}^+} f_v^- - \sum_{v\in \tilde{H}^-} f_v^+.
    \end{align*}
    We have 
    \begin{align*}
        e(\tilde{L}^+,\tilde{L}^-) - \tilde{e}(\tilde{L}^+,\tilde{L}^-) &= \sum_{v\in \tilde{H}^+} (f_v^- - e(v,\tilde{L}^-)) + \sum_{v\in \tilde{H}^-} (f_v^+ - e(v,\tilde{L}^+)) - e(\tilde{H}^+,\tilde{H}^-)\\
        &\leq \frac{2\beta}{\delta^3\eps^{4}}\cdot \left(\frac{2\beta}{\delta^3\eps^{4}} + 2\eps^7 \delta^3 m\right) + \frac{2\beta}{\delta^3\eps^{4}}\cdot \left(\frac{2\beta}{\delta^3\eps^{4}} + 2\eps^7  \delta^3 m\right) - 0
        = \frac{8\beta^2}{\delta^6\eps^8} + 8\beta \eps^3 m.
    \end{align*}

    Therefore,
    \begin{align*}
        A_1 - \mathrm{ALG}_1 &= (e(\tilde{L}^+,\tilde{L}^-) - \tilde{e}(\tilde{L}^+,\tilde{L}^-)) + \sum_{v\in\tilde{H}} (\max\{e(v,\tilde{L}^-), e(v,\tilde{L}^+)\} - \max\{f_v^-,f_v^+\})\\
        &\leq \left(\frac{8\beta^2}{\delta^6\eps^8} + 8\beta \eps^3 m \right)+ \frac{2\beta}{\delta^3\eps^4} \cdot 0
        = \frac{8\beta^2}{\delta^6\eps^8} + 8\beta \eps^3 m.
    \end{align*}
    So, $\mathrm{ALG}_1 \geq A_1 - (\frac{8\beta^2}{\delta^6\eps^8} + 8\beta \eps^3 m) = A_1 - \Theta(\eps^3 m)$. 

    Similarly,
    \begin{align*}
        A_2 - \mathrm{ALG}_2 = \sum_{v\in \tilde{H}} ((  e(v,\tilde{L}^-) - f_v^-) + (  e(v,\tilde{L}^+) - f_v^+))
        \leq \frac{2\beta}{\delta^3 \eps^4} \cdot (0+0) = 0.
    \end{align*}
    So, $\mathrm{ALG}_2 \geq A_2$. 
    
    Since we assume that $\max\{A_1,A_2\}$ is a $(\frac{1}{2}+\frac{\eps^2}{16})$-approximation for the MAX-CUT value of $G$, we have $\mathrm{ALG}:= \max\{\mathrm{ALG}_1, \mathrm{ALG}_2\} \geq \max\{A_1,A_2\}- \Theta(\eps^3 m)  \geq (\frac{1}{2}+\frac{\eps^2}{16})\cdot \opt - \Theta(\eps^3 \cdot \opt) = (\frac{1}{2}+\Omega(\eps^2))\cdot \opt$.

    Finally, it remains to show that the best of $(\tilde{C},V\setminus \tilde{C})$ and $(\tilde{H},\tilde{L})$ cuts is a $(\frac{1}{2}+\frac{\eps^2}{16})$-approximation for the MAX-CUT value of $G$. This directly follows from \cref{thm:best-approx} (with failure probability $\frac{\delta}{4}$), by substituting $H$ and $L$ with ${\tilde{H}}$ and ${\tilde{L}}$, respectively.
    Together with \cref{lem:high-deg-reservoir}, \cref{lem:error-bound-arbitrary}, and applying union bound, this concludes the proof.
\end{proof}

\subsection{Extension to Dynamic Streams}\label{sec:dynamic}
\paragraph{Overview of the Algorithm.} 
Our algorithm in dynamic streams is basically similar to \cref{alg:arbitrary-order}. The only difference is that
to compute a small set $\tilde{H}$ at the end of the stream such that $\tilde{H} \supseteq H$, instead of reservoir sampling, we need to use $\ell_0$-sampling~\cite{jowhari2011tight}.

Next, we prove the approximation guarantee and the space complexity of our algorithm for dynamic streams. The algorithm is formally given in Algorithm~\ref{alg:dynamic}.

\begin{algorithm}[h!]
\begin{algorithmic}[1]
\Require Graph $G$ as a dynamic stream of edges, an $\eps$-accurate oracle $\mathcal{O} \rightarrow \{-1, 1\}$
\Ensure Estimate of the MAX-CUT value of $G$
\LeftComment{\textbf{Preprocessing phase}}
\State Initialize $e(V^+,V^-) \leftarrow 0$.
\State Set $h_1,\dots,h_k:[n] \rightarrow [w]$ be $2$-wise independent hash functions, with $k=\lceil \frac{e}{\eps^{7}\delta^3}\rceil$ and $w = \lceil \ln \frac{8\beta}{\eps^4\delta^4}\rceil$ (where $\beta$ is a sufficiently large universal constant).
\State Initialize $\CM[V^+]$ and $\CM[V^-]$ to zero. 
\LeftComment{\textbf{Streaming phase}}
\ForEach{item $((u,v), \Delta_{(u,v)})$ in the stream}
\State $Y_u = \mathcal{O}(u), Y_v = \mathcal{O}(v)$
\IIf{$Y_u \neq Y_v$} $e(V^+,V^-) \leftarrow e(V^+,V^-) + 1$
\ForEach{$\ell \in [k]$}
\State $\CM[V^{\mathcal{O}(u)}][\ell,h_\ell(v)] \leftarrow \CM[V^{\mathcal{O}(u)}][\ell,h_\ell(v)] + \Delta_{(u,v)}$
\State $\CM[V^{\mathcal{O}(v)}][\ell,h_\ell(u)] \leftarrow \CM[V^{\mathcal{O}(v)}][\ell,h_\ell(u)] + \Delta_{(u,v)}$
\EndFor
\EndFor
\State In parallel, maintain $\frac{\beta}{\delta^3\eps^4}$ $\ell_0$-samplers (with failure probability $\delta' = \frac{\delta^4\eps^4}{8\beta}$) to obtain $\frac{\beta}{\delta^3\eps^4}$ different edges $F$. \phantomsection \label{line:ell-0-sampling}

\LeftComment{\textbf{Postprocessing phase}}
\ForEach{$v\in V$}
\State $f_v^+ \leftarrow \min_{\ell\in [k]} \CM[V^+][\ell,h_\ell (v)]$
\State $f_v^- \leftarrow \min_{\ell\in [k]} \CM[V^-][\ell,h_\ell (v)]$
\EndFor
\label{line:tilde-H-dynamic}
\State $\tilde{H} := \bigcup_{(u,v)\in F} \{u,v\}$
\State $\tilde{H}^+ \leftarrow \{v\in \tilde{H}: Y_v = 1\}, \tilde{H}^- \leftarrow V \setminus \tilde{H}^+$
\State $\tilde{e}(\tilde{L}^+,\tilde{L}^-) \leftarrow e(V^+,V^-)  - \sum_{v\in \tilde{H}^+} f_v^- - \sum_{v\in \tilde{H}^-} f_v^+$
\State $\mathrm{ALG}_1 \leftarrow \tilde{e}(\tilde{L}^+,\tilde{L}^-) + \sum_{v\in \tilde{H}} \max\{f_v^-,f_v^+\}$
\State $\mathrm{ALG}_2 \leftarrow \sum_{v\in \tilde{H}} (f_v^- + f_v^+)$
\State \textbf{return} $\max\{\mathrm{ALG}_1,\mathrm{ALG}_2\}$

\end{algorithmic}
\caption{Estimating the MAX-CUT value in dynamic streams}
\label{alg:dynamic}
\end{algorithm}

\begin{theorem}
\label{thm:dynamic}
Let $\eps\in (0,\frac{1}{2}]$ and $\delta \in (0,1)$.
Given an $\epsilon$-accurate oracle $\mathcal{O}$ for the MAX-CUT of $G$, 
there exists a single-pass $(\frac{1}{2} + \Omega(\eps^2))$-approximation  algorithm for estimating the MAX-CUT value of $G$ in the dynamic streams. The algorithm uses $O( \frac{\log^2 n}{\eps^7\delta^3 }  \log \frac{1}{\eps\delta} )$ bits of space. 
The approximation holds with probability at least $1-\delta$.
\end{theorem}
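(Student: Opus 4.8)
The plan is to follow the proof of \cref{thm:arbitrary-order} almost verbatim: \cref{alg:dynamic} differs from \cref{alg:arbitrary-order} only in that the edge set $F$ (and hence the candidate set $\tilde H$) is produced by $\frac{\beta}{\delta^3\eps^4}$ independent $\ell_0$-samplers with per-sampler failure probability $\delta' = \frac{\delta^4\eps^4}{8\beta}$ instead of by reservoir sampling; everything downstream of ``we have a set $\tilde H \supseteq H$ together with CountMin estimates $f_v^\pm$'' is word-for-word identical. So I would (i) bound the space, (ii) re-prove $\tilde H \supseteq H$ with high probability using the $\ell_0$-samplers, (iii) invoke \cref{lem:error-bound-arbitrary} unchanged, and (iv) plug these into the inequality chain of \cref{thm:arbitrary-order} together with \cref{thm:best-approx}. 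The one place where the dynamic model matters is that, under the standard convention that every edge has net multiplicity $0$ or $1$ at all times, the CountMin counters and the $\ell_0$-sampler input vectors remain non-negative throughout the stream (strict turnstile), so \cref{thm:CM} and \cref{thm:l0-sampler} apply as stated and the $\ell_0$-samplers return uniform edges of the \emph{final} graph, correctly absorbing all deletions.

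\textbf{Space.} By \cref{thm:CM} with the stated $k = \lceil e\eps^{-7}\delta^{-3}\rceil$ and $w = \lceil \ln(8\beta\eps^{-4}\delta^{-4})\rceil$, the two CountMin sketches use $O(\eps^{-7}\delta^{-3}\log\tfrac{1}{\eps\delta})$ counters, each of value at most $2m < n^2$, i.e. $O(\eps^{-7}\delta^{-3}\log\tfrac{1}{\eps\delta}\cdot\log n)$ bits. The $\frac{\beta}{\delta^3\eps^4}$ $\ell_0$-samplers use $O(\log^2 n\cdot\log\tfrac{1}{\delta'}) = O(\log^2 n\cdot\log\tfrac{1}{\eps\delta})$ bits each by \cref{thm:l0-sampler}, for a total of $O(\delta^{-3}\eps^{-4}\log^2 n\cdot\log\tfrac{1}{\eps\delta})$ bits, and the counter $e(V^+,V^-)$ costs $O(\log n)$ bits. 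Summing these (and using $\log n \le \log^2 n$, $\eps^{-4}\le\eps^{-7}$) yields $O(\tfrac{\log^2 n}{\eps^7\delta^3}\log\tfrac{1}{\eps\delta})$ bits.

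\textbf{$\tilde H \supseteq H$.} This is the analogue of \cref{lem:high-deg-reservoir}. A union bound over the $\frac{\beta}{\delta^3\eps^4}$ samplers shows that with probability at least $1 - \frac{\beta}{\delta^3\eps^4}\cdot\delta' = 1-\tfrac{\delta}{8}$ none of them fails; condition on this, so each sampler independently returns a uniformly random edge of $E$. For a high-degree vertex $v$ (so $\deg(v)\ge\tfrac{\eps^2 m}{c}$ with $c=\tfrac{80}{\delta}$), each sampler's edge is incident to $v$ with probability $\tfrac{\deg(v)}{m}\ge\tfrac{\eps^2}{c}$, so by independence the probability that no sampled edge touches $v$ is at most $(1-\tfrac{\eps^2}{c})^{\beta/(\delta^3\eps^4)}\le\exp(-\tfrac{\beta}{c\delta^3\eps^2})$; alternatively one may rerun the Chebyshev computation of \cref{lem:high-deg-reservoir} verbatim, since independent indicators are in particular negatively correlated. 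As there are at most $\tfrac{2c}{\eps^2}$ high-degree vertices, a union bound gives $\tilde H\supseteq H$ with probability at least $1 - \tfrac{\delta}{8} - \tfrac{2c}{\eps^2}\exp(-\tfrac{\beta}{c\delta^3\eps^2}) \ge 1-\tfrac{\delta}{4}$ for a sufficiently large universal constant $\beta$.

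\textbf{Conclusion.} \cref{lem:error-bound-arbitrary} holds verbatim here, since its proof only uses \cref{thm:CM} and $|\tilde H|\le\tfrac{2\beta}{\delta^3\eps^4}$: $f_v^\pm\ge e(v,V^\pm)$ always, and with probability at least $1-\tfrac{\delta}{4}$, $f_v^\pm\le e(v,V^\pm)+2\eps^7\delta^3 m$ for all $v\in\tilde H$. Then the exact algebra in the proof of \cref{thm:arbitrary-order} gives $\mathrm{ALG}_1\ge A_1-\Theta(\eps^3 m)$ and $\mathrm{ALG}_2\ge A_2$, where $A_1$ is the value of the greedy-extension cut $(\tilde C,V\setminus\tilde C)$ built from $G[\tilde L]$ and $A_2$ is the value of $(\tilde H,\tilde L)$; and \cref{thm:best-approx} applied with $\tilde H,\tilde L$ in place of $H,L$ (failure probability $\tfrac{\delta}{4}$) gives $\max\{A_1,A_2\}\ge(\tfrac12+\tfrac{\eps^2}{16})\opt$. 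Since $\Theta(\eps^3 m)=\Theta(\eps^3)\cdot\opt$, this yields $\max\{\mathrm{ALG}_1,\mathrm{ALG}_2\}\ge(\tfrac12+\Omega(\eps^2))\opt$, and a union bound over the four failure events (sampler failures $\tfrac{\delta}{8}$, a missed high-degree vertex $\le\tfrac{\delta}{8}$, CountMin error $\tfrac{\delta}{4}$, \cref{thm:best-approx} $\tfrac{\delta}{4}$) totals at most $\delta$. There is no single hard step here; the care is in the bookkeeping---choosing $\delta'=\tfrac{\delta^4\eps^4}{8\beta}$ small enough that the union bound over all $\tfrac{\beta}{\delta^3\eps^4}$ $\ell_0$-samplers costs only $O(\delta)$, and observing that it is exactly the strict-turnstile structure of a graph stream that licenses reusing the insertion-only sketching guarantees of \cref{thm:CM} and \cref{thm:l0-sampler} in the presence of deletions.
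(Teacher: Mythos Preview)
Your proposal is correct and follows the paper's proof essentially verbatim: space via \cref{thm:CM} and \cref{thm:l0-sampler}, $\tilde H\supseteq H$ from the $\ell_0$-samplers, the CountMin error bound (restated in the appendix as \cref{lem:error-bound-CM}, identical to \cref{lem:error-bound-arbitrary}), and then the same $A_1,A_2$ versus $\mathrm{ALG}_1,\mathrm{ALG}_2$ algebra combined with \cref{thm:best-approx}. The only noteworthy deviation is in your treatment of $\tilde H\supseteq H$: the paper splits this into two lemmas, first bounding the probability that the $\tfrac{\beta}{\delta^3\eps^4}$ samplers return pairwise \emph{distinct} edges (\cref{lem:ell-0-sampling}, via a birthday-type union bound costing $\tfrac{\beta^2}{\delta^6\eps^8 m}$) and then running a Chebyshev argument (\cref{lem:high-deg-dynamic}), whereas you bypass distinctness entirely and use the sharper one-line bound $\Pr[v\notin\tilde H]\le(1-\tfrac{\eps^2}{c})^{\beta/(\delta^3\eps^4)}$ from independence of the samplers. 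Your route is cleaner and avoids an unnecessary appeal to the standing assumption $m=\Omega(\eps^{-11}\delta^{-7})$; both land at the same $1-\Theta(\delta)$ budget.
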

Note that \cref{thm:main_introduction} follows from \cref{thm:arbitrary-order} and \cref{thm:dynamic} by setting $\delta=1/3$. 
\paragraph{Proof Sketch of \cref{thm:dynamic}.}
The proof follows a similar structure to that of \cref{thm:arbitrary-order}. First, we show that $\ell_0$-sampling in dynamic streams can effectively replace reservoir sampling in insertion-only streams, allowing us to retrieve a set $\tilde{H} \supseteq H$ with high probability by the end of the stream. 

Next, we show that for all vertices $v \in \tilde{H}$, our estimates of the number of incident edges from $v$ to the sets $V^+$ and $V^-$ has additive error at most $\text{poly}(\epsilon, \delta) \cdot m$, with high probability. 

Finally, using arguments similar to those in the proof of \cref{thm:arbitrary-order}, we can show that \cref{alg:dynamic} provides a $(\frac{1}{2} + \Omega(\epsilon^2))$-approximation for the MAX-CUT value of $G$ with probability $1 - \delta$. For completeness, we provide the detailed proof in \cref{sec:proof-dynamic}.

\section{Conclusion}
We present the first learning-augmented algorithm for estimating the value of MAX-CUT in the streaming setting by leveraging \(\epsilon\)-accurate predictions. Specifically, we provide a single-pass streaming algorithm that achieves a \((1/2 + \Omega(\epsilon^2))\)-approximation for estimating the MAX-CUT value of a graph in insertion-only (respectively, fully dynamic) streams using \(\text{poly}(1/\epsilon)\) (respectively, \(\text{poly}(1/\epsilon, \log n)\)) words of space. This result contrasts with the lower bound in the classical streaming setting (without predictions), where any (randomized) single-pass streaming algorithm that achieves an approximation ratio of at least \(1/2 + \epsilon\) requires \(\Omega(n / 2^{\text{poly}(1/\epsilon)})\) space.

Our work leaves several questions for further research. For example, it would be interesting to extend our algorithm from unweighted to weighted graphs. Currently, our algorithm for handling the low-degree part does not seem to work for weighted graphs, as an edge with a heavy weight but incorrectly predicted signs can significantly affect our estimator. Another open question is to establish lower bounds on the trade-offs between approximation ratio and space complexity when the streaming algorithm is provided with predictions.

\section*{Acknowledgment}

The work was conducted in part while Pan Peng and Ali Vakilian were long-term visitors at the Simons Institute for the Theory of Computing as part of the Sublinear Algorithms program.
Yinhao Dong and Pan Peng are supported in part by NSFC grant 62272431. 

\bibliographystyle{alpha}
\bibliography{main}

\appendix
\section{Proof of \cref{thm:dynamic}}
\label{sec:proof-dynamic}
\begin{lemma}
\label{lem:ell-0-sampling}
Let $\eps \in (0,\frac{1}{2}]$ and $\delta \in (0,1)$. With probability at least $1-\frac{\delta}{4}$, Line~\ref{line:ell-0-sampling} of \cref{alg:dynamic} returns a set $F$ which contains $\frac{\beta}{\delta^3\eps^4}$ different edges. 
\end{lemma}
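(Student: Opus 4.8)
Write $k:=\frac{\beta}{\delta^{3}\eps^{4}}$ for the number of $\ell_0$-samplers, and let $x\in\sR^{\binom{n}{2}}$ be the (implicit) edge-count vector, so that at the end of a valid dynamic stream $x$ is the $0/1$ indicator of the edge set $E$ and $\operatorname{supp}(x)=E$ with $|\operatorname{supp}(x)|=m$. The plan is to split the bad event ``$|F|<k$'' into two pieces, a sampler-failure event and an edge-collision event, and to charge each of them $\frac{\delta}{8}$.

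First I would dispose of sampler failures. By \cref{thm:l0-sampler} each of the $k$ samplers is instantiated with failure probability $\delta'=\frac{\delta^{4}\eps^{4}}{8\beta}$, so a union bound gives that the probability some sampler returns $\mathrm{FAIL}$ is at most $k\delta'=\frac{\beta}{\delta^{3}\eps^{4}}\cdot\frac{\delta^{4}\eps^{4}}{8\beta}=\frac{\delta}{8}$. Condition on the complementary event that all $k$ samplers succeed; by \cref{def:l0-sampler} (and since the samplers use independent randomness) the $k$ outputs are then i.i.d.\ uniform elements of $\operatorname{supp}(x)=E$, i.e.\ i.i.d.\ uniformly random edges of $G$.

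Next I would bound the probability that these $k$ uniform edges are not pairwise distinct by a birthday-type argument: for any pair of samplers the probability they return the same edge is exactly $\frac1m$, so over the $\binom{k}{2}$ pairs the collision probability is at most $\binom{k}{2}\frac1m\le\frac{k^{2}}{2m}=\frac{\beta^{2}}{2\delta^{6}\eps^{8}m}$. Invoking the global standing assumption $m=\Omega(\eps^{-11}\delta^{-7})$ — which, using $\eps\le\frac12$ so that $\eps^{-11}\ge 8\,\eps^{-8}$ and taking the hidden constant large enough relative to $\beta$, yields $m\ge\frac{4\beta^{2}}{\delta^{7}\eps^{8}}$ — this collision probability is at most $\frac{\delta}{8}$. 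A final union bound over the two bad events gives failure probability at most $\frac{\delta}{8}+\frac{\delta}{8}=\frac{\delta}{4}$, which is exactly the claim.

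The only genuinely load-bearing step is the last one: distinctness of the sampled edges is \emph{not} provided by the $\ell_0$-samplers themselves (independent samplers sample with replacement), so it must be obtained from the birthday bound, and this is precisely why the paper imposes $m=\Omega(\eps^{-11}\delta^{-7})$, making $m$ polynomially larger than $k^{2}=\Theta(\delta^{-6}\eps^{-8})$. Everything else is a routine union bound, so I expect no real obstacle beyond bookkeeping the constants so that both bad events are each at most $\frac{\delta}{8}$.
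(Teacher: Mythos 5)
Your proof is correct and matches the paper's argument essentially step for step: both decompose the bad event into ``some $\ell_0$-sampler fails'' and ``two samplers collide,'' bound the first by a union bound to $k\delta'=\delta/8$, bound the second by a birthday-type union bound to $O(k^2/m)$, and then invoke the standing assumption $m=\Omega(\eps^{-11}\delta^{-7})$ to make the latter at most $\delta/8$. The only cosmetic difference is that you carry the extra factor of $\tfrac12$ from $\binom{k}{2}\le k^2/2$, and you spell out the constant-chasing behind $m=\Omega(\eps^{-11}\delta^{-7})\Rightarrow m\ge 4\beta^2/(\delta^7\eps^8)$, which the paper leaves implicit.
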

\begin{proof}
    The error is introduced by the $\ell_0$-samplers and the probability that sampling $\frac{\beta}{\delta^3\eps^4}$ uniform random edges in parallel does not yield $\frac{\beta}{\delta^3\eps^4}$ different edges.

    By \cref{thm:l0-sampler}, the $\ell_0$-sampler fails with probability at most $\delta'$. Since we maintain $\frac{\beta}{\delta^3\eps^4}$ $\ell_0$-samplers, by union bound, the failure probability introduced by $\ell_0$-samplers is at most $\frac{\beta}{\delta^3\eps^4} \cdot \delta'$.

    Suppose that all the $\ell_0$-samplers succeed, i.e., $|F| = \frac{\beta}{\delta^3\eps^4}$.
    Now we bound the probability that there exist two identical edges in $F$. Consider two fixed edges $e_1,e_2 \in F$. Then $\Pr[e_1 = e_2] = \frac{1}{m}$. By union bound, the probability that there exists at least one pair of identical edges is at most $\binom{\beta / \delta^3\eps^4}{2}\cdot \frac{1}{m} \leq \frac{\beta^2}{\delta^6\eps^{8}m}$.

    Therefore, the probability that $F$ contains $\frac{\beta}{\delta^3\eps^4}$ different edges is at least $1-\frac{\beta}{\delta^3\eps^4}\cdot \delta' - \frac{\beta^2}{\delta^6\eps^{8}m} \geq 1-\frac{\delta}{8}-\frac{\delta}{8}\geq 1-\frac{\delta}{4}$ (since $m=\Omega(\eps^{-11}\delta^{-7})$).
\end{proof}

\begin{lemma}
    \label{lem:high-deg-dynamic}
     Let $\delta \in (0,1)$.
     By the end of the stream, $\tilde{H}$ contains all high-degree vertices in $G$ with probability at least $1-\frac{\delta}{4}$.
\end{lemma}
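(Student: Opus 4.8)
The plan is to follow the template of \cref{lem:high-deg-sample,lem:high-deg-reservoir} essentially verbatim, with the one change that the $\frac{\beta}{\delta^3\eps^4}$ edges in $F$ are now produced by independent $\ell_0$-samplers (Line~\ref{line:ell-0-sampling}) rather than by reservoir sampling. Independence of the samplers actually makes the estimate cleaner: instead of invoking Chebyshev on negatively correlated indicators, I can bound the probability that a fixed high-degree vertex is missed by a product of per-sampler failure probabilities, which is exponentially small. Concretely, recall that an $\ell_0$-sampler, when it succeeds, returns a uniformly random element of the support of the edge-frequency vector, i.e.\ a uniformly random edge of $G$; this is the only structural fact about $\ell_0$-samplers I will need beyond \cref{thm:l0-sampler}.

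First I would fix a high-degree vertex $v$, so $\deg(v)\geq \frac{\eps^2 m}{c}$. For each $j\in[\frac{\beta}{\delta^3\eps^4}]$, let $X_j$ be the indicator that the $j$-th $\ell_0$-sampler both succeeds and returns an edge incident to $v$. By \cref{thm:l0-sampler} the $j$-th sampler succeeds with probability at least $1-\delta'$ (with $\delta'=\frac{\delta^4\eps^4}{8\beta}\leq \frac12$), and conditioned on success it returns an edge incident to $v$ with probability $\frac{\deg(v)}{m}\geq \frac{\eps^2}{c}$, so $\E[X_j]\geq (1-\delta')\frac{\eps^2}{c}\geq \frac{\eps^2}{2c}$. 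Since $v\in\tilde H$ as soon as some sampler returns an edge incident to $v$, and the samplers are mutually independent,
\[
\Pr[v\notin \tilde H]\;\leq\; \Pr\Bigl[\textstyle\sum_j X_j=0\Bigr]\;=\;\prod_j\bigl(1-\E[X_j]\bigr)\;\leq\;\Bigl(1-\tfrac{\eps^2}{2c}\Bigr)^{\beta/(\delta^3\eps^4)}\;\leq\;\exp\!\Bigl(-\tfrac{\beta}{2c\,\delta^3\eps^2}\Bigr).
\]

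Next I would substitute $c=\frac{80}{\delta}$, giving $\Pr[v\notin\tilde H]\leq \exp(-\frac{\beta}{160\,\delta^2\eps^2})$, and take a union bound over the at most $\frac{2c}{\eps^2}=\frac{160}{\delta\eps^2}$ high-degree vertices (using $|H|\leq \frac{2m}{\eps^2 m/c}$, exactly as in the earlier lemmas). This yields $\Pr[\tilde H\not\supseteq H]\leq \frac{160}{\delta\eps^2}\exp(-\frac{\beta}{160\,\delta^2\eps^2})$. Since the exponential decay in $\frac{1}{\delta^2\eps^2}$ dominates the polynomial prefactor, this is at most $\delta/4$ once $\beta$ is a sufficiently large universal constant; quantitatively, writing $t=\frac{1}{\delta\eps^2}\geq 1$ one has $\frac{1}{\delta^2\eps^2}\geq t$ and (since $\eps\le\frac12$) $\delta/4\geq \frac1t$, so it suffices that $160\,t^2\leq e^{\beta t/160}$ for all $t\geq 1$, which holds for $\beta$ large. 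The only point that needs care — and which I would state explicitly rather than belabor — is that this lemma does not require the sampled edges to be distinct: distinctness of $F$ is a separate event, already handled in \cref{lem:ell-0-sampling}, and is needed only to bound $|\tilde H|$ in the downstream analysis, not here.
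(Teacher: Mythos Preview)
Your proof is correct. It takes a genuinely different route from the paper's own proof, and in fact a cleaner one.

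The paper follows the template of \cref{lem:high-deg-sample,lem:high-deg-reservoir} rather literally: it indexes by the edges incident to $v$, sets $X_i=\mathbf{1}[e_i\in F]$, computes $\E[X_i]=1-(1-1/m)^{\beta/(\delta^3\eps^4)}$, argues the $X_i$ are negatively correlated, and applies Chebyshev. This yields only a polynomial bound $\Pr[v\notin\tilde H]=O(\delta^3\eps^2 c/\beta)$, after which the union bound over $|H|\le 2c/\eps^2$ high-degree vertices just barely gives $\delta/4$. You instead index by the samplers, exploit their mutual independence directly (which the paper's edge-indexed formulation obscures), and obtain an exponential bound $\Pr[v\notin\tilde H]\le\exp(-\beta/(160\delta^2\eps^2))$. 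This makes the union bound trivial and avoids both the somewhat fiddly lower bound on $1-(1-1/m)^k$ and the negative-correlation step. Your closing remark that distinctness of the sampled edges is irrelevant here (and is handled separately in \cref{lem:ell-0-sampling}) is also a useful clarification that the paper does not make explicit.
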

\begin{proof}
    Suppose that $v$ is a high-degree vertex in $G$, i.e., $\deg(v) \geq \frac{\eps^2 m}{c}$. 
    For each edge $e_i$ incident to $v$, define an indicator random variable 
    \begin{equation*}
            X_{i}= \begin{cases}1, & \text { if } e_i \in F \\ 0, & \text { otherwise }\end{cases}
    \end{equation*}
    So, $\E[X_i] = \Pr[X_i=1] = 1-(1-\frac{1}{m})^{\beta  / \delta^3\eps^4} \geq 1-\exp{(-\frac{\beta}{\delta^3\eps^4 m})}\geq \frac{\beta}{\delta^3\eps^4 m} - \frac{\beta^2}{2\delta^6\eps^{8}m^2} \geq \frac{\beta}{5\delta^3\eps^4 m}$ for sufficiently large $m$. We define $X:= \sum_{i\in [\deg(v)]} X_i$ to denote the number of edges incident to $v$ that are sampled by the end of the stream. Then, $\E[X] = \sum_{i\in [\deg(v)]} \E[X_i] \geq \frac{\beta \deg(v)}{5\delta^3\eps^4 m}$. 

    \medskip
    Next, we bound $\Var[X]$. For any $i\in [\deg(v)]$, $\Var[X_i] = \E[X^2_i] - (\E[X_i])^2 = \E[X_i] - (\E[X_i])^2 \leq \frac{\beta}{5\delta^3\eps^4 m} - (\frac{\beta}{5\delta^3\eps^4 m})^2 = \frac{\beta}{5\delta^3\eps^4 m}-\frac{\beta^2}{25\delta^6\eps^{8} m^2}$.
    Since $(X_i)_{i\in [\deg(v)]}$ are negatively correlated, $\Var[X] \le \sum_{i\in [\deg(v)]}\Var[X_i] = \deg(v) \cdot \left( \frac{\beta}{5\delta^3\eps^4 m}-\frac{\beta^2}{25\delta^6\eps^{8} m^2}\right) \leq \frac{\beta\deg(v)}{5\delta^3\eps^4 m}$.
    Therefore, by Chebyshev's inequality, 
    \begin{align*}
        \Pr[v\notin \tilde{H}] = \Pr[X\leq 0] 
        &\leq \Pr\left[X \leq \E[X]- \frac{\beta \deg(v)}{5\delta^3\eps^4 m}\right]\\
        &\leq \frac{\Var[X]\cdot 25\delta^6\eps^{8}m^2}{\beta^2 \deg^2(v)}\\
        &\leq \frac{5\delta^3\eps^4 m}{\beta\cdot \deg(v)} &&\rhd \Var[X] \leq \frac{\beta\deg(v)}{5\delta^3\eps^4 m}\\
        &\leq \frac{5\delta^3\eps^2 c}{\beta}. &&\rhd \deg(v)\geq \frac{\eps^2 m}{c}
    \end{align*}
    By our definition of high-degree vertices, there are at most $\frac{2m}{\eps^2 m / c} =\frac{2c}{\eps^2}$ high-degree vertices in $G$.
    By union bound, with probability at least $1-\frac{2c}{\eps^2}\cdot\frac{5\delta^3\eps^2 c}{\beta} = 1-\frac{10c^2\delta^3}{\beta} \geq 1-\frac{\delta}{4}$ (since $c=\frac{80}{\delta}$ and $\beta$ sufficiently large), $\tilde{H}$ contains all high-degree vertices in $G$.
\end{proof}

\begin{lemma}
\label{lem:error-bound-CM}
    Let $\eps \in (0,\frac{1}{2}]$ and $\delta \in (0,1)$.
    For all vertices $v\in \tilde{H}$, (1) $f_v^+ \geq e(v,V^+)$ and $f_v^- \geq e(v,V^-)$. (2) with probability at least $1-\frac{\delta}{4}$, $f_v^+ \leq e(v,V^+) + 2\eps^{7} \delta^3 m$ and $f_v^- \leq e(v,V^-) + 2\eps^{7} \delta^3 m$.
\end{lemma}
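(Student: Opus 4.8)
The plan is to mirror the proof of \cref{lem:error-bound-arbitrary} almost verbatim; the only genuinely new point is that the CountMin guarantees of \cref{thm:CM} survive the passage to the turnstile model. First I would observe that every update to $\CM[V^+]$ or $\CM[V^-]$ merely adds $\Delta_{(u,v)}\in\{+1,-1\}$ to a single cell, so both sketches are \emph{linear} in the update sequence and their final contents depend only on the net multiplicity of each edge. Because a dynamic graph stream never deletes an edge that is not currently present, at every prefix the multiset of present edges has non-negative multiplicities; consequently each cell $\CM[V^+][\ell,h_\ell(v)]$ always equals $\sum_{u:\,h_\ell(u)=h_\ell(v)} e(u,V^+)$ (with $e(\cdot,\cdot)$ taken in the current graph), a sum of non-negative terms, and similarly for $\CM[V^-]$. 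Thus we are in the \emph{strict-turnstile} regime, where (exactly as in the insertion-only case) $2$-wise independence of $h_1,\dots,h_k$ gives $\Pr[h_\ell(u)=h_\ell(v)]\le 1/w$, and the guarantees of \cref{thm:CM} hold for the final sketch.

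Part~(1) is then immediate: for every row $\ell$ and every $v\in\tilde H$ we have $\CM[V^+][\ell,h_\ell(v)] = e(v,V^+) + \sum_{u\ne v:\,h_\ell(u)=h_\ell(v)} e(u,V^+) \ge e(v,V^+)$, so $f_v^+ = \min_{\ell\in[k]}\CM[V^+][\ell,h_\ell(v)] \ge e(v,V^+)$, and symmetrically $f_v^- \ge e(v,V^-)$.

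For Part~(2), I would first bound the total $\ell_1$-mass fed to each sketch: every edge $(a,b)$ contributes to $\CM[V^+]$ at most twice in total (once at coordinate $b$ when $a\in V^+$, once at coordinate $a$ when $b\in V^+$), so $\sum_{u} e(u,V^+) \le 2m$ and likewise $\sum_u e(u,V^-)\le 2m$. Applying \cref{thm:CM} with $k=\lceil e/(\eps^{7}\delta^{3})\rceil$ (accuracy parameter $\eps^{7}\delta^{3}$) and $w=\lceil\ln(8\beta/(\eps^{4}\delta^{4}))\rceil$ (failure parameter $\eps^{4}\delta^{4}/(8\beta)$), for each fixed $v$ we obtain $f_v^+\le e(v,V^+)+\eps^{7}\delta^{3}\cdot 2m = e(v,V^+)+2\eps^{7}\delta^{3}m$ with probability at least $1-\eps^{4}\delta^{4}/(8\beta)$, and the same bound for $f_v^-$. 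Since $|\tilde H|\le 2\beta/(\delta^{3}\eps^{4})$ (because $|F|\le\beta/(\delta^{3}\eps^{4})$ and each sampled edge contributes two vertices), a union bound over $v\in\tilde H$ yields overall failure probability at most $(2\beta/(\delta^{3}\eps^{4}))\cdot(\eps^{4}\delta^{4}/(8\beta)) = \delta/4$, which is exactly the claimed bound.

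The only step that requires actual care, as opposed to a copy of the insertion-only argument, is the reduction to the strict-turnstile model in the first paragraph — i.e.\ checking that the linear CountMin sketch fed signed updates retains both the one-sided bound and the $\|f\|_1/w$ expected-error bound underlying \cref{thm:CM}. Once that is verified, the rest is the same bookkeeping (mass bound, per-vertex Markov/min argument, union bound over $\tilde H$) as in \cref{lem:error-bound-arbitrary}.
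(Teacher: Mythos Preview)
Your proposal is correct and follows essentially the same approach as the paper: invoke \cref{thm:CM} for the one-sided lower bound, apply it with accuracy $\eps^7\delta^3$ on total mass $\le 2m$ to get the per-vertex upper bound with failure probability $\eps^4\delta^4/(8\beta)$, and union-bound over $|\tilde H|\le 2\beta/(\delta^3\eps^4)$. In fact you are more careful than the paper, which simply cites \cref{thm:CM} without justifying its applicability to the strict-turnstile regime or spelling out the $\ell_1$-mass bound; your first paragraph supplies exactly the missing ingredient.
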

\begin{proof}
    (1) By \cref{thm:CM}, we have $f_v^+ \geq e(v,V^+)$ and $f_v^- \geq e(v,V^-)$ for all $v \in V$.

    (2) By \cref{thm:CM}, we have $f_v^+ \leq e(v,V^+) + 2\eps^{7} \delta^3 m$ and $f_v^- \leq e(v,V^-) + 2\eps^{7}\delta^3 m$ with probability at least $1-\frac{\eps^4\delta^4}{8\beta}$ for any $v \in V$. Since $|\tilde{H}|\leq \frac{2\beta}{\delta^3\eps^4}$, by union bound, with probability at least $1-\frac{\eps^4\delta^4}{8\beta} \cdot \frac{2\beta}{\delta^3\eps^4} = 1-\frac{\delta}{4}$, we have $f_v^+ \leq e(v,V^+) + 2\eps^{7}\delta^3 m$ and $f_v^- \leq e(v,V^-) + 2\eps^{7}\delta^3 m$ for all $v\in \tilde{H}$.
\end{proof}

\begin{proof}[Proof of \cref{thm:dynamic}]
    In \cref{alg:dynamic}, we use CountMin sketch with $k=\lceil \frac{e}{\eps^{7} \delta^3}\rceil$ and $w = \lceil \ln \frac{8\beta}{\eps^4\delta^4}\rceil$ to estimate the number of incident edges of high-degree vertices to with respect to $V^+$ and $V^-$, which takes $O(\frac{1}{\eps^{7}\delta^3}\ln\frac{1}{\eps^4 \delta^4}) = O(\frac{1}{\eps^{7}\delta^3}\ln\frac{1}{\eps \delta})$ words of space, by \cref{thm:CM}. 
    Also, we maintain $O(\frac{1}{\delta^3\eps^4})$ $\ell_0$-samplers, which takes $O(\frac{\log^2 n}{\delta^3 \eps^4}  \log \frac{1}{\delta^4 \eps^4}) = O(\frac{\log^2 n}{\delta^3 \eps^4}  \log \frac{1}{\delta \eps})$ bits of space, by \cref{thm:l0-sampler}.
    Therefore, the total space complexity of \cref{alg:dynamic} is $O( \frac{\log^2 n}{\eps^7\delta^3 }  \log \frac{1}{\eps\delta} )$ bits. 
    
    Next, we show the approximation guarantee of \cref{alg:dynamic}.
    Similar to arbitrary order streams, we cannot detect $H$ by the end of stream. Instead, we have $\tilde{H}$.
    Let $A_1 := e(\tilde{L}^+,\tilde{L}^-) + \sum_{v\in \tilde{H}} \max \{e(v,\tilde{L}^-),e(v,\tilde{L}^+) \}$ denote the value of the cut $(\tilde{C},V\setminus \tilde{C})$ obtained by running \cref{alg:low-deg} on $G[\tilde{L}]$ and then assigning the vertices in $\tilde{H}$ in a greedy manner.
    Let $A_2 := \sum_{v\in \tilde{H}} (e(v,\tilde{L}^-) + e(v,\tilde{L}^+))$ denote the value of the cut $(\tilde{H},\tilde{L})$. Suppose that the best of $(\tilde{C},V\setminus \tilde{C})$ and $(\tilde{H},\tilde{L})$ cuts is a $(\frac{1}{2}+\frac{\eps^2}{16})$-approximation for the MAX-CUT value of $G$.
    In the following, we show that $\mathrm{ALG}_1$ and $\mathrm{ALG}_2$ are good approximations of $A_1$ and $A_2$, respectively. Therefore, $\max\{\mathrm{ALG}_1,\mathrm{ALG}_2\}$ returned by \cref{alg:dynamic} is also a strictly better than $\frac{1}{2}$-approximation for the MAX-CUT value of $G$.

    Since $V^+ = \tilde{H}^+ \cup \tilde{L}^+$ and $V^- = \tilde{H}^- \cup \tilde{L}^-$, we have $e(v,V^+) = e(v,\tilde{H}^+) + e(v,\tilde{L}^+)$ and $e(v,V^-) = e(v,\tilde{H}^-) + e(v,\tilde{L}^-)$ for any vertex $v\in V$. By \cref{lem:error-bound-CM}, with probability at least $1-\frac{\delta}{4}$, for all vertices $v\in \tilde{H}$, we have 
    \begin{align*}
        e(v,\tilde{H}^+) + e(v,\tilde{L}^+) \leq f_v^+ \leq e(v,\tilde{H}^+) + e(v,\tilde{L}^+) + 2\eps^{7}\delta^3 m,\\
        e(v,\tilde{H}^-) + e(v,\tilde{L}^-) \leq f_v^- \leq e(v,\tilde{H}^-) + e(v,\tilde{L}^-) + 2\eps^{7}\delta^3m.
    \end{align*}
    
    Since $|\tilde{H}|\leq \frac{2\beta}{\delta^3\eps^4}$,
    we have $0 \leq e(v,\tilde{H}^+),e(v,\tilde{H}^-) \leq \frac{2\beta}{\delta^3\eps^{4}}$. Therefore, we have
    \begin{align*}
        e(v,\tilde{L}^+) \leq f_v^+ \leq \frac{2\beta}{\delta^3\eps^{4}} +2\eps^{7}\delta^3 m+ e(v,\tilde{L}^+),\\
        e(v,\tilde{L}^-) \leq f_v^- \leq \frac{2\beta}{\delta^3\eps^{4}} +2\eps^{7}\delta^3 m + e(v,\tilde{L}^-).
    \end{align*}
    Then we have
    \begin{align*}
        0 \leq f_v^+ - e(v,\tilde{L}^+)\leq \frac{2\beta}{\delta^3\eps^{4}} +2\eps^{7}\delta^3 m && \text{and} && 0 \leq f_v^- - e(v,\tilde{L}^-)\leq \frac{2\beta}{\delta^3\eps^{4}} +2\eps^{7}\delta^3 m,
    \end{align*}
    and
    \begin{align*}
        \max\{e(v,\tilde{L}^+),e(v,\tilde{L}^-) \}  \leq \max\{ f_v^+,f_v^-\} \leq \frac{2\beta}{\delta^3\eps^{4}} + 2\eps^7 
 \delta^3 m + \max\{e(v,\tilde{L}^+),e(v,\tilde{L}^-) \}.
    \end{align*}

    Note that 
    \begin{align*}
        e(\tilde{L}^+,\tilde{L}^-) &= e(V^+,V^-) - e(\tilde{H}^+,\tilde{H}^-) - \sum_{v\in \tilde{H}^+}e(v,\tilde{L}^-) - \sum_{v\in \tilde{H}^-}e(v,\tilde{L}^+),\\
        \tilde{e}(\tilde{L}^+,\tilde{L}^-) &= e(V^+,V^-)  - \sum_{v\in \tilde{H}^+} f_v^- - \sum_{v\in \tilde{H}^-} f_v^+.
    \end{align*}
    We have 
    \begin{align*}
        e(\tilde{L}^+,\tilde{L}^-) - \tilde{e}(\tilde{L}^+,\tilde{L}^-) &= \sum_{v\in \tilde{H}^+} (f_v^- - e(v,\tilde{L}^-)) + \sum_{v\in \tilde{H}^-} (f_v^+ - e(v,\tilde{L}^+)) - e(\tilde{H}^+,\tilde{H}^-)\\
        &\leq \frac{2\beta}{\delta^3\eps^{4}}\cdot \left(\frac{2\beta}{\delta^3\eps^{4}} + 2\eps^7 \delta^3 m\right) + \frac{2\beta}{\delta^3\eps^{4}}\cdot \left(\frac{2\beta}{\delta^3\eps^{4}} + 2\eps^7  \delta^3 m\right) - 0
        = \frac{8\beta^2}{\delta^6\eps^8} + 8\beta \eps^3 m.
    \end{align*}

    Therefore,
    \begin{align*}
        A_1 - \mathrm{ALG}_1 &= (e(\tilde{L}^+,\tilde{L}^-) - \tilde{e}(\tilde{L}^+,\tilde{L}^-)) + \sum_{v\in\tilde{H}} (\max\{e(v,\tilde{L}^-), e(v,\tilde{L}^+)\} - \max\{f_v^-,f_v^+\})\\
        &\leq \left(\frac{8\beta^2}{\delta^6\eps^8} + 8\beta \eps^3 m \right)+ \frac{2\beta}{\delta^3\eps^4} \cdot 0
        = \frac{8\beta^2}{\delta^6\eps^8} + 8\beta \eps^3 m.
    \end{align*}
    So, $\mathrm{ALG}_1 \geq A_1 - (\frac{8\beta^2}{\delta^6\eps^8} + 8\beta \eps^3 m) = A_1 - \Theta(\eps^3 m)$. 

    Similarly,
    \begin{align*}
        A_2 - \mathrm{ALG}_2 = \sum_{v\in \tilde{H}} ((  e(v,\tilde{L}^-) - f_v^-) + (  e(v,\tilde{L}^+) - f_v^+))
        \leq \frac{2\beta}{\delta^3 \eps^4} \cdot (0+0) = 0.
    \end{align*}
    So, $\mathrm{ALG}_2 \geq A_2$. 
    
    Since we assume that $\max\{A_1,A_2\}$ is a $(\frac{1}{2}+\frac{\eps^2}{16})$-approximation for the MAX-CUT value of $G$, we have $\mathrm{ALG}:= \max\{\mathrm{ALG}_1, \mathrm{ALG}_2\} \geq \max\{A_1,A_2\}- \Theta(\eps^3 m)  \geq (\frac{1}{2}+\frac{\eps^2}{16})\cdot \opt - \Theta(\eps^3 \cdot \opt) = (\frac{1}{2}+\Omega(\eps^2))\cdot \opt$.

    Finally, it remains to show that the best of $(\tilde{C},V\setminus \tilde{C})$ and $(\tilde{H},\tilde{L})$ cuts is a $(\frac{1}{2}+\frac{\eps^2}{16})$-approximation for the MAX-CUT value of $G$. This directly follows from \cref{thm:best-approx} (with failure probability $\frac{\delta}{4}$), by substituting $H$ and $L$ with ${\tilde{H}}$ and ${\tilde{L}}$, respectively.
    Together with \cref{lem:ell-0-sampling}, \cref{lem:high-deg-dynamic}, \cref{lem:error-bound-CM}, and applying union bound, this concludes the proof.
\end{proof}

\section{Constant Query Complexity in Random Order Streams}\label{sec:constantquery}

In this section, we show that in the case of random-order streams, it suffices to query the $\eps$-accurate oracle $\mathcal{O}$ for the labels of a constant number of vertices. For simplicity, we provide only a sketch of the main ideas and omit the formal proof. We will utilize the following lemma.
\begin{lemma}[Lemma 2.2 in \cite{bhattacharyya22property}]
\label{lem:sum}
	Let $a,b\in\mathbb{R}$ with $a<b$, $n\in\mathbb{N}$ and $x\in[a,b]^n$.
	Let $\Sigma = \sum_{i=1}^n x_i$.
	For any $\eta,\delta\in(0,1)$, there exists an algorithm which
	samples a set $T$ of $t=O(\eta^{-2} \log \delta^{-1})$ indices from $[n]$ and returns an
	estimate $\tilde{\Sigma}=\frac{n}{t}\sum_{i\in T}x_i$ such that $|\Sigma - \tilde{\Sigma}| \leq \eta(b-a) n$ with
	probability at least $1-\delta$.
\end{lemma}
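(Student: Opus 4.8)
The plan is to prove this by uniform random sampling together with Hoeffding's inequality. The algorithm draws $t$ indices $i_1,\dots,i_t$ independently and uniformly at random from $[n]$ (with replacement) and outputs $\tilde{\Sigma}=\frac{n}{t}\sum_{j=1}^{t}x_{i_j}$; sampling a size-$t$ subset without replacement works equally well and only tightens the bound. First I would check unbiasedness: since each $i_j$ is uniform on $[n]$, $\E[x_{i_j}]=\frac1n\sum_{i=1}^{n}x_i=\frac{\Sigma}{n}$, and so $\E[\tilde{\Sigma}]=\frac{n}{t}\sum_{j=1}^{t}\E[x_{i_j}]=\Sigma$.

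The main step is the concentration bound. Writing $\bar{x}=\frac1t\sum_{j=1}^{t}x_{i_j}$, the summands are i.i.d.\ and each lies in $[a,b]$, so Hoeffding's inequality for the average of bounded i.i.d.\ variables gives
\[
\Pr\!\left[\,\bigl|\bar{x}-\tfrac{\Sigma}{n}\bigr|>\eta(b-a)\,\right]\;\le\;2\exp\!\left(-\frac{2t^{2}\eta^{2}(b-a)^{2}}{t(b-a)^{2}}\right)\;=\;2\exp\!\left(-2t\eta^{2}\right).
\]
Because $\tilde{\Sigma}-\Sigma=n\bigl(\bar{x}-\tfrac{\Sigma}{n}\bigr)$, the event $|\tilde{\Sigma}-\Sigma|>\eta(b-a)n$ is exactly the event $|\bar{x}-\tfrac{\Sigma}{n}|>\eta(b-a)$. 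Taking $t=\bigl\lceil\tfrac{1}{2\eta^{2}}\ln\tfrac{2}{\delta}\bigr\rceil=O(\eta^{-2}\log\delta^{-1})$ makes the right-hand side at most $\delta$, which is the claimed guarantee.

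I do not expect any real obstacle: the argument is standard. The one thing to keep straight is the scaling — the $n/t$ normalization in $\tilde{\Sigma}$ means the additive error is measured at scale $(b-a)n$ rather than $(b-a)$, which is precisely why the factor $n$ appears on the right-hand side of the guarantee but not inside the exponent, and why no dependence on $n$ enters the sample size $t$. If the sampling is instead done without replacement, one invokes the Hoeffding--Serfling inequality, which yields the same conclusion with at least as good constants.
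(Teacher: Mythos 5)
Your proof is correct and is the standard argument; the paper itself does not prove this lemma but imports it verbatim as Lemma~2.2 of the cited reference, where the same Hoeffding-based sampling argument is what one would expect. The unbiasedness computation, the application of Hoeffding to the i.i.d.\ bounded summands, the rescaling by $n$ to convert the deviation threshold from $\eta(b-a)$ to $\eta(b-a)n$, and the choice $t=\lceil\tfrac{1}{2\eta^2}\ln\tfrac{2}{\delta}\rceil$ are all exactly right.
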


Similar to \cref{alg:rand-order}, we store the first $\poly(1/\eps,1/\delta)$ edges in the random order stream and use them to identify a set $\tilde{H}$ of size $\poly(1/\eps,1/\delta)$ that contains all high-degree vertices $H$ with high probability as in \cref{lem:high-deg-sample}. For this part, we do not need any information on the labels of vertices provided by the oracle $\mathcal{O}$.
Recall that the algorithm considers two candidate cuts and returns the one with the larger size. Let $\tilde{L}:= V\setminus \tilde{H}$ and $\tilde{S}:= \tilde{H}\setminus H$.
The first candidate is obtained by performing the greedy extension of $(\tilde{L}^+ \cup \tilde{S}^+, \tilde{L}^- \cup \tilde{S}^-)$ using $H$. The second candidate is simply the cut $(H,L)$. Formally, the sizes of these two cuts are given as follows:
\begin{align*}
    \mathrm{ALG}_1 &= e(L^+,L^-) + \sum_{v\in H} \max \{e(v,L^-),e(v,L^+)\}\\
    &= e(\tilde{L}^+,\tilde{L}^-) + e(\tilde{S}^+,\tilde{S}^-) + \sum_{v\in \tilde{S}^+} e(v,\tilde{L}^-)+\sum_{v\in \tilde{S}^-} e(v,\tilde{L}^+) +  \sum_{v\in H} \max \{e(v,L^-),e(v,L^+)\},\\
    \mathrm{ALG}_2 &= e(H,L) = \sum_{v\in H} e(v,L) = \sum_{v\in H} (e(v,\tilde{L}) + e(v,\tilde{S})).
\end{align*}

Observe that, to compute the second cut size $\mathrm{ALG}_2$, there is no need to query the oracle $\mathcal{O}$. It suffices to count $e(v,\tilde{L})$ for each vertex $v \in \tilde{H}$ using the remaining edges (after the first $\poly(1/\eps,1/\delta)$ edges) during the stream. At the end of the stream, we can retrieve $H$ from $\tilde{H}$ and compute $\mathrm{ALG}_2$ exactly.

Next, we focus on estimating $\mathrm{ALG}_1$ by querying the oracle $\mathcal{O}$ a constant number of times. Specifically, we decompose $\mathrm{ALG}_1$ into four parts and estimate each part respectively.

\begin{enumerate}[label=(\textsf{\Roman*}),leftmargin = *]
    \item\label{case:edges-tilde-L} $e(\tilde{L}^+,\tilde{L}^-)$. 
    During the stream (after the first $\poly(1/\eps,1/\delta)$ edges), we employ reservoir sampling to sample $O(\eps^{-4}\log \delta^{-1})$ edges $T$ uniformly at random from $E(G[\tilde{L}])$, the set of edges with both endpoints in $\tilde{L}$. Let $U:= \bigcup_{(u,v)\in T}\{u,v\}$. By \cref{lem:sum}, we query $\mathcal{O}$ for the vertices in $U$ and use $\frac{e(G[\tilde{L}])}{|T|}\cdot e(U^+,U^-)$ to estimate $e(\tilde{L}^+,\tilde{L}^-)$, with an additive error of $\frac{\eps^2}{64} \cdot e(G[\tilde{L}])$.
    
    \item\label{case:edges-tilde-S} $e(\tilde{S}^+,\tilde{S}^-)$. Since we store all  edges with both endpoints in $\tilde{H}$ during the stream and $|\tilde{H}| = \poly(1/\eps,1/\delta)$, we can compute $e(\tilde{S}^+,\tilde{S}^-)$ exactly by querying $\mathcal{O}$ a constant number of times.

    \item\label{case:edges-cross} $\sum_{v\in \tilde{S}^+} e(v,\tilde{L}^-)+\sum_{v\in \tilde{S}^-} e(v,\tilde{L}^+)$.
    For each vertex $v \in \tilde{H}$, we use reservoir sampling to sample $O(\eps^{-4}\log \delta^{-1})$ edges $T_v$ uniformly at random from $E(v,\tilde{L})$, the set of edges with one endpoint at $v$ and the other in $\tilde{L}$. 
    Let $U_v := \bigcup_{(u,v)\in T_v} \{u\}$. By \cref{lem:sum}, we query $\mathcal{O}$ for the vertices in $U_v$ and use $\frac{e(v,\tilde{L})}{|T_v|}\cdot e(v,U_v^+)$ to estimate $e(v,\tilde{L}^+)$ 
    and use $\frac{e(v,\tilde{L})}{|T_v|}\cdot e(v,U_v^-)$ to estimate $e(v,\tilde{L}^-)$, both with an additive error of $\frac{\eps^2}{64} \cdot e(v,\tilde{L})$.
    Therefore, we can estimate $\sum_{v\in \tilde{S}^+} e(v,\tilde{L}^-)+\sum_{v\in \tilde{S}^-} e(v,\tilde{L}^+)$ with an additive error of $\frac{\eps^2}{64}\cdot \sum_{v\in \tilde{S}} e(v,\tilde{L}) = \frac{\eps^2}{64} \cdot e(\tilde{S},\tilde{L})$.
    
    \item\label{case:greedy} $\sum_{v\in H} \max \{e(v,L^-),e(v,L^+)\}$.
    Since $L^- = \tilde{S}^- \cup \tilde{L}^-$ and $L^+ = \tilde{S}^+ \cup \tilde{L}^+$, we have $e(v,L^-) = e(v,\tilde{S}^-)+e(v,\tilde{L}^-)$ and $e(v,L^+) = e(v,\tilde{S}^+)+e(v,\tilde{L}^+)$, for each vertex $v\in H$. Similar to \ref{case:edges-tilde-S} and \ref{case:edges-cross}, we can estimate $\sum_{v\in H} \max \{e(v,L^-),e(v,L^+)\}$ with an additive error of $\frac{\eps^2}{64} \cdot \sum_{v\in H}e(v,\tilde{L}) = \frac{\eps^2}{64} \cdot e(H,\tilde{L})$.
\end{enumerate}

Let $\widetilde{\mathrm{ALG}}_1$ denote our estimator for $\mathrm{ALG}_1$. Then we have $|\widetilde{\mathrm{ALG}}_1 - \mathrm{ALG}_1| \leq \frac{\eps^2}{64} \cdot (e(G[\tilde{L}]) + e(\tilde{S},\tilde{L}) + e(H,\tilde{L})) \le \frac{\eps^2}{64} \cdot (e(G[L]) + e(H,L))\le \frac{\eps^2}{64} m$.

If $\widetilde{\mathrm{ALG}}_2:=\mathrm{ALG}_2 = e(H,L) \ge (\frac{1}{2}+\eps^2)\cdot \mathrm{OPT}$, then we are done. Hence, without loss of generality, we assume that $\mathrm{ALG}_2 = e(H,L) < (\frac{1}{2}+\eps^2)\cdot \mathrm{OPT}$. By the proof of \cref{thm:best-approx}, we have $\mathrm{ALG}_1 \ge (\frac{1}{2} + \frac{\eps^2}{16})\cdot \mathrm{OPT}$. 
Therefore, $|\widetilde{\mathrm{ALG}}_1 - \mathrm{ALG}_1| \le \frac{\eps^2}{64} m\leq \frac{\eps^2}{32} \mathrm{OPT}$.
Thus, $\widetilde{\mathrm{ALG}}_1\geq  (\frac{1}{2} + \frac{\eps^2}{16})\mathrm{OPT}-\frac{\eps^2}{32}\mathrm{OPT}=(\frac{1}{2} + \Omega{(\eps^2)})\mathrm{OPT}$. Therefore, $\max\{\widetilde{\mathrm{ALG}}_1,\widetilde{\mathrm{ALG}}_2\}\geq (\frac{1}{2}+\Omega(\eps^2))\mathrm{OPT}$.

\paragraph{Query Complexity.} In the above analysis,  
the total query complexity for the $\eps$-accurate oracle $\mathcal{O}$ is $\poly(1/\eps,1/\delta)$, which is independent of $n$.

\end{document}